\algrenewcommand{\algorithmicrequire}{\textbf{Input:}}
\algrenewcommand{\algorithmicensure}{\textbf{Output:}}
\newtheorem{remark}{Remark}[section]
\newtheorem{theorem}{Theorem}[section]
\newtheorem{problem}{Problem}[section]
\newtheorem{proposition}{Proposition}[section]
\newtheorem{lemma}{Lemma}[section]
\newtheorem{definition}{Definition}[section]
\newcounter{subproblem}[problem] 
\newcommand{\mycomment}[1]{}
\renewcommand{\t}[0]{^{\intercal}}
\newcommand{\inv}[0]{^{-1}}
\begin{document}

\title{SDP Synthesis of Distributionally Robust Backward Reachable Trees for Probabilistic Planning}

\author{Naman Aggarwal
 and Jonathan P. How, \IEEEmembership{IEEE Fellow} 
\thanks{This work was supported by Lockheed Martin Corporation. The authors are with 
Laboratory for Information and Decision Systems, Massachusetts Institute of Technology, Cambridge MA 02139. Email: \texttt{\{namanagg, jhow\}}@mit.edu.
}}



%

\maketitle

\begin{abstract}
    The paper presents Maximal Ellipsoid Backward Reachable Trees MAXELLIPSOID BRT, which is a multi-query algorithm for planning of dynamic systems under stochastic motion uncertainty and constraints on the control input. In contrast to existing probabilistic planning methods that grow a roadmap of distributions, our proposed method introduces a framework to construct a roadmap of ambiguity sets of distributions such that each edge in our proposed roadmap provides a feasible control sequence for a family of distributions at once leading to efficient multi-query planning. Specifically, we construct a backward reachable tree of maximal size ambiguity sets and the corresponding distributionally robust edge controllers. Experiments show that the computation of these sets of distributions, in a backwards fashion from the goal, leads to efficient planning at a fraction of the size of the roadmap required for state-of-the-art methods. The computation of these maximal ambiguity sets and edges is carried out via a convex semidefinite relaxation to a novel nonlinear program. We also formally prove a theorem on maximum coverage for a technique proposed in our prior work \cite{aggarwal2024sdp_CDC}.
\end{abstract}

\IEEEpeerreviewmaketitle
\section{Introduction}
Multi-query motion planning entails the design of plans that can be reused across different initial configurations of the system. This is typically done via the offline construction of a roadmap of feasible trajectories in the state space, such that in real-time, for a pair of initial and goal configurations of the system, planning proceeds by connecting the initial and goal configurations to the roadmap followed by graph search to find a path \cite{prm1996,eirm}. The pre-computation of a roadmap is beneficial since it avoids the computational burden of finding plans from scratch for new configurations, and reuses search effort across queries. An important design consideration for any roadmap-based planning algorithm is the coverage of the roadmap -- it is desirable to be able to re-use search effort across as many queries as possible, therefore, reasoning explicitly about the coverage of the roadmap becomes an important algorithmic design consideration.

There is extensive work on motion planning via the construction of reusable roadmaps for deterministic systems \cite{tedrake2010lqr,majumdar2017funnel,pipx}. The work proposed in \cite{tedrake2010lqr}, for instance, builds a tree of \textit{funnels}, which are regions of finite-time invariance around trajectories with associated feedback controllers, backwards from the goal in a space-filling manner. Paths to the goal could be found from different regions of the state space by traversing the branches of the constructed tree. 

\begin{figure}[t]
\vspace*{-0.15in}
    \centering
    \includegraphics[width=0.5\textwidth]{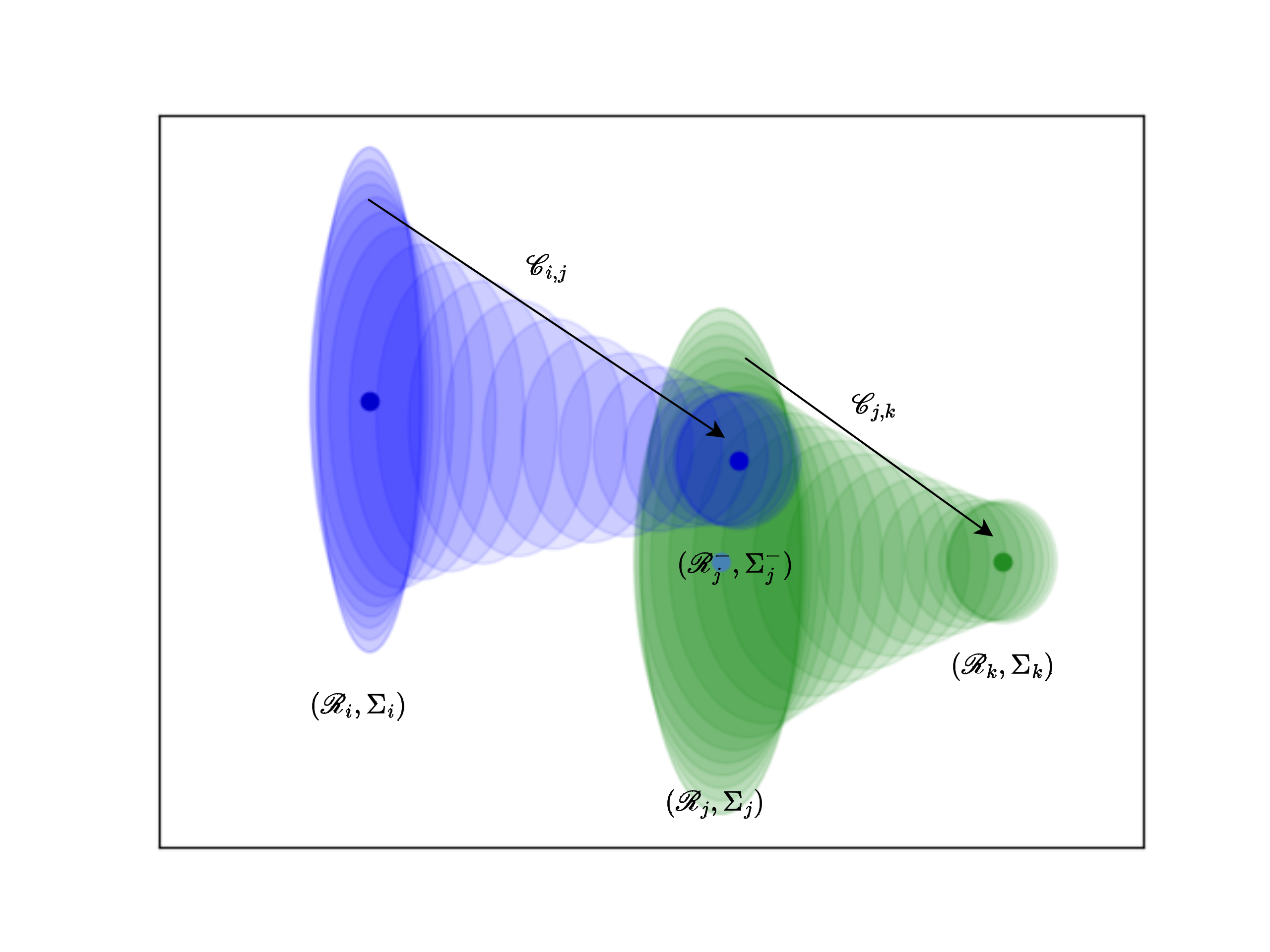}
\vspace*{-0.5in}
    \caption{Recursive feasibility through sequential composition (refer to Lemma~\ref{lemma:sequential_composition} for a formalization). An illustration of the satisfaction of constraints (goal reaching constraint, and chance constraints on the state and control input) for a $2N$-step trajectory initialized at the $(\mu_{i}, \Sigma_{i})$ distribution driven through a sequential composition of two $N$-step controllers, $ \mathscr{C}_{i,j}$ designed for the $(\mathscr{R}_{i}, \Sigma_{i}) \rightarrow (\mathscr{R}^{-}_{j}, \Sigma^{-}_{j})$ maneuver, and $\mathscr{C}_{j,k}$ designed for the $(\mathscr{R}_{j}, \Sigma_{j}) \rightarrow (\mathscr{R}_{k}, \Sigma_{k})$ maneuver where $ \mathscr{R}^{-}_{j} \subset \mathscr{R}_{j} $ and $ \Sigma^{-}_{j} \prec \Sigma_{j} $. The distribution $(\mu_{i}, \Sigma_{i})$ is steered to the goal $(\mathscr{R}_{k}, \Sigma_{k})$ in $2N$-steps under the concatenated controller such that all chance constraints on the state and control input are satisfied for the $2N$-step trajectory.}
    \label{fig:intro_picture}
\end{figure}
In this paper, we are concerned with multi-query planning for systems with stochastic dynamics via roadmap based methods under constraints on the control input. FIRM \cite{agha2014firm} is one multi-query planning framework that builds a roadmap in the belief space by sampling stationary belief nodes and drawing belief stabilizing controllers as edges between the nodes. The closest in approach to our work is the recently proposed CS-BRM \cite{csbrm-tro} that provides a faster planning scheme compared to FIRM by allowing sampling of non-stationary belief nodes and adding finite-time feedback controllers as edges. CS-BRM extends recent work on covariance steering that allows finite-time satisfaction of terminal covariance constraints \cite{okamoto2018optimal,liu2022optimal,rapakoulias2023discrete}, but it does not impose control input constraints and proceeds by adding edges between randomly sampled belief nodes. In the absence of control constraints, the distribution steering problem decouples into the mean steering and covariance steering problems: both of which are tractable to solve \cite{rapakoulias2023discrete} such that there always exists a feasible control sequence for a steering maneuver defined by an arbitrary pair of belief nodes. 
Hence, the question of coverage is trivial in the absence of control constraints since the roadmap could be reached from arbitrary distributions or, alternatively, arbitrary distributions could be steered to the roadmap. 

In the presence of control input constraints however, the reachability of belief nodes has to be explicitly established. Even for the simplest case of a controllable linear-Gaussian system, the existence of a steering maneuver for a pair of belief nodes must be determined by checking the feasibility of a nonconvex program, and it is no longer possible to drive arbitrary distributions to arbitrary distributions \cite{liu2022optimal} \cite{aggarwal2024sdp}. Therefore, in stochastic planning settings with constraints on the control input, it is desirable to construct a roadmap that finds paths from the largest possible set of query distributions. Existing methods in the multi-query stochastic planning literature populate the roadmap by first sampling a candidate mean, followed by a procedure that determines a candidate covariance corresponding to the belief node, either through sampling \cite{csbrm_tro2024} \cite{agha2014firm}, or through some computation \cite{aggarwal2024sdp}. In approaches that sample candidate belief nodes randomly, both the mean and the covariance \cite{csbrm_tro2024} \cite{agha2014firm}, there is no explicit consideration of coverage while adding nodes nor a post-analysis on the coverage characteristics of the constructed roadmap. 

Our prior work \cite{aggarwal2024sdp_CDC} formally characterizes the notion of the coverage of a roadmap in a stochastic domain by introducing $h$-hop backward reachable sets of distributions ($h$-BRS). The $h$-BRS of a distribution $p$ is defined as the set of all distributions for which there exists a control sequence that can steer the system to $p$, and is mathematically defined through the feasibility of a (nonconvex) program corresponding to the steering maneuver. Ref. \cite{aggarwal2024sdp_CDC} also introduces a novel procedure for adding nodes and edges through the solution of an optimization program that provides provably maximum coverage compared to other methods that grow a roadmap of distributions as demonstrated in \cite{aggarwal2024sdp_CDC} and \textit{as proven in this paper}. The optimization program to add belief nodes and edges is designed with the philosophy of maximum reuse of computation such that the edge controller provides a feasible steering sequence for the largest possible size of the set of initial conditions (by maximizing the minimum eigenvalue of the covariance corresponding to the candidate belief node) \cite{aggarwal2024sdp_CDC}.

In this paper, we develop a novel multi-query stochastic planning algorithm that grows a roadmap of families (henceforth referred to as an ambiguity set \cite{dro_lin2022distributionally, dro_Rahimian_2022}) of distributions, which contrasts  with existing methods in the literature \cite{agha2014firm, csbrm_tro2024, aggarwal2024sdp_CDC} that grow a roadmap of distributions. Specifically, we grow a backward reachable tree of maximal size ambiguity sets and the corresponding maximally distributionally robust edge controllers. The computation of these maximal size ambiguity sets and the associated edge controllers in a backwards fashion from the goal leads to a compact roadmap representation that is efficient to compute and construct. This is due to the general notion of \textbf{reuse} (see Remark~\ref{remark:reuse}) that provides a feasible control sequence for a family of distributions at once (the ambiguity set) leading to efficient multi-query planning at a fraction of the size of the roadmaps required for state-of-the-art methods (see Section~\ref{sec:experiments} for experiments). Each node of our proposed roadmap-based algorithm represents an ambiguity set of distributions such that a directed edge between a pair of nodes represents a finite-time distributionally robust covariance steering controller corresponding to the respective ambiguity sets. For our formulation, the ambiguity sets are parameterized as an ambiguity set for the first-order moment, and a prescribed second-order moment of the distribution. Our proposed framework constructs the roadmap by adding nodes and computing the corresponding edge controllers such that the size of the ambiguity set corresponding to the candidate node is maximized. This allows maximum reuse of computation since the edge controller provides a feasible control sequence for any distribution in the ambiguity set at the node with the outgoing edge.
The main contributions are as follows.
\begin{itemize}
    \item We introduce a new roadmap-based algorithm for multi-query planning in stochastic domains that grows backward reachable trees of maximal size ambiguity sets of distributions and the corresponding edge controllers and is efficient to compute and construct.
    \item We develop a novel method for node addition by computing the maximal backward reachable ambiguity set and the corresponding maximally distributionally robust edge controller using a nonlinear program. We then provide a convex SDP relaxation that recovers a feasible solution to the nonlinear program.
    \item We demonstrate through extensive simulations the efficacy of our approach on a 6 DoF model that shows superior performance compared to state-of-the-art methods \cite{aggarwal2024sdp_CDC, csbrm_tro2024} on an example planning application.
    \item We provide a formal proof for the maximum coverage theorem of the roadmap technique proposed in \cite{aggarwal2024sdp_CDC}.
\end{itemize}
\textit{Related work on distributionally robust optimal control (DRO-OC):} There is existing work on distributionally robust controller design that focuses on computing optimal control sequences between pairs of ambiguity sets of distributions \cite{yang2020wasserstein,shapiro2021distributionally, renganathan2023distributionally, pan2023distributionally, li2021distributionally}. We distinguish ourselves in that we are concerned with constructing backward reachable trees through the computation of maximal size ambiguity sets and the associated maximally distributionally robust edge controllers to facilitate multi-query planning.

\section{Problem Statement}\label{sec:problem_statement}
\subsection{Notation and Mathematical Preliminaries}
We denote the set of all positive semi-definite (positive definite) matrices in $\mathbb{R}^{n \times n}$ as $\mathbb{S}^{n}_{+}$ ($\mathbb{S}^{n}_{++}$). For a vector $c \in \mathbb{R}^{n}$ and a matrix $\mathscr{P} \in \mathbb{S}^{n}_{++}$, $\mathscr{R}(c, \mathscr{P})$ denotes an $n$-dimensional ellipsoid with center $c$ and shape matrix $\mathscr{P}$, such that, $ \mathscr{R}(c, \mathscr{P}) \coloneqq \{ x \in \mathbb{R}^{n} \ \vert \ (x - c)\t \mathscr{P}\inv (x-c) \leq 1 \} $.

We assume that all random objects are defined on a common underlying probability space $ (\Omega, \mathcal{F}, \mathbb{P})$. An $l$-dimensional real random variable $ \mathbf{x} $ is defined as a measurable function from the sample space to $ \mathbb{R}^{l} $, $ \mathbf{x}: \Omega \rightarrow \mathbb{R}^{l}$, such that $\mathbb{P}_{\mathbf{x}}$ denotes the measure induced by $ \mathbf{x} $ on $ \mathbb{R}^{l} $. $ \mathbb{P}_{\mathbf{x}} \coloneqq \mathbb{P}( \mathbf{x}\inv ( \mathcal{B} (\mathbb{R}^{l}) ))$ where $ \mathcal{B}(\mathbb{R}^{l})$ denotes the Borel $\sigma-$algebra on the $l$-dimensional Euclidean space. Note that $ \mathbb{P}_{\mathbf{x}} \coloneqq \mathbb{P}(\mathbf{x}\inv( \mathcal{B}(\mathbb{R}^{l}) )) $ is a shorthand for $ \mathbb{P}_{ \mathbf{x} }(A) \coloneqq \mathbb{P} ( \mathbf{x}\inv (A) ) \ \forall \ A \subset \mathcal{B}( \mathbb{R}^{l} )$. For two random variables $\mathbf{x}_{0}$ and $\mathbf{x}_{1}$, $ l_{0} $ and $l_{1}$ dimensional respectively, we denote their joint probability measure by $ \mathbb{P}_{\mathbf{x}_{0}, \mathbf{x}_{1}}$.

We use the notation $ \mathcal{P} (\mathcal{W}) $ to define the set of all probability distributions on any measurable Borel space $\mathcal{W}$. Also, we use the notation $ \mathcal{P}^{\mathbf{x}} $ to define the ambiguity set of distributions for a random variable $\mathbf{x}$ such that $ \mathbb{P}_{\mathbf{x}} \in \mathcal{P}^{\mathbf{x}} \subset \mathcal{P}( \mathcal{B}(\mathbb{R}^{l}) )$. For a pair of random variables $\mathbf{x}_{0}$ and $\mathbf{x}_{1}$, $l_{0}$ and $l_{1}$-dimensional respectively, we denote the ambiguity set of their joint distribution as $ \mathcal{P}^{\mathbf{x}_{0}, \mathbf{x}_{1}} $ such that $ \mathbb{P}_{\mathbf{x}_{0}, \mathbf{x}_{1}} \in \mathcal{P}^{\mathbf{x}_{0}, \mathbf{x}_{1}} \subset \mathcal{P}( \mathcal{B}( \mathbb{R}^{l_{0} + l_{1}} ) )$.

Finally, we denote an $l$-dimensional Gaussian distribution with mean $\mu$ and covariance $\Sigma$ as $\mathcal{N}_{l}(\mu, \Sigma)$.

\subsection{Problem}
Consider a discrete-time stochastic linear system represented by the following difference equation
\begin{equation}\label{eq:dynamics}
    \mathbf{x}_{t+1} = A \mathbf{x}_{t} + B \mathbf{u}_{t} + D \mathbf{w}_{t}
\end{equation}
where $\mathbf{x}_{t} \in \mathcal{X}$, $\mathbf{u}_{t} \in \mathcal{U}$ are the state and control inputs at time $t$, respectively, $\mathcal{X} \subseteq \mathbb{R}^{n}$ is the state space, $\mathcal{U} \subseteq \mathbb{R}^{m}$ is the control space, $n, m \in \mathbb{N}$, $D \in \mathbb{R}^{n \times n}$ and $\mathbf{w}_{t}$ represents the stochastic disturbance at time $t$ that is assumed to have a Gaussian distribution with zero mean and unitary covariance. We assume that $\{w_{t}\}$ is an \textit{i.i.d.} process, and that $A$ is non-singular.

We define a finite-horizon optimal control problem, $\operatorname{OPTSTEER}\left(q, \mathcal{G}, N\right)$, where $q$, $\mathcal{G}$, $N$ are the initial distribution, (ambiguity set corresponding to the) goal distribution, and the time-horizon corresponding to the control problem respectively. $\operatorname{OPTSTEER}(q, \mathcal{G}, N)$ solves for a control sequence that is optimal with respect to a performance index $J$, and that steers the system from an initial distribution $q$ to a goal $\mathcal{G} \coloneqq (\mathscr{R}_\mathcal{G}, \Sigma_{\mathcal{G}})$ in a time-horizon of length $N$ under chance constraints on the state such that the control sequence respects the prescribed constraints on the control input at each time-step:
\begin{equation}
    \min _{ \Phi_{k} } \quad J=\mathbb{E}\left[\sum_{k=0}^{N-1} \mathbf{x}_k^{\top} Q_k \mathbf{x}_k + \mathbf{u}_k^{\top} R_k \mathbf{u}_k\right] \tag{OPTSTEER} \label{eq:J}
\end{equation}
such that, for all $k = 0, 1, \cdots N - 1$,
\begin{subequations}
\begin{align}
    &\mathbf{x}_{k+1} = A\mathbf{x}_{k} + B\mathbf{u}_{k} + D \mathbf{w}_{k}, \label{eq:opt_dynamics_optsteer} \\
    &\mathbf{x}_{0} \sim \mathcal{N}_{n}(\mu_{q}, \Sigma_{q}), \label{eq:opt_init_dist_optsteer} \\
    &\mathbf{x}_{N} \sim \mathcal{N}_{n}(\mu_{N}, \Sigma_{N}), \label{eq:opt_final_dist_optsteer} \\
    &\mu_{N} \in \mathscr{R}_{\mathcal{G}}, \label{eq:opt_goal_mean} \\
    &\lambda_{\mathrm{max}}(\Sigma_{N}) \leq \lambda_{\mathrm{min}}(\Sigma_{\mathcal{G}}), \label{eq:opt_goal_covar} \\
    &\mathbb{P}(\mathbf{x}_{k} \in \mathcal{X}) \geq 1- \epsilon_{x}, \label{eq:opt_state_constraint_optsteer} \\
    &\mathbb{P}(\mathbf{u}_{k} \in \mathcal{U}) \geq 1- \epsilon_{u}, \label{eq:opt_control_constraint_optsteer}
\end{align}
\end{subequations}
where $\Phi_{k}(.)$ is the parameterization for the finite-horizon controller such that $ \mathbf{u}_{k} \coloneqq \Phi_{k}(\mathbf{x}_{k}) \ \forall \ k = 0, 1, \cdots N-1$, and $\mathscr{R}_{\mathcal{G}}$ is a compact, convex subset of the state space $ \mathscr{R}_{\mathcal{G}} \subset \mathcal{X}$, and $\Sigma_{\mathcal{G}} \in \mathbb{S}^{n}_{++}$. For the remainder of this paper, all regions defining steering maneuvers are assumed to have an ellipsoidal representation. We make this explicit for the above program by denoting $\mathscr{R}_{\mathcal{G}}$ as $\mathscr{R}_{\mathcal{G}} \coloneqq \mathscr{R}( \mu_{\mathcal{G},c}, \mathscr{P}_{\mathcal{G}})$, where $ \mu_{\mathcal{G},c} \in \mathbb{R}^{n} $ is the center, and $\mathscr{P}_{\mathcal{G}} \in \mathbb{S}^{n}_{++}$ is the shape matrix of the ellipsoid s.t. $ \mathscr{R}( \mu_{\mathcal{G},c}, \mathscr{P}_{\mathcal{G}}) \coloneqq \{ x \in \mathbb{R}^{n} \ \vert \ (x - \mu_{\mathcal{G},c})\t \mathscr{P}\inv_{\mathcal{G}} (x - \mu_{\mathcal{G},c}) \leq 1\} $. 

Note that the terminal goal reaching constraint $\mu_{N} \in \mathscr{R}( \mu_{\mathcal{G},c}, \mathscr{P}_{\mathcal{G}})$ in the above program is a slightly modified version of the $\operatorname{OPTSTEER}$ defined in \cite{aggarwal2024sdp} requiring exact terminal mean reaching $\mu_{N} = \mu_{\mathcal{G},c}$. This formulation provides increased flexibility by allowing to construct a backward reachable tree of ambiguity sets of distributions as we will see in later sections. Intuitively speaking, the roadmap now looks like a backward reachable tree of ellipsoidal regions representing the ambiguity in the first-order moment such that the tree has favourable space-filling properties as demonstrated experimentally. From a practical standpoint, both the formulations are equivalent in the sense that if the terminal mean is required to reach the prescribed goal mean exactly, this requirement can be approximately encoded by a goal ellipsoid of arbitrarily small volume centered at the prescribed goal mean. From a computational standpoint, the above program remains tractable to solve such that the convex relaxation provided in \cite{rapakoulias2023discrete} would return a feasible control sequence for the above defined program since $ \mu_{N} \in \mathscr{R} ( \mu_{\mathcal{G},c}, \mathscr{P}_{\mathcal{G}}) $ is a convex constraint.

$ \operatorname{OPTSTEER}(q, \mathcal{G}, N) $ has a solution if and only if the constraint set of the optimization problem is non-empty in the space of the decision variables $\{\Phi_{k}\}_{k=0}^{N-1}$. 
However, with control constraints, the existence of an $N$-step controller that steers the system from an initial distribution to a goal needs to be established explicitly by solving a feasibility question defined by  (\ref{eq:opt_dynamics_optsteer})--(\ref{eq:opt_control_constraint_optsteer}). This paper presents solutions to the following problem:
\begin{problem}\label{problem:problem_eng}
    Find feasible paths to the (ambiguity set corresponding to the) goal distribution $\mathcal{G}$ from all query initial distributions $q$ for which paths (respecting all the constraints) exist.
\end{problem}
Note that we use the terms \textit{path} and \textit{control sequence} interchangeably in this paper. Problem~\ref{problem:problem_eng} is an instance of multi-query planning, and it is difficult to solve in general.
We employ graph-based methods to build a roadmap in the space of distributions of the state of the system such that the roadmap could be used across query initial distributions.
The existence of an $N$-step steering control that drives the state distribution from $q$ to $\mathcal{G}$ satisfying all the constraints is determined by the feasibility of (\ref{eq:opt_dynamics_optsteer})--(\ref{eq:opt_control_constraint_optsteer}) that represent a non-convex set in the space of the decision variables. Determining if a non-convex set is empty or not from its algebraic description is an NP-hard problem in general, and the complexity of this feasibility check (finding a feasible path) scales up rapidly with the time-horizon $N$ since the problem size becomes larger and the number of variables increase. 

Problem~\ref{problem:problem_eng} is concerned with finding feasible paths from all possible initial distributions, and our solution methodology proceeds by building a backward reachable tree of feasible paths from the goal distribution and sequencing them together to find a feasible path at run-time for the query initial distribution. This approach avoids the search for the feasible path of the query distribution from scratch (see Section~\ref{sec:approach} for details). 

The paper is organized as follows: In Section~\ref{sec:prelim}, we develop a novel nonlinear program for a distributionally robust covariance steering controller and provide a convex relaxation thereof. Section~\ref{sec:approach} details the construction of a backward reachable tree of ambiguity sets based on novel optimization programs for edge construction that are based on the nonlinear program developed in Section~\ref{sec:prelim}. In Section~\ref{sec:proof}, we give a proof of the Maximum Coverage Theorem stated in \cite{aggarwal2024sdp_CDC} and Section~\ref{sec:experiments} discusses experimenal results.

\section{Novel Distributionally Robust Covariance Steering Controller}\label{sec:prelim}
To solve Problem~\ref{problem:problem_eng}, we build a backward reachable tree of \textit{ambiguity sets of distributions} from the goal, such that each directed edge represents an edge controller that is distributionally robust to the pair of ambiguity sets stored at the corresponding nodes. In this section, we formulate a novel nonlinear program corresponding to the distributionally robust covariance steering maneuver between and \textit{initial} and a \textit{goal} ambiguity set. The distributionally robust program introduced in this section is further developed in Section~\ref{sec:approach} into a procedure to construct a backward reachable tree from the goal by adding maximal size ambiguity sets as nodes and the corresponding maximally distributionally robust edge controllers.

We recall that the ambiguity set of distributions for our formulation is defined completely in terms of ambiguity in the first and the second-order moments respectively. For our approach, the ambiguity in the first-order moment is represented by a compact, convex subset of the state space, and we assume no ambiguity in the second-order moment.
\begin{equation}
    \min _{ \Phi_{k} } \quad J=\sup_{ \tilde{\mathbb{P}} \in \mathcal{P}^{\mathbf{x}_{0:N}, \mathbf{u}_{0:N-1}}} \mathbb{E}_{\tilde{\mathbb{P}}}\left[\sum_{k=0}^{N-1} \mathbf{x}_k^{\top} Q_k \mathbf{x}_k + \mathbf{u}_k^{\top} R_k \mathbf{u}_k\right] \tag{EDGESTEER} \label{eq:J_dro}
\end{equation}
such that, for all $k = 0, 1, \ldots, N - 1$,
\begin{subequations}
\begin{align}
    &\mathbf{x}_{k+1} = A\mathbf{x}_{k} + B\mathbf{u}_{k} + D \mathbf{w}_{k}, \label{eq:opt_dynamics} \\
    &\mathbf{x}_{0} \sim \mathbb{P}_{\mathbf{x}_{0}} \text{ s.t. } \mathbb{P}_{\mathbf{x}_{0}} \in \mathcal{P}^{\mathbf{x}_{0}}, \\
    &\mathbb{P}_{\mathbf{x}_{N}} \in \mathcal{P}^{\mathcal{G}}, \\
    &\inf_{ \mathcal{P}^{\mathbf{x}_{k}} } \mathbb{P}(\mathbf{x}_{k} \in \mathcal{X}) \geq 1- \epsilon_{x}, \label{eq:opt_state_constraint} \\
    &\inf_{ \mathcal{P}^{\mathbf{u}_{k}} } \mathbb{P}(\mathbf{u}_{k} \in \mathcal{U}) \geq 1- \epsilon_{u}, \label{eq:opt_control_constraint}
\end{align}
\end{subequations}
where $\mathcal{P}^{\mathbf{x}_{0:N}, \mathbf{u}_{0:N-1}}$ is the ambiguity set corresponding to the joint distribution of $\{\mathbf{x}_{k}\}_{k=0}^{N}$ and $\{\mathbf{u}_{k}\}_{k=0}^{N-1}$, $ \mathcal{P}^{\mathbf{x}_{0}} = \{ \mathcal{N}_{n}(\mu_{0}, \Sigma_{0}) \ \vert \ \mu_{0} \in \mathscr{R}( \mu_{q}, \mathscr{P}_{q}), \Sigma_{0} = \Sigma_{q} \} $ and $ \mathcal{P}^{\mathcal{G}} = \{ \mathcal{N}_{n}(\mu, \Sigma) \ \vert \ \mu \in \mathscr{R}( \mu_{\mathcal{G},c}, \mathscr{P}_{\mathcal{G}}), \lambda_{\mathrm{max}}(\Sigma) \leq \lambda_{\mathrm{min}}(\Sigma_{\mathcal{G}}) \} $.

$\mathcal{P}^{\mathbf{x}_{k}}$ and $\mathcal{P}^{\mathbf{u}_{k}}$ represent the ambiguity sets corresponding to the state and feedback control distribution at time $k$ respectively, and can be obtained from the ambiguity set of the initial state distribution $\mathcal{P}^{\mathbf{x}_{0}}$ by propagating the ellipsoid $\mathscr{R}(\mu_{q}, \mathscr{P}_{q})$ representing the ambiguity in the first-order moment and the covariance $\Sigma_{q}$ through the dynamics.
\subsection{The Steering Nonlinear Program and Feasibility}\label{sec:cs_review}
We consider an affine state feedback parameterization for the controller to solve \ref{eq:J_dro} as follows, 
\begin{equation}\label{eq:control_param}
    \mathbf{u}_{k} = K_{k}(\mathbf{x}_{k} - \mu_{k}) + v_{k}
\end{equation}
where $ K_{k} \in \mathbb{R}^{n \times m} $ is the feedback gain that controls the covariance dynamics, and $ v_{k} \in \mathbb{R}^{n}$ is the feedforward gain that controls the mean dynamics. 

Under a control law of the form (\ref{eq:control_param}), the state distribution remains Gaussian at all times. Moreover, the ambiguity set of the state distribution at time $k$ can be represented as $\mathcal{P}^{\mathbf{x}_{k}} \coloneqq \{ \mathcal{N}_{n}(\mu_{k}, \Sigma) \ \vert \ \mu_{k} \in \mathscr{R}(\mu_{k,c}, \mathscr{P}_{k}), \Sigma = \Sigma_{k}\}$, and can be obtained by propagating $\mathcal{P}^{\mathbf{x}_{0}} = \{ \mathcal{N}_{n}(\mu_{0}, \Sigma_{0}) \ \vert \ \mu_{0} \in \mathcal{R}(\mu_{q}, \mathscr{P}_{q}), \Sigma_{0} = \Sigma_{q} \} $ through the dynamics as follows \cite{kurzhanski1996ellipsoidal},
\begin{align}
    &\mu_{k} \in \mathscr{R}(\mu_{k,c}, \mathscr{P}_{k}), \\
    &\mu_{k+1,c} = A\mu_{k,c} + Bv_{k}, \ 
    \mathscr{P}_{k+1} = A \mathscr{P}_{k} A\t, \\
    &\mu_{0,c} = \mu_{q}, \mathscr{P}_{0} = \mathscr{P}_{q}, \\
    &\begin{aligned}
        \Sigma_{k+1} = A &\Sigma_{k} A^{\intercal} + B K_{k}\Sigma_{k} A\t + A\Sigma_{k} K\t_{k} B\t \nonumber \\ &+ B K_{k} \Sigma_{k} K\t_{k} B\t + DD\t,
    \end{aligned} \\
    &\Sigma_{0} = \Sigma_{q}.
\end{align}
Moreover, we can simplify the expected cost term within the supremum in \ref{eq:J_dro} completely in terms of the first and second order moments of the state process as 
$\sum_{k=0}^{N-1} \operatorname{tr}({Q_{k}\Sigma_{k}}) + \operatorname{tr}(R_{k} K_{k} \Sigma_{k} K^{\intercal}_{k} ) + \mu^{\intercal}_{k} Q_{k} \mu_{k} + v_{k}^{\intercal} R_{k} v_{k}
$.

We consider polytopic state and control constraints of the form $ \mathcal{X} \coloneqq \{ x_{k} \in \mathbb{R}^{n} \ \vert \ \alpha^{\intercal}_{x} x_{k} \leq \beta_{x} \} $, $ \mathcal{U} \coloneqq \{ u_{k} \in \mathbb{R}^{m} \ \vert \ \alpha^{\intercal}_{u} u_{k} \leq \beta_{u} \} $ such that,
\begin{align}
    \mathbb{P}( \alpha^{\intercal}_{x} \mathbf{x}_{k} \leq \beta_{x} ) &\geq 1 - \epsilon_{x} \label{eq:state_chance_constraint}, \\
    \mathbb{P}( \alpha^{\intercal}_{u} \mathbf{u}_{k} \leq \beta_{u} ) &\geq 1 - \epsilon_{u}, \label{eq:control_chance_constraint}
\end{align}
where $ \alpha_{x} \in \mathbb{R}^{n} $, $ \alpha_{u} \in \mathbb{R}^{m} $, and $\beta_{x}, \beta_{u} \in \mathbb{R}$. $\epsilon_{x}, \epsilon_{u} \in \left[0, 0.5\right]$ represent the tolerance levels with respect to state and control constraint violation respectively, and $ \{\alpha_{x} \mathbf{x}_{k}\}^{N-1}_{k=0} $ and $ \{\alpha_{u} \mathbf{u}_{k}\}^{N-1}_{k=0} $ are univariate random variables with 
first and second order moments,
\begin{align}
\mathbb{E}( \alpha_{x} \mathbf{x}_{k} ) &= \alpha_{x} \mu_{k} \label{eq:state_constraint_first_moment} \\
\mathbb{E}( \alpha_{u} \mathbf{u}_{k} ) &= \alpha_{u} v_{k} \label{eq:control_constraint_first_moment} \\
\mathbb{E}( \alpha^{\intercal}_{x} ( \mathbf{x}_{k} - \mu_{k} ) ( \mathbf{x}_{k} - \mu_{k} )^{\intercal} \alpha_{x}) &= \alpha^{\intercal}_{x} \Sigma_{k} \alpha_{x} \label{eq:state_constraint_second_moment} \\
\mathbb{E}( \alpha^{\intercal}_{u} ( \mathbf{u}_{k} - v_{k} ) ( \mathbf{u}_{k} - v_{k} )^{\intercal} \alpha_{u}) &= \alpha\t_{u} K_{k} \Sigma_{k} K\t_{k} \alpha_{u} \label{eq:control_constraint_second_moment}.
\end{align}
Note that the chance constraints (terms within the infimum) can be written as~\cite{okamoto2018optimal},
\begin{subequations}
\begin{align}
    \Phi^{-1}(1-\epsilon_{x}) \sqrt{ \alpha^{\intercal}_{x} \Sigma_{k} \alpha_{x} } + \alpha^{\intercal}_{x} \mu_{k} - \beta_{x} &\leq 0, \label{eq:state_soft_simplified} \\
    \Phi^{-1}(1-\epsilon_{u}) \sqrt{ \alpha\t_{u} K_{k} \Sigma_{k} K\t_{k} \alpha_{u} } + \alpha^{\intercal}_{u} v_{k} - \beta_{u} &\leq 0, \label{eq:control_soft_simplified}
\end{align}
\end{subequations}
where $ \Phi^{-1}(.)$ is the inverse cumulative distribution function of the normal distribution. Therefore, the optimization problem \ref{eq:J_dro} can be recast as the  nonlinear program,
\begin{align}
    \min_{ \Sigma_{k}, K_{k}, \mu_{k,c}, \mathscr{P}_{k}, v_{k} } J = \sum_{k=0}^{N-1} &\operatorname{tr}({Q_{k}\Sigma_{k}}) + \operatorname{tr}(R_{k} K_{k} \Sigma_{k} K^{\intercal}_{k} ) \nonumber \\ &\hspace*{-1in} + \sup_{\mu_{k} \in \mathscr{R}(\mu_{k,c}, \mathscr{P}_{k})} \mu^{\intercal}_{k} Q_{k} \mu_{k} + v_{k}^{\intercal} R_{k} v_{k}, \label{eq:J_simp}
\end{align}
such that for all $k = 0, 1, \cdots, N - 1$,
\begin{subequations}
\begin{align}
    &\begin{aligned}
        \Sigma_{k+1} = A &\Sigma_{k} A^{\intercal} + B K_{k}\Sigma_{k} A\t + A\Sigma_{k} K\t_{k} B\t \nonumber \\ &+ B K_{k} \Sigma_{k} K\t_{k} B\t + DD\t,
    \end{aligned} \tag{\ref{eq:J_simp}a} \label{eq:covar_prop_nlp} \\
    &\Sigma_{0} = \Sigma_{q}, \tag{\ref{eq:J_simp}b} \label{eq:covar_init_nlp} \\
    &\lambda_{\mathrm{max}}(\Sigma_{N}) \leq \lambda_{\mathrm{min}}(\Sigma_{\mathcal{G}}), \tag{\ref{eq:J_simp}c} \label{eq:covar_goal_nlp} \\
    &\mu_{k} \in \mathscr{R}(\mu_{k,c}, \mathscr{P}_{k}), \tag{\ref{eq:J_simp}d} \label{eq:mean_init_nlp} \\
    &\mu_{k+1,c} = A \mu_{k,c} + B v_{k}, \ \mathscr{P}_{k+1} = A \mathscr{P}_{k} A\t, \tag{\ref{eq:J_simp}e} \label{eq:mean_prop_nlp} \\
    &\mu_{0,c} = \mu_{q}, \ \mathscr{P}_{0} = \mathscr{P}_{q}, \ \tag{\ref{eq:J_simp}f} \\
    &\mathscr{R}(\mu_{N,c}, \mathscr{P}_{N}) \subseteq  \mathscr{R}( \mu_{\mathcal{G},c}, \mathscr{P}_{\mathcal{G}}) \tag{\ref{eq:J_simp}g} \label{eq:ellipsoidal_mean_containment_nlp}, \\
    &\Phi^{-1}(1-\epsilon_{x}) \sqrt{ \alpha^{\intercal}_{x} \Sigma_{k} \alpha_{x} } + \!\!\! \!\!\! \sup_{\mu_{k} \in \mathscr{R}(\mu_{k,c}, \mathscr{P}_{k})} \!\!\! \alpha^{\intercal}_{x} \mu_{k} - \beta_{x} \leq 0, \tag{\ref{eq:J_simp}h} \label{eq:state_constraint_nlp} \\
    &\Phi^{-1}(1-\epsilon_{u}) \sqrt{ \alpha\t_{u} K_{k} \Sigma_{k} K\t_{k} \alpha_{u} } + \alpha^{\intercal}_{u} v_{k} - \beta_{u} \leq 0. \tag{\ref{eq:J_simp}i} \label{eq:control_constraint_nlp}
\end{align}
\end{subequations}
Eq.~(\ref{eq:state_constraint_nlp}) is referred to as the \textit{robust satisfaction of the state chance constraint} and (\ref{eq:ellipsoidal_mean_containment_nlp}) is the \textit{ellipsoidal mean containment constraint}.
%
Eq. (\ref{eq:ellipsoidal_mean_containment_nlp}) can be written as
$ (\mu - \mu_{N,c})\t \mathscr{P}\inv_{N}(\mu - \mu_{N,c}) \leq 1 
\implies (\mu - \mu_{\mathcal{G},c})\t \mathscr{P}\inv_{\mathcal{G}}(\mu - \mu_{\mathcal{G},c}) \leq 1 $, which, using the S-procedure \cite{boyd2004convex, yakubovich_slemma, polik2007survey}, can be rewritten as the 
non-convex matrix inequality 
\begin{equation} \!\!\!
    \begin{bmatrix}
        \mathscr{P}\inv_{\mathscr{G}} & - \mathscr{P}\inv_{\mathscr{G}} \mu_{\mathscr{G},c} \\ 
        -\mu_{\mathscr{G},c}\t \mathscr{P}\inv_{\mathscr{G}} & \!\!\! \mathbb{M}_{1} \! (\mu_{\mathcal{G},c}, \mathscr{P}_{\mathcal{G}})
    \end{bmatrix} \! \preceq \! \lambda \! \begin{bmatrix}
        \mathscr{P}\inv_{N} & - \mathscr{P}\inv_{N} \mu_{N,c} \\ 
        -\mu_{N,c}\t \mathscr{P}\inv_{N} & \!\!\! \mathbb{M}_{1}(\mu_{N,c}, \mathscr{P}_{N})
    \end{bmatrix},
    \label{eq:slemma_matineq}
\end{equation}
where $\mathbb{M}_{1}(\mu, \mathscr{P}) \coloneqq \mu\t \mathscr{P}\inv\mu - 1$,
in the decision variables $\mathscr{P}_{N}$, $\mu_{N,c}$, and $\lambda\geq 0$.
\subsection{Convex Relaxation} \label{sec:convex_relax}
For the nonlinear program (\ref{eq:J_simp}), the covariance term $\Sigma_{k}$ and the control feedback term $K_{k}$ are coupled through the objective function, and the constraints (\ref{eq:covar_prop_nlp}) and (\ref{eq:control_constraint_nlp}). Refs. \cite{liu2022optimal,rapakoulias2023discrete} propose a change of variables $ U_{k} = K_{k} \Sigma_{k} $ and introduce an auxiliary variable $Y_{k}$ to relax (\ref{eq:J_simp}) to the following program where the objective function is convex in $\Sigma_{k}$ and $Y_{k}$ and the constraints (\ref{eq:covar_prop_nlp})--(\ref{eq:control_constraint_nlp}) take the following form,
\begin{align}
    \min_{ \Sigma_{k}, U_{k}, Y_{k}, \mu_{k,c}, \mathscr{P}_{k}, v_{k} } J = & \sum_{k=0}^{N-1} \operatorname{tr}({Q_{k}\Sigma_{k}}) + \operatorname{tr}(R_{k} Y_{k}) \nonumber \\ &+ \sup_{\mu_{k} \in \mathscr{R}(\mu_{k,c}, \mathscr{P}_{k})} \mu^{\intercal}_{k} Q_{k} \mu_{k} + v_{k}^{\intercal} R_{k} v_{k}, \label{eq:J_relaxed}
\end{align}
such that for all $k = 0, 1, \cdots, N - 1$,
\begin{subequations}
\begin{align}
    &C_{k} \triangleq U_{k} \Sigma_{k}\inv U\t_{k} - Y_{k} \preceq 0 \tag{\ref{eq:J_relaxed}a} \label{eq:Ck}, \\
    &G_{k} \triangleq A\Sigma_{k}A\t + B U_{k} A\t + A U\t_{k} B\t \nonumber \\ &~~+ B Y_{k} B\t + D D\t - \Sigma_{k+1} = 0, \tag{\ref{eq:J_relaxed}b} \label{eq:Gk} \\
    &\Phi\inv(1-\epsilon_{u}) \sqrt{ \alpha\t_{u} Y_{k} \alpha_{u} } + \alpha^{\intercal}_{u} v_{k} - \beta_{u} \leq 0, \label{eq:control_soft_relaxed} \tag{\ref{eq:J_relaxed}c} \\
    &\text{subject to (\ref{eq:covar_init_nlp})--(\ref{eq:state_constraint_nlp})}, 
    \notag
\end{align}
\end{subequations}
where the constraint (\ref{eq:Ck}) can be expressed as an linear matrix inequality (LMI) using the Schur complement as follows,
$\left[\begin{array}{ll}\Sigma_k & U_k^{\top} \\ U_k & Y_k\end{array}\right] \succeq 0.$
Due to the presence of the square root, neither  (\ref{eq:state_constraint_nlp}) nor (\ref{eq:control_soft_relaxed}) are convex in $\Sigma_{k}$ and $Y_{k}$ respectively.
Ref.~\cite{rapakoulias2023discrete} proposes a linearization process wherein, 
since the square root is a concave function, the tangent line serves as the global overestimator,
\begin{equation}
    \sqrt{x} \leq \frac{1}{2 \sqrt{x_0}} x+\frac{\sqrt{x_0}}{2}, \quad \forall x, x_0 > 0. \label{eq: sqrt_concave_ub}
\end{equation}
Therefore, constraints (\ref{eq:state_constraint_nlp}) and (\ref{eq:control_soft_relaxed}) are approximated as,
\begin{align}
&\begin{aligned}
\Phi^{-1}\left(1-\epsilon_x\right) & \frac{1}{2 \sqrt{\alpha_x^{\top} \Sigma_r \alpha_x}} \alpha_x^{\top} \Sigma_k \alpha_x+ \sup_{\mu_{k} \in \mathscr{R}(\mu_{k,c}, \mathscr{P}_{k})} \alpha_x^{\top} \mu_k \\
& -\left(\beta_x-\Phi^{-1}\left(1-\epsilon_x\right) \frac{1}{2} \sqrt{\alpha_x^{\top} \Sigma_r \alpha_x}\right) \leq 0,
\end{aligned} \label{eq:sigma_relax}
\end{align}
\begin{align}
&\begin{aligned}
\Phi^{-1}\left(1-\epsilon_u\right) & \frac{1}{2 \sqrt{\alpha_u^{\top} Y_r \alpha_u}} \alpha_u^{\top} Y_k \alpha_u+\alpha_u^{\top} v_k \\
& -\left(\beta_u-\Phi^{-1}\left(1-\epsilon_u\right) \frac{1}{2} \sqrt{\alpha_u^{\top} Y_r \alpha_u}\right) \leq 0,
\end{aligned} \label{eq:Y_relax}
\end{align}
where $\Sigma_{r}, Y_{r}$ are some reference values. The constraints (\ref{eq:sigma_relax}) and (\ref{eq:Y_relax}) are now convex (linear) in $\Sigma_{k}$ and $Y_{k}$, and we represent them as $ \ell\t\Sigma_{k}\ell + \sup_{\mu_{k} \in \mathscr{R}(\mu_{k,c}, \mathscr{P}_{k})} \alpha\t_{x}\mu_{k} - \overline{\beta}_{x} \leq 0 $ and $ e \t Y_{k} e + \alpha\t_{u}v_{k} - \overline{\beta}_{u} \leq 0 $ for future reference.

As a contribution of this paper, we now focus on the convex reformulation of the remainder of the nonlinear program (\ref{eq:J_simp}), specifically the constraints due to the ambiguity in the first order moment of the state distribution (\ref{eq:ellipsoidal_mean_containment_nlp}) and (\ref{eq:sigma_relax}). We provide a convex reformulation of the ellipsoidal mean containment (\ref{eq:ellipsoidal_mean_containment_nlp}) and the robust satisfaction of the state chance constraint (\ref{eq:sigma_relax}) that is sufficient for the original nonconvex constraints to hold.

\noindent
\textbf{Convex Relaxation for Robust Satisfaction of the State Chance Constraint} (\ref{eq:sigma_relax}): The constraint can be written as $\inf_{\mu_{k} \in \mathscr{R}(\mu_{k,c}, \mathscr{P}_{k})}  f(\mu_{k}) \geq - \bar{\beta}_{x} + \ell\t \Sigma_{k} \ell$ where $f(\mu_{k}) \coloneqq - \alpha\t_{x} \mu_{k}$. A sufficient condition for this to hold is for a lower bound on the LHS to be greater than or equal to the RHS. Since $\mathscr{R}(\mu_{k,c}, \mathscr{P}_{k})$ represents an ellipsoidal region with the following parameterization: $ \mathscr{R}(\mu_{k,c}, \mathscr{P}_{k}) \coloneqq \{ \mu \ \vert \ (\mu - \mu_{k,c})\t \mathscr{P}\inv_{k} (\mu - \mu_{k,c}) \leq 1 \}  $, $\inf_{\mu_{k} \in \mathscr{R}(\mu_{k,c}, \mathscr{P}_{k})} f(\mu_{k}) $ is a \textit{quadratically constrained quadratic program} (QCQP). Since the quadratic inequality represents an ellipsoid with a non-empty interior, Slater's condition holds and therefore from strong duality, the infimum of the QCQP is equal to the supremum value of its corresponding dual function. Let $g(\lambda)$ denote the dual function of the QCQP as follows,
    \begin{align*}
        g(\lambda) = \inf_{\mu_{k}} \mathcal{L}(\mu_{k}, \lambda),
    \end{align*}
    for $\lambda \geq 0$ where $\mathcal{L}(\cdot,\cdot)$ is the Lagrangian s.t. $\mathcal{L}(\mu_{k}, \lambda) \!\! \coloneqq -\alpha\t_{x} \mu_{k} + \lambda \left[ (\mu_{k}-\mu_{k,c})\t \mathscr{P}\inv_{k} (\mu_{k}-\mu_{k,c}) - 1 \right] $. From the Lagrangian, we obtain the closed-form expression for the dual of the stated QCQP as, $g(\lambda) = -\frac{\alpha\t_{x}\mathscr{P}_{k} \alpha_{x} }{4\lambda} \! - \! \lambda - \alpha\t_{x}\mu_{k,c}$ for $\lambda > 0$. Since strong duality holds, $ \inf_{\mu_{k} \in \mathscr{R}(\mu_{k,c}, \mathscr{P}_{k})} f(\mu_{k}) = \sup_{\lambda \geq 0} g(\lambda) = - \sqrt{ \alpha\t_{x} \mathscr{P}_{k} \alpha_{x}} - \alpha\t_{x} \mu_{k,c}$. Therefore, the constraint (\ref{eq:sigma_relax}) $\inf_{\mu_{k} \in \mathscr{R}(\mu_{k,c}, \mathscr{P}_{k})}  f(\mu_{k}) \geq - \bar{\beta}_{x} + \ell\t \Sigma_{k} \ell $ can be written as $ \ell\t\Sigma_{k}\ell + \sqrt{ \alpha\t_{x} \mathscr{P}_{k} \alpha_{x}} + \alpha\t_{x} \mu_{k,c} \leq \bar{\beta}_{x} $. This reformulation is nonconvex in decision variable $ \mathscr{P}_{k} $ due to the presence of the square root. From (\ref{eq: sqrt_concave_ub}), the following is a sufficient condition for the non-convex constraint to hold,
    \begin{equation}
        \ell\t\Sigma_{k}\ell + \frac{1}{2\sqrt{ \alpha\t_{x} \mathscr{P}_{r} \alpha_{x}}} \alpha\t_{x} \mathscr{P}_{k} \alpha_{x} + \alpha\t_{x} \mu_{k,c} \leq \bar{\beta}_{x} - \frac{\sqrt{ \alpha\t_{x} \mathscr{P}_{r} \alpha_{x}}}{2},
    \end{equation}
    for some reference value $\mathscr{P}_{r}$. We write the above constraint as $ \ell\t\Sigma_{k}\ell + \ell\t_{\mathscr{R}} \mathscr{P}_{k} \ell_{\mathscr{R}} + \alpha\t_{x} \mu_{k,c} \leq \tilde{\beta}_{x}$ for all future reference. 

\noindent
\textbf{Convex Relaxation for Ellipsoidal Mean Containment} (\ref{eq:ellipsoidal_mean_containment_nlp}):
Note that $\mathscr{R}(\mu_{N,c}, \mathscr{P}_{N}) \subseteq \mathscr{R}(\mu_{N,c}, \gamma \mathscr{P}_{\mathcal{G}}) \subseteq \mathscr{R}(\mu_{\mathcal{G},c}, \mathscr{P}_{\mathcal{G}})$ for some $\gamma > 0$ is a sufficient condition to ensure $\mathscr{R}(\mu_{N,c}, \mathscr{P}_{N}) \subseteq \mathscr{R}(\mu_{\mathcal{G},c}, \mathscr{P}_{\mathcal{G}})$. For the former containment condition to hold, it is sufficient that $ \mathscr{P}_{N} \preceq \gamma \mathscr{P}_{\mathcal{G}}$, and for the latter containment condition $\mathscr{R}(\mu_{N,c}, \gamma P_{\mathcal{G}}) \subseteq \mathscr{R}(\mu_{\mathcal{G},c}, P_{\mathcal{G}})$, the following must hold from the S-procedure,
\begin{align}
    \begin{bmatrix}
        \mathscr{P}\inv_{\mathscr{G}} & - \mathscr{P}\inv_{\mathscr{G}} \mu_{\mathscr{G},c} \\ 
        -\mu_{\mathscr{G},c}\t \mathscr{P}\inv_{\mathscr{G}} & \mathbb{M}^{\mathcal{G}}_{1}
    \end{bmatrix} \preceq \frac{\lambda}{\gamma} \begin{bmatrix}
        \mathscr{P}\inv_{\mathcal{G}} & - \mathscr{P}\inv_{\mathcal{G}} \mu_{N,c} \\ 
        -\mu_{N,c}\t \mathscr{P}\inv_{\mathcal{G}} & \mathbb{M}_{1}(\mu_{N,c}, \mathscr{P}_{G})
    \end{bmatrix}, \label{eq:slemma1}
\end{align}
where $ \mathbb{M}_{1}(\mu, \mathscr{P}) \coloneqq \mu\t \mathscr{P}\inv \mu -1 \ \forall \ \mu \in \mathbb{R}^{n}, \ \forall \ \mathscr{P} \in \mathbb{S}^{n}_{++} $ and $ \mathbb{M}^{\mathcal{G}}_{1} \coloneqq \mathbb{M}_{1}(\mu_{\mathcal{G},c}, \mathscr{P}_{G}) $.
Rearranging and introducing an auxiliary variable $ \tau = \gamma / \lambda > 0$ yields,
\begin{align}
    \begin{bmatrix}
        (\tau - 1) \mathscr{P}\inv_{\mathscr{G}} & - \mathscr{P}\inv_{\mathscr{G}} (\tau \mu_{\mathscr{G},c} - \mu_{N,c}) \\ 
        -(\tau \mu_{\mathscr{G},c} - \mu_{N,c})\t \mathscr{P}\inv_{\mathscr{G}} & \mathbb{M}_{2}(\tau, \mu_{\mathcal{G},c}, \mathscr{P}_{\mathcal{G}}, \mu_{N,c}, \gamma)
    \end{bmatrix} \preceq 0, \label{eq:slemma2}
\end{align}
where $ \mathbb{M}_{2}(\cdot)
\coloneqq \tau(\mu\t_{\mathscr{G},c} \mathscr{P}\inv_{\mathscr{G}} \mu_{\mathscr{G},c} -1) - \mu\t_{N,c} \mathscr{P}\inv_{\mathcal{G}} \mu_{N,c} + \gamma$. Eq. (\ref{eq:slemma2}) still represents a nonconvex matrix inequality in the decision variables $ \tau$, $\gamma$, and $\mu_{N,c}$ due to the presence of $ - \mu\t_{N,c} \mathscr{P}\inv_{\mathcal{G}} \mu_{N,c}$ in $ \mathbb{M}_{2}$. 

We impose an auxiliary constraint $ \tau < 1 $ such that the upper-left block of the above matrix, $ (\tau-1) \mathscr{P}\inv_{\mathcal{G}} \prec 0$ and is invertible. A necessary (and sufficient condition) for (\ref{eq:slemma2}) to hold under the introduced auxiliary constraint is obtained from Proposition~\ref{prop:schur_psd} stated as follows.
\begin{proposition}(\cite{boyd2004convex,horn2012matrix})\label{prop:schur_psd}
    For any symmetric matrix $M$ of the form, $ M = \begin{bmatrix}
        A & B \\ B\t & C
    \end{bmatrix} $, if $A$ is invertible, then $M \preceq 0$ if and only if $ A \prec 0 $ and $ C - B\t A\inv B \preceq 0 $.
\end{proposition}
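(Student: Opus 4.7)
The plan is to prove Proposition~\ref{prop:schur_psd} via a block LDU-type congruence factorization of $M$, which reduces the matrix inequality $M \preceq 0$ to a block-diagonal inequality whose blocks are exactly the conditions stated in the proposition. The key observation is that since $A$ is assumed invertible, the upper-left block can be ``cleared'' by elementary block row and column operations without altering the signature of $M$.

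Concretely, I would first write down the identity
\begin{equation*}
M \;=\;
\begin{bmatrix} I & 0 \\ B\t A\inv & I \end{bmatrix}
\begin{bmatrix} A & 0 \\ 0 & C - B\t A\inv B \end{bmatrix}
\begin{bmatrix} I & A\inv B \\ 0 & I \end{bmatrix},
\end{equation*}
and verify it by direct block multiplication. Letting $L \coloneqq \left[\begin{smallmatrix} I & 0 \\ B\t A\inv & I \end{smallmatrix}\right]$, the right-hand factor is exactly $L\t$, so this is a congruence $M = L D L\t$ with $D$ block-diagonal. Since $L$ is nonsingular (its determinant equals $1$), Sylvester's law of inertia applies: $M \preceq 0$ if and only if $D \preceq 0$, which in turn holds if and only if both diagonal blocks satisfy $A \preceq 0$ and $C - B\t A\inv B \preceq 0$.

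To conclude, I would then upgrade $A \preceq 0$ to $A \prec 0$ using the hypothesis that $A$ is invertible: a negative semidefinite matrix with no zero eigenvalue must in fact be negative definite. Conversely, given $A \prec 0$ and $C - B\t A\inv B \preceq 0$, the block-diagonal matrix $D$ is clearly negative semidefinite, and so is $L D L\t = M$.

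I do not expect any substantial obstacle here — this is essentially the classical Schur complement lemma adapted to the negative semidefinite case. The only subtlety is the sign convention (the usual version is stated for $M \succeq 0$ with $A \succ 0$), and keeping track that invertibility of $A$ together with semidefiniteness forces strict definiteness. The cleanest presentation is to do the ``if'' direction by simply exhibiting the congruence and reading off definiteness block-by-block, and the ``only if'' direction by the same congruence run in reverse.
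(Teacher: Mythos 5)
Your proof is correct: the block-congruence $M = LDL\t$ with $L$ unit lower triangular, combined with Sylvester's law of inertia and the observation that an invertible negative semidefinite matrix is negative definite, establishes both directions cleanly. The paper itself does not prove this proposition but cites it as a standard result from \cite{boyd2004convex,horn2012matrix}, and your argument is essentially the textbook proof given in those references, so there is nothing to reconcile.
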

From Prop.~\ref{prop:schur_psd}, (\ref{eq:slemma2}) holds iff $ \mathbb{M}_{2}(\tau, \mu_{\mathcal{G},c}, \mathscr{P}_{\mathcal{G}}, \mu_{N,c}, \gamma) \preceq 0$. From the definition of $ \mathbb{M}_{2}(\cdot)$, this is equivalent to the constraint $ - \mu\t_{N,c} \mathscr{P}\inv_{\mathcal{G}} \mu_{N,c} \leq -\tau(\mu\t_{\mathscr{G},c} \mathscr{P}\inv_{\mathscr{G}} \mu_{\mathscr{G},c} -1) - \gamma$. We note that $ -\mu\t_{N,c} \mathscr{P}\inv_{\mathcal{G}} \mu_{N,c}$ is a concave function. We linearize $  -\mu\t_{N,c} \mathscr{P}\inv_{\mathcal{G}} \mu_{N,c}  $ around the point $ \mu_{N,c} = \mu_{\mathcal{G},c} $ and use the fact that the tangent line of a concave function at any point provides a universal upper bound to obtain 
$$  -\mu\t_{N,c} \mathscr{P}\inv_{\mathcal{G}} \mu_{N,c} \leq - 2\mu\t_{\mathscr{G},c}\mathscr{P}\inv_{\mathscr{G}}\mu_{N,c} + \mu\t_{\mathcal{G},c} \mathscr{P}\inv_{\mathcal{G}} \mu_{\mathcal{G},c} \ \forall \ \mu_{N,c}.$$ Therefore, 
a sufficient condition to ensure $ \mathbb{M}_{2}(\tau, \mu_{\mathcal{G},c}, \mathscr{P}_{\mathcal{G}}, \mu_{N,c}, \gamma) \preceq 0$ is
$$ - 2\mu\t_{\mathscr{G},c}\mathscr{P}\inv_{\mathscr{G}}\mu_{N,c} + \mu\t_{\mathcal{G},c} \mathscr{P}\inv_{\mathcal{G}} \mu_{\mathcal{G},c} \leq -\tau(\mu\t_{\mathscr{G},c} \mathscr{P}\inv_{\mathscr{G}} \mu_{\mathscr{G},c} -1) - \gamma. $$ 
Thus  
$\mathbb{M}_{3}(\cdot) \coloneqq 
(\tau + 1) \mu\t_{\mathscr{G},c} \mathscr{P}\inv_{\mathscr{G}} \mu_{\mathscr{G},c} - 2\mu\t_{\mathscr{G},c}\mathscr{P}\inv_{\mathscr{G}}\mu_{N,c} - (\tau - \gamma) \leq 0 $ and $ (\tau-1) \mathscr{P}\inv_{\mathcal{G}} \prec 0$ 
are sufficient conditions for (\ref{eq:slemma2}) that are equivalently expressed as the LMI,
\begin{align} \label{eq:slemma_lmi}
    \begin{bmatrix}
    (\tau - 1) \mathscr{P}\inv_{\mathscr{G}} & -\mathscr{P}\inv_{\mathscr{G}} (\tau \mu_{\mathscr{G},c} -\mu_{N,c} ) \\ -(\tau \mu_{\mathscr{G},c} -\mu_{N,c} )\t \mathscr{P}\inv_{\mathscr{G}} 
         & \mathbb{M}_{3}(\tau, \mu_{\mathcal{G},c}, \mathscr{P}_{\mathcal{G}}, \mu_{N,c}, \gamma)
    \end{bmatrix} \preceq 0.
\end{align}
Therefore, the original ellipsoidal mean containment condition (\ref{eq:ellipsoidal_mean_containment_nlp}) has been reduced to (\ref{eq:slemma_lmi}) that represents a LMI in the decision variables $ \tau, \gamma,$ and $\mu_{N,c}$ alongside the constraints $ \gamma, \tau > 0$, and $ \tau < 1$.

The nonlinear program (\ref{eq:J_simp}) is relaxed to the following convex semidefinite program,
\begin{align}
    \min_{ \Sigma_{k}, U_{k}, Y_{k}, \mu_{k,c}, \mathscr{P}_{k}, v_{k} } J_{\mathrm{convex}} \label{eq:J_final}
\end{align}
such that for all $ k = 0, 1, \cdots, N-1$,
\begin{subequations}
\begin{align}
    &C_{k}(\Sigma_{k}, U_{k}, Y_{k}) \preceq 0, \tag{\ref{eq:J_final}a} \label{eq:J_fa} \\
    &G_{k} (\Sigma_{k+1}, \Sigma_{k}, Y_{k}, U_{k}) = 0, \tag{\ref{eq:J_final}b} \label{eq:J_fb} \\
    & \lambda_{\mathrm{max}}(\Sigma_{N}) \leq \lambda_{\mathrm{min}}(\Sigma_{\mathcal{G}}) \tag{\ref{eq:J_final}c} \label{eq: J_fc}, \\ 
    &\mu_{k+1,c} = A \mu_{k,c} + B v_{k}, \ \mathscr{P}_{k+1} = A \mathscr{P}_{k} A\t \tag{\ref{eq:J_final}d} \label{eq:J_fd} \\
    & \mu_{0,c} = \mu_{q},\  \mathscr{P}_{0} = \mathscr{P}_{q} \tag{\ref{eq:J_final}e} \label{eq:J_fe}, \\
    &\begin{bmatrix}
    (\tau - 1) \mathscr{P}\inv_{\mathscr{G}} & -\mathscr{P}\inv_{\mathscr{G}} (\tau \mu_{\mathscr{G},c} -\mu_{N,c} ) \\ -(\tau \mu_{\mathscr{G},c} -\mu_{N,c} )\t \mathscr{P}\inv_{\mathscr{G}} 
         & \mathbb{M}_{3}(\tau, \mu_{\mathcal{G},c}, \mathscr{P}_{\mathcal{G}}, \mu_{N,c}, \gamma)
    \end{bmatrix} \preceq 0, \tag{\ref{eq:J_final}f} \label{eq:J_ff} \\
    & \mathscr{P}_{N} \preceq \gamma \mathscr{P}_{\mathcal{G}}; \ \gamma, \tau > 0 \text{ and } \tau < 1, \tag{\ref{eq:J_final}g} \label{eq:J_fg} \\
    &\ell\t\Sigma_{k}\ell +   \ell\t_{\mathscr{R}} \mathscr{P}_{k} \ell_{\mathscr{R}} +\alpha\t_{x}\mu_{k,c} - \tilde{\beta}_{x} \leq 0, \tag{\ref{eq:J_final}h} \label{eq:J_fh} \\
    &e\t Y_{k} e + \alpha\t_{u}v_{k} - \overline{\beta}_{u} \leq 0, \notag \tag{\ref{eq:J_final}i} \label{eq:J_fi}
\end{align}
\end{subequations}
where $U_{k} = K_{k}\Sigma_{k}$, $C_{k} \triangleq U_{k} \Sigma_{k}\inv U\t_{k} - Y_{k}$, $G_{k} \triangleq A\Sigma_{k}A\t + B U_{k} A\t + A U\t_{k} B\t + B Y_{k} B\t + D D\t - \Sigma_{k+1}$, and $\mathbb{M}_{3}(\tau, \mu_{\mathcal{G},c}, \mathscr{P}_{\mathcal{G}}, \mu_{N,c}, \gamma) \coloneqq (\tau + 1) \mu\t_{\mathscr{G},c} \mathscr{P}\inv_{\mathscr{G}} \mu_{\mathscr{G},c} - 2\mu\t_{\mathscr{G},c}\mathscr{P}\inv_{\mathscr{G}}\mu_{N,c} -(\tau - \gamma)$, and $J_{\mathrm{convex}}$ is any surrogate convex objective obtained from $J$ defined in (\ref{eq:J_relaxed}) by replacing the $ \sup_{\mu_{k} \in \mathscr{R}(\mu_{k,c}, \mathscr{P}_{k})} \mu\t_{k} Q_{k} \mu_{k}$ term, for e.g. with $ \mu\t_{k,c} Q_{k} \mu_{k,c}$ as a surrogate. Note that our problem statement concerns with finding feasible paths to the goal and under any such convex surrogate objective $J_{\mathrm{convex}}$, (\ref{eq:J_final}) recovers a feasible control sequence for the program (\ref{eq:J_simp}) as stated in Theorem~\ref{theorem:relaxation}, the main contribution of this section being the relaxation of the non-convex feasible set of (\ref{eq:J_simp}) to a convex semidefinite program that recovers feasible control sequences for the original problem.
\begin{theorem}\label{theorem:relaxation}
    The solution to the relaxed convex semi-definite program (\ref{eq:J_final}) gives a feasible control sequence for the original nonlinear program (\ref{eq:J_simp}).
\end{theorem}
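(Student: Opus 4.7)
The plan is to show that any feasible solution $(\Sigma_k, U_k, Y_k, \mu_{k,c}, \mathscr{P}_k, v_k, \gamma, \tau)$ of the SDP (\ref{eq:J_final}), together with the recovered feedback gain $K_k = U_k \Sigma_k\inv$, yields controller parameters $(K_k, v_k)$ that satisfy every constraint of the nonlinear program (\ref{eq:J_simp}). Because the two programs share the same decision block for $(\mu_{k,c}, \mathscr{P}_k, v_k)$ and since (\ref{eq:J_fd})--(\ref{eq:J_fe}) coincide verbatim with (\ref{eq:mean_init_nlp})--(\ref{eq:mean_prop_nlp}), I only need to verify the covariance propagation, the terminal covariance bound, the two robust chance constraints, and the ellipsoidal mean containment.

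I would first handle the covariance block. Applying Proposition~\ref{prop:schur_psd} (or the standard Schur lemma) to (\ref{eq:J_fa}) gives $Y_k \succeq U_k \Sigma_k\inv U_k\t = K_k \Sigma_k K_k\t$. Letting $\Sigma_k^{\mathrm{true}}$ denote the covariance obtained by driving (\ref{eq:opt_dynamics}) under $\mathbf{u}_k = K_k(\mathbf{x}_k - \mu_k) + v_k$, I would induct on $k$: assume $\Sigma_k^{\mathrm{true}} \preceq \Sigma_k$, substitute into the true Riccati-like recursion (\ref{eq:covar_prop_nlp}), and use the Lyapunov-type monotonicity of $X \mapsto (A+BK_k)X(A+BK_k)\t$ together with $Y_k \succeq K_k\Sigma_k K_k\t$ and $G_k=0$ from (\ref{eq:J_fb}) to conclude $\Sigma_{k+1}^{\mathrm{true}} \preceq \Sigma_{k+1}$. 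The base case follows from (\ref{eq:J_fe}). Consequently $\lambda_{\max}(\Sigma_N^{\mathrm{true}}) \leq \lambda_{\max}(\Sigma_N) \leq \lambda_{\min}(\Sigma_{\mathcal{G}})$ by (\ref{eq: J_fc}), discharging (\ref{eq:covar_goal_nlp}).

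For the chance constraints, I would use two facts already established in the paper: (i) the tangent line to $\sqrt{x}$ at any $x_0>0$ is a global upper bound (see (\ref{eq: sqrt_concave_ub})); (ii) by the strong-duality argument given in the text, $\sup_{\mu_k \in \mathscr{R}(\mu_{k,c},\mathscr{P}_k)} \alpha_x\t \mu_k = \alpha_x\t\mu_{k,c} + \sqrt{\alpha_x\t \mathscr{P}_k \alpha_x}$, and the same tangent bound controls the square root in $\mathscr{P}_k$. Combining these, the linear constraint (\ref{eq:J_fh}) (written as $\ell\t\Sigma_k\ell + \ell_{\mathscr{R}}\t \mathscr{P}_k \ell_{\mathscr{R}} + \alpha_x\t\mu_{k,c} \leq \tilde{\beta}_x$) upper-bounds the original nonconvex expression in (\ref{eq:state_constraint_nlp}); since $\Sigma_k^{\mathrm{true}} \preceq \Sigma_k$ preserves the quadratic term monotonically, (\ref{eq:state_constraint_nlp}) holds. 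The control chance constraint (\ref{eq:control_constraint_nlp}) is dispatched analogously using $Y_k \succeq K_k \Sigma_k K_k\t$ and the tangent-line relaxation (\ref{eq:Y_relax}).

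The hardest step will be the ellipsoidal mean containment (\ref{eq:ellipsoidal_mean_containment_nlp}), because it daisy-chains several sufficient-condition reductions. I would unroll the chain in reverse: the LMI (\ref{eq:J_ff}), together with $\tau<1$ from (\ref{eq:J_fg}) making the upper-left block negative definite, yields via Proposition~\ref{prop:schur_psd} that $\mathbb{M}_3 \leq 0$, which is precisely the tangent upper bound at $\mu_{N,c}=\mu_{\mathcal{G},c}$ of the concave quadratic $-\mu_{N,c}\t \mathscr{P}_{\mathcal{G}}\inv \mu_{N,c}$; hence the stronger quadratic inequality encoded by $\mathbb{M}_2\leq 0$ in (\ref{eq:slemma2}) holds, which is equivalent via the S-procedure to $\mathscr{R}(\mu_{N,c}, \gamma \mathscr{P}_{\mathcal{G}}) \subseteq \mathscr{R}(\mu_{\mathcal{G},c}, \mathscr{P}_{\mathcal{G}})$. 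Finally, (\ref{eq:J_fg}) also supplies $\mathscr{P}_N \preceq \gamma \mathscr{P}_{\mathcal{G}}$, which gives $\mathscr{R}(\mu_{N,c}, \mathscr{P}_N)\subseteq \mathscr{R}(\mu_{N,c}, \gamma \mathscr{P}_{\mathcal{G}})$. Composing the two inclusions yields (\ref{eq:ellipsoidal_mean_containment_nlp}). The obstacle in this step is purely bookkeeping: confirming that each Schur and each tangent-line relaxation is a \emph{sufficient}, not merely necessary, condition, so that the implication flows in the correct direction from the SDP to the NLP.
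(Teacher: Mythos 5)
Your plan is correct and follows essentially the same route as the paper's proof: the core step in both is the induction showing that the true covariance trajectory under the recovered gains $K_k = U_k\Sigma_k\inv$ is dominated (in the Loewner order) by the SDP variables $\Sigma_k$, using $Y_k \succeq U_k\Sigma_k\inv U_k\t$ from (\ref{eq:J_fa}) and $G_k=0$ from (\ref{eq:J_fb}), followed by eigenvalue monotonicity for the terminal constraint; the remaining pieces (mean containment and robust chance constraints) are exactly the chain of sufficient conditions the paper establishes in Section~\ref{sec:convex_relax} and then cites inside its proof, which you simply unroll explicitly.
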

\begin{proof}
See Appendix~\ref{sec:appendix}.
\end{proof}

\noindent
\subsection{Notation for Feasible Steering Maneuvers} \label{sec:steering_notation}
This subsection introduces the notation that is needed to present our approach to multi-query planning in Section~\ref{sec:approach}. We define an operator $\operatorname{FEASIBLE}$ for the two types of steering maneuvers as discussed below. Both instances of steering maneuvers use similar notation for the operator $\operatorname{FEASIBLE}$ and we assume for all future reference it is clear which notion is being employed from context based on the passed arguments to the operator.

\noindent
\textbf{Steering a distribution to an ambiguity set}: We define an operator $\operatorname{FEASIBLE} \coloneqq \operatorname{FEASIBLE}(q', \mathcal{P}, N)$ for a distribution $q'$ and an ambiguity set $\mathcal{P}$ that returns $\operatorname{TRUE}$ if there exists a feasible control sequence that steers the system from an initial distribution $q'$ to a terminal distribution within the ambiguity set $\mathcal{P}$ in $N$-steps, otherwise returns $\operatorname{FALSE}$. An equivalent mathematical characterization of the operator $\operatorname{FEASIBLE}(q', \mathcal{P}, N)$ is obtained through the feasibility of the system of equations (\ref{eq:opt_dynamics_optsteer})--(\ref{eq:opt_control_constraint_optsteer}) defined for $(\mu_{q}, \Sigma_{q}) = (\mu_{q'}, \Sigma_{q'})$, and $(\mathscr{R}(\mu_{\mathcal{G},c}, \mathscr{P}_{\mathcal{G}}), \Sigma_{\mathcal{G}}) = (\mathscr{R}(\mu_{\mathcal{P},c}, \mathscr{P}_{\mathcal{P}}), \Sigma_{\mathcal{P}})$. 

For an $N$-step control sequence $\mathscr{C}$, the notation $ q' \xrightarrow[N]{\mathscr{C}} \mathcal{P}$ denotes that the system dynamics initialized at $(\mu_{q'}, \Sigma_{q'})$ and driven by the control sequence $\mathscr{C}$ satisfy the state and control chance constraints (\ref{eq:opt_state_constraint_optsteer})--(\ref{eq:opt_control_constraint_optsteer}) and the terminal goal reaching constraint (\ref{eq:opt_goal_mean})--(\ref{eq:opt_goal_covar}) corresponding to the ambiguity set $\mathcal{P}$, i.e. $\Sigma_{\mathcal{G}} = \Sigma_{\mathcal{P}}$ and $ \mathscr{R}_{\mathcal{G}} \coloneqq \mathscr{R}(\mu_{\mathcal{G},c}, \mathscr{P}_{\mathcal{G}}) = \mathscr{R}(\mu_{\mathcal{P},c}, \mathscr{P}_{\mathcal{P}})$.
\newline \noindent
\textbf{Steering between a pair of ambiguity sets}:
We similarly define the operator $\operatorname{FEASIBLE} \coloneqq \operatorname{FEASIBLE}(\mathcal{Q}, \mathcal{P}, N)$ for a pair of ambiguity sets $\mathcal{Q}$ and $\mathcal{P}$ that returns $\operatorname{TRUE}$ if there exists a feasible control sequence that steers the system from any initial distribution in the ambiguity set $\mathcal{Q}$ to a terminal distribution within the ambiguity set $\mathcal{P}$ in $N$-steps, otherwise returns $\operatorname{FALSE}$. An equivalent mathematical characterization of the operator $\operatorname{FEASIBLE}(\mathcal{Q}, \mathcal{P}, N)$ is obtained through the feasibility of (\ref{eq:covar_prop_nlp})--(\ref{eq:control_constraint_nlp}) defined for $ \Sigma_{q} = \Sigma_{\mathcal{Q}}$, $ \mu_{q} = \mu_{\mathcal{Q},c}$, $\mathscr{P}_{q} = \mathscr{P}_{\mathcal{Q}}$, and $ \Sigma_{\mathcal{G}} = \Sigma_{\mathcal{P}}$ and $ \mathscr{R}(\mu_{\mathcal{G},c}, \mathscr{P}_{\mathcal{G}}) = \mathscr{R}(\mu_{\mathcal{P},c}, \mathscr{P}_{\mathcal{P}})$.

Additionally for an $N$-step control sequence $\mathscr{C}$, we use the notation $ \mathcal{Q} \xrightarrow[N]{\mathscr{C}} \mathcal{P} $ to denote that the mean and covariance dynamics initialized at $ ( \mathscr{R}(\mu_{\mathcal{Q},c}, \mathscr{P}_{\mathcal{Q}}), \Sigma_{\mathcal{Q}} ) $ and driven by the $N$-step control sequence $ \mathscr{C} $ satisfy the state and control chance constraints (\ref{eq:state_constraint_nlp})--(\ref{eq:control_constraint_nlp}) and the terminal goal reaching constraints (\ref{eq:covar_goal_nlp}) and (\ref{eq:ellipsoidal_mean_containment_nlp}) corresponding to the ambiguity set $\mathcal{P}$ i.e. $ \Sigma_{\mathcal{G}} = \Sigma_{\mathcal{P}}$ and $\mathscr{R}(\mu_{\mathcal{G},c}, \mathscr{P}_{\mathcal{G}}) = \mathscr{R}(\mu_{\mathcal{P},c}, \mathscr{P}_{\mathcal{P}})$.

\begin{remark}\label{remark:reuse}
(Reuse) Let $\mathscr{C}_{i, \mathcal{G}}$ be a $N$-step control sequence s.t.\ $ (  \mathscr{R}( \mu_{i,c}, \mathscr{P}_{i} ), \Sigma_{i} ) \xrightarrow[N]{\mathscr{C}_{i, \mathcal{G}}} (  \mathscr{R}( \mu_{\mathcal{G},c}, \mathscr{P}_{\mathcal{G}} ) $, then it follows that $ (\mu, \Sigma) \xrightarrow[N]{\mathscr{C}_{i, \mathcal{G}}} (  \mathscr{R}( \mu_{\mathcal{G},c}, \mathscr{P}_{\mathcal{G}} ) $ $\forall \ \mu \in \mathscr{R}( \mu_{i,c}, \mathscr{P}_{i}), \Sigma \preceq \Sigma_{i}$. 

In other words, a distributionally robust control sequence $ \mathscr{C}_{i, \mathcal{G}} $ computed for a pair of ambiguity sets (initial and goal) provides a feasible control sequence for steering any distribution with the mean instantiated from within the ambiguity set of the first-order moment of the initial distribution, and covariance less than or equal to the covariance of the ambiguity set of the initial distribution (in the sense of the Loewner order) to the goal ambiguity set.
\end{remark}
Remark~\ref{remark:reuse} follows directly from an analysis of (\ref{eq:covar_prop_nlp})--(\ref{eq:control_constraint_nlp}) and (\ref{eq:J_fa})--(\ref{eq:J_fi}). It motivates the design of \textit{maximally} distributionally robust controllers that add edges from the goal in a backwards fashion to discover paths from the \textit{largest} possible set of initial distributions. $\operatorname{MAXELLIPSOID}$ and $\operatorname{MAXCOVARELL}$ are two instantiations of \textit{maximally} distributionally robust controllers that allow maximum reuse of computation for the first-order moment and the second-order moment respectively. Both these controller formulations, as discussed in Section~\ref{sec:approach}, given a goal ambiguity set, search for a maximal size ambiguity set of the initial distribution and the corresponding control sequence that achieves the steering maneuver through novel optimization programs.

\subsection{Definition of $h$-$\operatorname{BRS}$} \label{sec:hBRS}
We now define the following mathematical objects that will aid the analysis and discussion of our proposed approach in Section~\ref{sec:approach}.
\begin{definition}\label{def:hbrs_dist}[\textit{$h$-$\operatorname{BRS}$ of a distribution $p$}] The $h$-hop backward reachable set of distributions ($h$-$\operatorname{BRS}$) of a distribution $p$ is defined as the set of all initial distributions for which there exists a feasible control sequence that can steer the system to $p$ in $hN$ timesteps. A mathematical characterization of $h$-$\operatorname{BRS}(p)$ is obtained as follows,
\begin{align}
    h\text{-}\operatorname{BRS}(p) =  \{ (\mu_{q}, \Sigma_{q}) \ \vert \ \operatorname{FEASIBLE}(q, p, hN) = \operatorname{TRUE} \}. \notag
\end{align}
\end{definition}
\noindent
\textbf{Remark}: We note that the operator $\operatorname{FEASIBLE}$ can be defined for a pair of distributions through the feasibility of the same system of equations (\ref{eq:opt_dynamics_optsteer})--(\ref{eq:opt_control_constraint_optsteer}) as done in Section~\ref{sec:steering_notation}, the only difference being the terminal mean constraint (\ref{eq:opt_goal_mean}), which for the case of a pair of distributions requires exact terminal mean reaching instead of set containment. For a pair of distributions $(q, p)$, the goal reaching constraints become $ \mu_{N} = \mu_{p}$ and $ \Sigma_{\mathcal{G}} = \Sigma_{p}$.

We can similarly define the $h$-hop backward reachable set of distributions for an ambiguity set $\mathcal{P}$ as follows,
\begin{definition}\label{def:hbrs_ambdist}[\textit{$h$-$\operatorname{BRS}$ of an ambiguity set $\mathcal{P}$}] The $h$-hop backward reachable set of distributions ($h$-$\operatorname{BRS}$) of an ambiguity set $\mathcal{P}$ is defined as the set of all initial distributions for which there exists a feasible control sequence that can steer the system to $\mathcal{P}$ in $hN$ timesteps. A mathematical characterization of $h$-$\operatorname{BRS}(\mathcal{P})$ is obtained as follows,
\begin{align}
    h\text{-}\operatorname{BRS}(\mathcal{P}) =  \{ (\mu_{q}, \Sigma_{q}) \ \vert \ \operatorname{FEASIBLE}(q, \mathcal{P}, hN) = \operatorname{TRUE} \}. \notag
\end{align}
\end{definition}
We further define the $h$-$\operatorname{BRS}$ of a tree of distributions $\mathcal{T}$ as follows (for more details on the tree data structure, see Section~\ref{sec:approach}),
\begin{definition}\label{def:hbrs_treedist}[\textit{$h$-$\operatorname{BRS}$ of a tree of distributions $\mathcal{T}$}] The $h$-hop backward reachable set of distributions ($h$-$\operatorname{BRS}$) of a tree of distributions $\mathcal{T}$ is defined as the set of all initial distributions for which there exists a feasible control sequence that can steer the system to the distribution stored at the root node of the tree in $hN$ timesteps. A mathematical characterization of $h$-$\operatorname{BRS}(\mathcal{T})$ is obtained as follows,
\begin{align}
        h\text{-}\operatorname{BRS}(\mathcal{T}) = \bigcup_{i \in \nu(\mathcal{T})} (h - d_{i})\text{-}\operatorname{BRS}(\nu_{i}), \notag
    \end{align}
\end{definition}
\noindent
where $\nu(\mathcal{T})$ is the set of all vertices in the tree, and $d_{i} $ is the distance of the $i$-th node from the root node in terms of the number of hops. Note that we can define the $h$-$\operatorname{BRS}$ of a tree of ambiguity sets in a similar fashion as Definition~\ref{def:hbrs_treedist}.
Further note that, while the approach in this paper concerns growing a tree of ambiguity sets of distributions as developed in Section~\ref{sec:approach}, Definitions~\ref{def:hbrs_dist} and~\ref{def:hbrs_treedist} are key to the analysis of the (Maximum Coverage) Theorem~\ref{theorem:coverage} stated in \cite{aggarwal2024sdp} and as proved in Section~\ref{sec:proof}.

\section{$\operatorname{MAXELLIPSOID}$ BACKWARD REACHABLE TREE (BRT) FOR MULTI-QUERY PLANNING}\label{sec:approach}

We solve Problem~\ref{problem:problem_eng} by constructing a backward reachable tree of ambiguity sets of distributions that verifiably reach the goal $\mathcal{G}$ under constraints on the control input. As discussed previously, in the presence of control constraints, existence of a control sequence that steers the system from an initial distribution to a (ambiguity set) goal distribution is established by solving a feasibility problem. The size of this feasibility check / finding a feasible path scales with the time-horizon and the backward reachable tree enables a faster feasibility check on a long time-horizon by finding feasible paths to any existing node on the BRT instead of directly checking feasibility against the goal distribution.

We refer to this idea as \textit{recursive feasibility} since the branches of the tree can be thought of as carrying a certificate of feasibility along its edges from the goal in a backwards fashion s.t. guaranteeing feasibility to any of the children nodes in the tree guarantees feasibility to all the upstream parent nodes and consequently the root node that corresponds to the (ambiguity set) goal distribution.

We introduce a novel objective function $\operatorname{MAXELLIPSOID}$, as discussed in the following subsection, for adding nodes (\textit{maximal} size ambiguity sets) and constructing corresponding edge controllers  
(\textit{maximally} distributionally robust covariance steering controllers).
\subsection{$\operatorname{MAXELLIPSOID}$: Novel Objective for Construction of the Edge Controller} \label{sec:max_covar}
We define a procedure to construct an $N$-step edge controller as follows. The procedure $\operatorname{MAXELLIPSOID}$ takes in as input the center of the candidate ellipsoidal ambiguity set corresponding to the initial mean  $\mu_{q}$, a candidate initial covariance $\Sigma_{q}$, a goal ambiguity set at the end of the $N$-step steering maneuver $ (\mathscr{R}(\mu_{\mathcal{G},c}, \mathscr{P}_{\mathcal{G}}), \Sigma_{\mathcal{G}})$ and computes a maximal ellipsoid shape matrix $ \mathscr{P}_{0}$, henceforth referred to as $\mathscr{P}_{0, \mathrm{max}}$, and the associated control sequence $\mathscr{C} \coloneqq \{ \mathscr{C}_{k} \}_{k=0}^{N-1}$ that achieves the corresponding distributionally robust steering maneuver. The control at time $k$ is a tuple of the feedback and the feedforward term s.t.\ $\mathscr{C}_{k} = ( K_{k}, v_{k} )$.
\begin{align}
    \min_{ \Sigma_{k}, U_{k}, Y_{k}, \mu_{k,c}, \mathscr{P}_{k}, v_{k} } J_{\mathrm{MAXELLIPSOID}} = -\log\det(\mathscr{P}_{0}) \tag{$\operatorname{MAXELLIPSOID}$} \label{eq:J_maxellipsoid}
\end{align}
such that for all $k = 0, 1, \cdots, N - 1$,
\begin{subequations}
\begin{align}
    &\mu_{k+1,c} = A \mu_{k,c} + B v_{k}, \ \mathscr{P}_{k+1} = A \mathscr{P}_{k} A\t, \\
    &\mu_{0,c} = \mu_{q}, \\
    &\begin{bmatrix}
    (\tau - 1) \mathscr{P}\inv_{\mathscr{G}} & -\mathscr{P}\inv_{\mathscr{G}} (\tau \mu_{\mathscr{G},c} -\mu_{N,c} ) \\ -(\tau \mu_{\mathscr{G},c} -\mu_{N,c} )\t \mathscr{P}\inv_{\mathscr{G}} 
         & \mathbb{M}_{3}(\tau, \mu_{\mathcal{G},c}, \mathscr{P}_{\mathcal{G}}, \mu_{N,c}, \gamma)
    \end{bmatrix} \preceq 0,\\
    &\mathscr{P}_{N} \preceq \gamma \mathscr{P}_{\mathcal{G}}; \ \gamma, \tau > 0 \text{ and } \tau < 1,\\
    &\ell\t\Sigma_{k}\ell +   \ell\t_{\mathscr{R}} \mathscr{P}_{k} \ell_{\mathscr{R}} +\alpha\t_{x}\mu_{k,c} - \overline{\beta}_{x} \leq 0, \\
    &e\t Y_{k} e + \alpha\t_{u}v_{k} - \overline{\beta}_{u} \leq 0, \\
    &\text{subject to (\ref{eq:J_fa})--(\ref{eq: J_fc})}, \notag
\end{align}
\end{subequations}
where the $ -\log\det(\mathscr{P})$ operator is proportional to the volume of the ellipsoid with the shape matrix $\mathscr{P}$. $\log\det(\mathscr{P})$ is a concave function of the positive semidefinite matrix variable $\mathscr{P}$, therefore \ref{eq:J_maxellipsoid} is a convex semidefinite program that can be solved by existing solvers efficiently \cite{cvxpy2016}. For all future references, we use the following notation to denote the solution returned by the $\operatorname{MAXELLIPSOID}$ program (shorthand $\operatorname{MAXELL}$), $$\mathscr{P}_{0, \mathrm{max}}, \mathscr{C}_{\mathrm{max}} \longleftarrow \operatorname{MAXELLIPSOID}( \mu_{q}, \Sigma_{q}, \mu_{\mathcal{G},c}, \mathscr{P}_{\mathcal{G}}, \Sigma_{\mathcal{G}}, N).$$
$\operatorname{MAXELLIPSOID}$ searches for a maximal volume backward reachable ellipsoidal ambiguity set $\mathscr{P}_{0, \mathrm{max}}$ for a given initial covariance $\Sigma_{q}$ and the corresponding steering controller $\mathscr{C}_{\mathrm{max}}$ that reaches a goal ambiguity set in $N$-steps under control constraints. The computed edge controller provides a feasible control sequence for a family of distributions at once such that feasible paths to the goal are found for any query mean instantiated from within the computed maximal size ambiguity set.
\noindent
\subsection{Bi-level search for a Maximally Distributionally Robust Controller}\label{sec:bilevel}
We define a new optimization program $\operatorname{MAXCOVARELL}$ that takes in as input an ellipsoidal ambiguity set corresponding to the initial mean $\mathscr{R}(\mu_{0,c}, \mathscr{P}_{0})$, an ellipsoidal ambiguity set for the goal mean $\mathscr{R}(\mu_{\mathcal{G},c}, \mathscr{P}_{\mathcal{G}})$, and a goal covariance $\Sigma_{\mathcal{G}}$, and returns the initial covariance $\Sigma_{0}$ in a \textit{maximal} sense, henceforth referred to as $\Sigma_{0, \mathrm{max}}$, and an $N$-step control sequence through the following optimization program,
\begin{align}
    \min_{ \Sigma_{k}, U_{k}, Y_{k}, \mu_{k,c}, \mathscr{P}_{k}, v_{k} } J_{\mathrm{MAXCOVARELL}} = -\lambda_{\mathrm{min}}(\Sigma_{0}) \tag{$\operatorname{MAXCOVARELL}$} \label{eq:J_maxcovarell}
\end{align}
such that for all $k = 0, 1, \cdots, N - 1$,
\begin{subequations}
\begin{align}
    &C_{k}(\Sigma_{k}, U_{k}, Y_{k}) \preceq 0, \\
    &G_{k} (\Sigma_{k+1}, \Sigma_{k}, Y_{k}, U_{k}) = 0, \\
    & \lambda_{\mathrm{max}}(\Sigma_{N}) \leq \lambda_{\mathrm{min}}(\Sigma_{\mathcal{G}}), \\
    &\ell\t\Sigma_{k}\ell +   \ell\t_{\mathscr{R}} \mathscr{P}_{k} \ell_{\mathscr{R}} +\alpha\t_{x}\mu_{k,c} - \overline{\beta}_{x} \leq 0, \\
    &e\t Y_{k} e + \alpha\t_{u}v_{k} - \overline{\beta}_{u} \leq 0, \\
    &\text{subject to (\ref{eq:J_fd})--(\ref{eq:J_fg}).}
\end{align}
\end{subequations}

\noindent
$\operatorname{MAXCOVARELL}$ is based on the maximization of the minimum eigenvalue of $\Sigma_{0}$ for a given ellipsoidal ambiguity set for the initial mean, and a goal ambiguity set. It is based on Remark~\ref{remark:reuse} and the observation that if $ \mathcal{Q} \xrightarrow[N]{\mathscr{C}} \mathcal{G} $, then $ \mathcal{Q}' \xrightarrow[N]{\mathscr{C}} \mathcal{G} \ \ \forall \ \ \mathcal{Q}' \text{ s.t. } \mathscr{R}_{\mathcal{Q}'} \subset \mathscr{R}_{\mathcal{Q}} \text{ and } \Sigma_{\mathcal{Q}'} \prec \Sigma_{\mathcal{Q}} $. Therefore for a given ellipsoidal ambiguity set for the initial mean, we aim to find a maximal covariance and an associated control sequence  $ (\Sigma_{0,\mathrm{max}}, \mathscr{C})$ such that the computed $\mathscr{C}$ could be reused across largest family of initial distributions, i.e., find $ \Sigma_{0} $ such that the set $ \{ \Sigma_{0}' \ \vert \ \Sigma_{0}' \prec \Sigma_{0}  \} $ is as large as possible. This leads to the maximization of the minimum eigenvalue of the initial covariance as a natural objective function for our search.

$\operatorname{MAXCOVARELL}$ provides a certificate of reachability for any goal ambiguity set in terms of the maximum permissible value of the minimum eigenvalue of the covariance at any query ellipsoidal ambiguity set for the initial mean for which there exists a feasible control sequence that can achieve the corresponding steering maneuver under control constraints. For instance, consider a goal $ (\mathscr{R}(\mu_{\mathcal{G},c}, \mathscr{P}_{\mathcal{G}}), \Sigma_{\mathcal{G}}) $ and a query ellipsoidal set for the initial mean $\mathscr{R}(\mu_{q}, \mathscr{P}_{q})$, and let $ \Sigma_{0,\mathrm{max}}$, $ \mathscr{C}_{0, \mathrm{max}} $ be such that,
\begin{align*}
    \Sigma_{0, \mathrm{max}}, \mathscr{C}_{0, \mathrm{max}} \longleftarrow \operatorname{MAXCOVARELL}( \mu_{q}, \mathscr{P}_{q},  \mu_{\mathcal{G},c}, \mathscr{P}_{\mathcal{G}} , N).
\end{align*}
It follows from the above that $\forall \ \Sigma \succ 0$ s.t.\ $ \lambda_{\mathrm{min}}(\Sigma) > \lambda_{\mathrm{min}}(\Sigma_{0,\mathrm{max}})$, $ \nexists \ \mathscr{C}$ s.t.\ $ (\mathscr{R}(\mu_{q}, \mathscr{P}_{q}), \Sigma) \xrightarrow[N]{\mathscr{C}} \mathcal{G}$ by definition of $\operatorname{MAXCOVARELL}$ otherwise $ \lambda_{\mathrm{min}} (\Sigma_{0, \mathrm{max}}) $ is not the maximum possible minimum eigenvalue of the initial covariance for the existence of a feasible path and we arrive at a contradiction.
\newline \noindent \textbf{Remark} (\textit{Bi-level search for a maximally distributionally robust controller}) Note that $\operatorname{MAXELLIPSOID}$ and $\operatorname{MAXCOVARELL}$ construct maximal size ambiguity sets and corresponding edge controllers in the first-order and the second-order moment sense respectively. To facilitate the search for a maximally distributionally robust controller, we could follow a bi-level search protocol wherein starting from a given initial covariance, a $\operatorname{MAXELLIPSOID}$ search for a maximal ellipsoid is followed up with a $ \operatorname{MAXCOVARELL} $ procedure to search for a maximal initial covariance corresponding to the discovered maximal ellipsoid.
\subsection{Construction of the MAXELLIPSOID BRT}\label{sec:construction_brt}
The algorithm proceeds by building a backwards reachable tree $\mathcal{T(\mathcal{G})}$ from the goal ambiguity set $\mathcal{G}$ by populating a set of nodes $\{\mathcal{\nu}_{i}\}$, and a set of edge controllers $\{\varepsilon_{i,j}\}$. Each node $i$, $ \nu_{i} $, is a tuple $ \left( \mu_{i,c}, \mathscr{P}_{i}, \Sigma_{i}, p_{i}, \mathscr{C}_{i}, \mathrm{ch}_{i} \right) $ where $ (\mu_{i,c}, \mathscr{P}_{i})$ correspond to the ambiguity set, and $ \Sigma_{i}$ is the covariance stored at the node, $p_{i}$ is a pointer to the parent node, $ \mathscr{C}_{i} \coloneqq \{ K^{i, p_{i}}_{t}, v^{i, p_{i}}_{t} \}_{t=0}^{N-1} $ is the $N$-step control sequence stored at the node that steers the state distribution from $ (\mathscr{R}(\mu_{i,c}, \mathscr{P}_{i}), \Sigma_{i}) $ to $ (\mathscr{R}(\mu_{p_{i},c}, \mathscr{P}_{p_{i}}), \Sigma_{p_{i}}) $, and $ \mathrm{ch}_{i} $ is the list of pointers of all the children node of node $i$ in the tree $\mathcal{T}(\mathcal{G})$.

The variable $ \varepsilon $ is another data structure that stores all the edge information for the tree $\mathcal{T}(\mathcal{G})$, such that, $ \varepsilon_{i,j} \coloneqq \{ K^{i,j}_{t}, v^{i,j}_{t} \}_{t=0}^{N-1} $ stores the $N$-step control sequence that steers the system from node $i$ to node $j$ if such an edge exists, and is empty otherwise.
\begin{algorithm}[t]
\caption{Constructing the MAXELLIPSOID BRT}
\label{alg:construct_brt}
\begin{algorithmic}[1]
    \Require $\mathcal{G}$, $N$, $n_\mathrm{iter}$ 
    \Ensure  $\mathcal{T}$ 
    \State $\nu \gets \phi$, $\varepsilon \gets \phi$
    \State $ \nu_{0} \gets \operatorname{CREATENODE}(\mu_{\mathcal{G},c}, \mathscr{P}_{\mathcal{G}}, \Sigma_{\mathcal{G}}, \operatorname{NONE}, \operatorname{NONE}, \{\})$
    \State $\nu \gets \nu \cup \{\nu_{0}\}$
    \For{$i \gets 1 \textrm{ to } n_\mathrm{iter}$}
        \State 
        $\nu_{k} \gets \operatorname{RAND}(\nu)$ 
        \State $ \mu_{q} \gets \operatorname{RANDMEANAROUND}(\nu_{k}, r_{\textrm{sample}}) $
        \State $ \textrm{status}, \mathscr{P}_{\mathrm{max}}, \mathscr{C} \gets \operatorname{MAXELL}(\mu_{q}, \Sigma_{q}, \mu^{(k)}, \mathscr{P}_{k}, \Sigma^{(k)}, N)$
        \If{$\textrm{status} \neq \textrm{infeasible}$}
            \State $ \textrm{idx} \gets \operatorname{size}(\mathcal{\nu}) + 1 $
            \State $ \mathrm{ch}_{\mathrm{idx}} \gets \{\}$
            \State $ \nu_{\mathrm{new}} \gets \operatorname{CREATENODE}(\mu_{q}, \mathscr{P}_{\mathrm{max}}, \Sigma_{q}, \textrm{idx}, k, \mathrm{ch}_{\mathrm{idx}})$
            \State $ \nu \gets \nu \bigcup \{ \nu_{\mathrm{new}} \} $
            \State $ \mathrm{ch}_{k} \gets \mathrm{ch}_{k} \bigcup \{ \textrm{idx}\} $
            \State $ \varepsilon_{\textrm{idx},k} \gets \mathscr{C} $
            \State $ \varepsilon \gets \varepsilon \bigcup \{ \varepsilon_{\textrm{idx},k} \} $
        \EndIf
    \EndFor
    \State $  \mathcal{T} \gets \{ \nu, \varepsilon \} $
    \State $ \textrm{\textbf{Return}} \ \  \mathcal{T}$
\end{algorithmic}
\end{algorithm}
The following discusses the essential subroutines of this algorithm.
\subsubsection{Node selection}
The tree is grown in the spirit of finding paths from all query initial distributions for which paths would exist to the goal ambiguity set. In our implementation, the nodes are selected randomly according to the Voronoi bias (of the first order moment of the nodes) to bias population of the $1$-$\operatorname{BRS}$ of the node ambiguity sets (see Definition~\ref{def:hbrs_ambdist}) whose corresponding Voronoi regions are relatively unexplored in the sense of the first order moment.

\subsubsection{Node expansion}
Once a node to expand has been selected, a query mean is sampled from a neighbourhood of some radius around it and a connection is attempted through the $\operatorname{MAXELLIPSOID}$ method for edge construction. Let's say the $k$th node on the tree containing the ambiguity set $ ( \mathscr{R}(\mu^{(k)}, \mathscr{P}_{k}), \Sigma^{(k)})$ has been selected to expand, and let $ \mu_{q} $ be the query mean sampled from a neighbourhood around $ \mu^{(k)} $ through the $\operatorname{RANDMEANAROUND}(., r)$ module where $r$ is some sampling radius. We solve the following optimization to construct the edge,
\begin{align}
    \mathscr{P}_{\mathrm{max}}, \mathscr{C}_{\mathrm{max}} \longleftarrow \operatorname{MAXELLIPSOID}( \mu_{q}, \Sigma_{q}, \mu^{(k)}, \mathscr{P}_{k}, \Sigma^{(k)}, N)
\end{align}
where $ \Sigma_{q}$ is an initial covariance chosen by some heuristic (fixed for all candidate nodes in our implementation). 
$(\mathscr{R}(\mu_{q}, \mathscr{P}_{\mathrm{max}}), \Sigma_{q})$ is added as a node to the tree with the edge controller $\mathscr{C}_{\mathrm{max}}$ and $(\mathscr{R}(\mu^{(k)}, \mathscr{P}_{k}), \Sigma^{k})$ as the parent if the status of the above optimization problem (as returned by the solver) is not infeasible. Note that the above edge construction procedure can be \textit{augmented} by an additional step of optimization as a bi-level search for the maximally distributionally robust controller through $\operatorname{MAXCOVARELL}$ as described in Section~\ref{sec:bilevel}.

\noindent
\textbf{Concatenation of control sequences}: A concatenation $\mathscr{C}_{A,B}$ of two control sequences $ \mathscr{C}_{A} $, $\mathscr{C}_{B}$ of lengths $N_{A}, N_{B}$ respectively is a control sequence of length $N_{A} + N_{B}$ represented through the $\bigcup$ operator i.e.\ $\mathscr{C}_{A,B} \coloneqq \mathscr{C}_{A} \bigcup \mathscr{C}_{B}$, s.t.\ $\mathscr{C}_{A,B}(t) \coloneqq C_{A}(t) \ \forall \ t = 0, 1, \cdots N_{A} - 1$, and $\mathscr{C}_{A,B}(t) \coloneqq C_{B}(t-N_{A}) \ \forall \ t = N_{A}, N_{A} + 1, \cdots N_{A} + N_{B} - 1$. Note that the concatenation operator $\bigcup$ is \textit{non-commutative} in the two argument control sequences, i.e.\ $ \mathscr{C}_{A} \bigcup \mathscr{C}_{B} \neq \mathscr{C}_{B} \bigcup \mathscr{C}_{A} $.

To establish / guarantee feasibility of the query distribution to the goal ambiguity set, it is sufficient to guarantee feasibility to any node ambiguity set that is already verified to reach the goal. This is formalized as the following lemma on sequential composition of control sequences ensuring satisfaction of state and control chance constraints along the overall concatenated trajectory (see Fig.~\ref{fig:intro_picture}).
\begin{lemma}\label{lemma:sequential_composition}(Recursive Feasibility through Sequential Composition) Consider the distribution $ (\mu_{i}, \Sigma_{i})$, the ambiguity sets $(\mathscr{R}(\mu_{j,c}, \mathscr{P}_{j}), \Sigma_{j})$ and $(\mathscr{R}(\mu_{k,c}, \mathscr{P}_{k}), \Sigma_{k})$, control sequences $\mathscr{C}_{i,j}, \mathscr{C}_{j,k}$, and $h_{i,j}, h_{j,k}$ such that \ $ (\mu_{i}, \Sigma_{i}) \xrightarrow[h_{i,j} N]{\mathscr{C}_{i,j}} (\mathscr{R}(\mu_{j,c}, \mathscr{P}_{j}), \Sigma_{j}) $, and $ (\mathscr{R}(\mu_{j,c}, \mathscr{P}_{j}), \Sigma_{j}) \xrightarrow[h_{j, k} N]{\mathscr{C}_{j, k}} (\mathscr{R}(\mu_{k,c}, \mathscr{P}_{k}), \Sigma_{k}) $. It follows that $ (\mu_{i}, \Sigma_{i}) \xrightarrow[(h_{i,j} + h_{j, k}) N]{\mathscr{C}_{i,k}} (\mathscr{R}(\mu_{k,c}, \mathscr{P}_{k}), \Sigma_{k})$ where $ \mathscr{C}_{i,k} \coloneqq \mathscr{C}_{i,j} \bigcup \mathscr{C}_{j,k} $.
\end{lemma}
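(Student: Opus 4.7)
The plan is to unwind the definitions of the two hypotheses and glue them together using the reuse property stated in Remark~\ref{remark:reuse}. The first hypothesis $(\mu_{i}, \Sigma_{i}) \xrightarrow[h_{i,j}N]{\mathscr{C}_{i,j}} (\mathscr{R}(\mu_{j,c}, \mathscr{P}_{j}), \Sigma_{j})$ tells us, by the definition given in Section~\ref{sec:steering_notation}, that (i) the state and control chance constraints hold at every step $t = 0, 1, \ldots, h_{i,j}N-1$ of the first segment, and (ii) the resulting mean $\mu_{h_{i,j}N}$ and covariance $\Sigma_{h_{i,j}N}$ obtained by propagating $(\mu_{i}, \Sigma_{i})$ through $\mathscr{C}_{i,j}$ satisfy $\mu_{h_{i,j}N} \in \mathscr{R}(\mu_{j,c}, \mathscr{P}_{j})$ and $\lambda_{\mathrm{max}}(\Sigma_{h_{i,j}N}) \leq \lambda_{\mathrm{min}}(\Sigma_{j})$, which in particular gives $\Sigma_{h_{i,j}N} \preceq \Sigma_{j}$ in the Loewner order.

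First I would treat the intermediate distribution $(\mu^{\star}, \Sigma^{\star}) \coloneqq (\mu_{h_{i,j}N}, \Sigma_{h_{i,j}N})$ as the initial condition for the second segment driven by $\mathscr{C}_{j,k}$. The second hypothesis $(\mathscr{R}(\mu_{j,c}, \mathscr{P}_{j}), \Sigma_{j}) \xrightarrow[h_{j,k}N]{\mathscr{C}_{j,k}} (\mathscr{R}(\mu_{k,c}, \mathscr{P}_{k}), \Sigma_{k})$ is a distributionally robust statement: by Remark~\ref{remark:reuse}, the controller $\mathscr{C}_{j,k}$ provides a feasible steering sequence for every initial distribution $(\mu, \Sigma)$ with $\mu \in \mathscr{R}(\mu_{j,c}, \mathscr{P}_{j})$ and $\Sigma \preceq \Sigma_{j}$. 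Since we just verified that $(\mu^{\star}, \Sigma^{\star})$ satisfies exactly these inclusions, applying $\mathscr{C}_{j,k}$ to $(\mu^{\star}, \Sigma^{\star})$ (i) respects the state and control chance constraints over the next $h_{j,k}N$ steps, and (ii) drives the resulting mean into $\mathscr{R}(\mu_{k,c}, \mathscr{P}_{k})$ with terminal covariance dominated by $\Sigma_{k}$, i.e.\ $(\mu^{\star}, \Sigma^{\star}) \xrightarrow[h_{j,k}N]{\mathscr{C}_{j,k}} (\mathscr{R}(\mu_{k,c}, \mathscr{P}_{k}), \Sigma_{k})$.

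Next I would observe that because the propagation is Markovian in the augmented mean/covariance state and the concatenated sequence $\mathscr{C}_{i,k} = \mathscr{C}_{i,j} \bigcup \mathscr{C}_{j,k}$ is defined to apply $\mathscr{C}_{i,j}$ on $[0, h_{i,j}N)$ and $\mathscr{C}_{j,k}$ on $[h_{i,j}N, (h_{i,j}+h_{j,k})N)$, the trajectory generated by $\mathscr{C}_{i,k}$ from $(\mu_{i}, \Sigma_{i})$ coincides with the first-segment trajectory on $[0, h_{i,j}N]$ and with the second-segment trajectory (launched from $(\mu^{\star}, \Sigma^{\star})$) on $[h_{i,j}N, (h_{i,j}+h_{j,k})N]$. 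Consequently the chance constraints hold at every step on the concatenated horizon and the terminal conditions match those of the second segment, yielding the desired $(\mu_{i}, \Sigma_{i}) \xrightarrow[(h_{i,j}+h_{j,k})N]{\mathscr{C}_{i,k}} (\mathscr{R}(\mu_{k,c}, \mathscr{P}_{k}), \Sigma_{k})$.

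The main obstacle, which should be flagged carefully but is not deep, is ensuring that the intermediate distribution output by the first segment genuinely lands in the ambiguity set that was assumed as the \emph{initial} set for the second segment; this is precisely where Remark~\ref{remark:reuse} is indispensable, because without distributional robustness of $\mathscr{C}_{j,k}$ the second hypothesis would only certify feasibility for initial distributions with mean exactly $\mu_{j,c}$ and covariance exactly $\Sigma_{j}$, not for the actual $(\mu^{\star}, \Sigma^{\star})$ produced by $\mathscr{C}_{i,j}$. Everything else is bookkeeping on the definitions in Section~\ref{sec:steering_notation}.
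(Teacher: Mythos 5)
Your proposal is correct and follows essentially the same route as the paper's proof: propagate the first segment to obtain $\mu_{h_{i,j}N} \in \mathscr{R}(\mu_{j,c}, \mathscr{P}_{j})$ and $\Sigma_{h_{i,j}N} \preceq \Sigma_{j}$ (via the $\lambda_{\mathrm{max}}$--$\lambda_{\mathrm{min}}$ terminal condition), then invoke Remark~\ref{remark:reuse} to certify the second segment from that intermediate distribution, and conclude by concatenation. No gaps.
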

\begin{proof}
    We want to prove that the system initialized at the distribution $( \mu_{i}, \Sigma_{i} )$ and driven by the control sequence $ \mathscr{C}_{i,k} $ which is obtained from the concatenation of the two control sequences $ \mathscr{C}_{i,j} $, and $ \mathscr{C}_{j,k}$ i.e.\ $ \mathscr{C}_{i,k} = \mathscr{C}_{i,j} \bigcup \mathscr{C}_{j,k} $ satisfies all the state and control chance constraints of the form (\ref{eq:opt_state_constraint_optsteer})--(\ref{eq:opt_control_constraint_optsteer}) for a trajectory of length $ (h_{i,j} + h_{j,k}) N $, and the terminal goal reaching constraints of the form (\ref{eq:opt_goal_mean}) and (\ref{eq:opt_goal_covar}) for the goal ambiguity set $(\mathscr{R}(\mu_{k,c}, \mathscr{P}_{k}), \Sigma_{k})$.
    
    The mean and covariance dynamics at any time $t$ as a function of the intial distribution and the control sequence are represented as $ \mu_{t}(\mu_{i}, \mathscr{C}_{i,k})$ and  $ \Sigma_{t}(\Sigma_{i}, \mathscr{C}_{i,k})$ respectively.
    
    For $t = 0, 1, \cdots h_{i,j} N$, $ \mu_{t}(\mu_{i}, \mathscr{C}_{i,k}) = \mu_{t}(\mu_{i}, \mathscr{C}_{i,j}) $, and $ \Sigma_{t}(\Sigma_{i}, \mathscr{C}_{i,k}) = \Sigma_{t}(\Sigma_{i}, \mathscr{C}_{i,j}) $. Since $ \mathscr{C}_{i,j} $ is a feasible control sequence for the $(i,j)$ maneuver s.t.\ $ (\mu_{i}, \Sigma_{i}) \xrightarrow[h_{i, j} N]{\mathscr{C}_{i,j}} (\mathscr{R}(\mu_{j,c}, \mathscr{P}_{j}), \Sigma_{j}) $, all the state and control chance constraints are satisfied by $ \mu_{t}( \mu_{i}, \mathscr{C}_{i,j}) $ and $\Sigma_{t}( \Sigma_{i}, \mathscr{C}_{i,j}) $ for $t = 0, 1, \cdots h_{i, j} N$. Also from the goal reaching constraints, $\mu_{h_{i, j} N} \coloneqq \mu_{h_{i, j} N} (\mu_{i}, \mathscr{C}_{i,j}) \in \mathscr{R}(\mu_{j,c}, \mathscr{P}_{j})$, and $ \Sigma_{h_{i, j} N} \coloneqq \Sigma_{h_{i, j} N} (\Sigma_{i}, \mathscr{C}_{i,j})$ is s.t.\ $ \lambda_{\mathrm{max}}( \Sigma_{h_{i, j} N} ) \leq \lambda_{\mathrm{min}}( \Sigma_{j} )$ which implies $ \Sigma_{h_{i, j} N} \preceq \Sigma_{j} $.

    The rest of the maneuver for $t = h_{i,j }N + 1, \cdots (h_{i,j} + h_{j,k}) N$ could be thought of as an $N_{j,k}$ step maneuver initialized at $ \mu_{h_{i,j}N} $ and $ \Sigma_{h_{i,j}N} $. $ \mathscr{C}_{j,k} $ is s.t.\ $ (\mathscr{R}(\mu_{j,c}, \mathscr{P}_{j}), \Sigma_{j}) \xrightarrow[h_{j, k} N]{\mathscr{C}_{j, k}} (\mathscr{R}(\mu_{k,c}, \mathscr{P}_{k}), \Sigma_{k}) $, therefore from Remark~\ref{remark:reuse}, $ (\mu_{h_{i,j}N}, \Sigma_{h_{i, j} N} ) \xrightarrow[N_{j, k}]{\mathscr{C}_{j, k}} (\mathscr{R}(\mu_{k,c}, \mathscr{P}_{k}), \Sigma_{k}) $ since $\mu_{h_{i,j}N} \in \mathscr{R}(\mu_{j,c}, \mathscr{P}_{j})$ and $\Sigma_{h_{i,j}N} \preceq \Sigma_{j}$ and the lemma follows.
\end{proof}
\subsection{Planning through the BRT} \label{sec:planning}
In this section, we discuss our approach to find feasible paths to the goal through a backward reachable tree.

\noindent
\textbf{Finding a feasible path}: To find a feasible path to the goal for a query distribution $q \coloneqq (\mu_{q}, \Sigma_{q})$, single hop connections are attempted one-by-one to $M$ nearest nodes on the BRT for some hyperparameter $M$. For a candidate node $\nu_{k}$ on the BRT, the following problem is solved,
\begin{align}
    \mathscr{C}_{q} \longleftarrow \operatorname{OPTSTEER}( (\mu_{q}, \Sigma_{q}), (\mathscr{R}(\mu^{(k)}, \mathscr{P}^{(k)}), \Sigma^{(k)}), N),
\end{align}
where $\operatorname{OPTSTEER}$ is defined in Section~\ref{sec:problem_statement}. The search for a feasible path terminates once a connection has successfully been established to one of the existing nodes on the BRT, and is given by a concatenation of the above computed control sequence $\mathscr{C}_{q}$ with the pre-computed controllers stored in the sequence of edges of the tree from the $k$th node to the root node. Let $ \nu_{k} $ be a distance of $d_{k}$ hops away from the goal s.t.\ $ \mathrm{idx}_{0}, \mathrm{idx}_{1}, \cdots, \mathrm{idx}_{d_{k}}$ be the sequence of nodes encountered from the $k$th node to the root node where $\mathrm{idx}_{0} = k$ and \ $ \mathrm{idx}_{d_{k}} = 0 $. Thererfore, the feasible path from $q$ to the goal $\mathcal{G}$ is obtained as, $ \mathscr{C}_{q, \mathcal{G}} = \mathscr{C}_{q} \bigcup \mathscr{C}_{ \mathrm{idx}_{0}, \mathrm{idx}_{1} } \bigcup \mathscr{C}_{ \mathrm{idx}_{1}, \mathrm{idx}_{2} } \cdots \bigcup \mathscr{C}_{ \mathrm{idx}_{d_{k} - 1}, \mathrm{idx}_{d_{k}} } $ s.t.\ $ q \xrightarrow[(d_{k} + 1)N]{\mathscr{C}_{q, \mathcal{G}}} \mathcal{G}$, which follows from Lemma~\ref{lemma:sequential_composition}.

\noindent
\textbf{Implication of recursive feasibility on the speed-up in computing a feasible path}: Say that the query distribution $q$ is such that $q \in \text{t}-\operatorname{BRS}(\mathcal{G})$ and $q \notin \text{t'}-\operatorname{BRS}(\mathcal{G}) \ \forall \ t' < t$, i.e.\ a path from $q$ to the goal shorter than $t$-hops does not exist. Therefore, to compute a feasible path that steers the system from $q$ to $\mathcal{G}$ without reusing any of the pre-computed controllers from the BRT, a feasibility instance $\operatorname{FEASIBLE}(q, \mathcal{G}, tN)$ of size $tN$ needs to be solved. Alternatively, if the search for a feasible path is carried through attempting connections to the BRT, the expenditure on compute is that of solving a feasibility instance of size $N$ a maximum $M$ number of times resulting in an order of magnitude savings in computation.
\section{Proof of Maximum Coverage of $\operatorname{MAXCOVAR}$ BRT \cite{aggarwal2024sdp_CDC}} \label{sec:proof}
\noindent
In this section, we provide a formal proof of a theorem on maximum coverage of the roadmap technique proposed in \cite{aggarwal2024sdp_CDC}. Ref.~\cite{aggarwal2024sdp_CDC} proposes a novel optimization-based method $\operatorname{MAXCOVAR}$ of node addition and edge controller construction such that the corresponding backward reachable tree provides provably maximal coverage as compared to any other method of constructing a roadmap of distributions. This is formalized through an analysis of the $h$-$\operatorname{BRS}$ of the constructed $\operatorname{MAXCOVAR}$ BRT as Theorem~\ref{theorem:coverage} in the following discussion. We note that Theorem~\ref{theorem:coverage}
is a statement on maximum coverage of $\operatorname{MAXCOVAR}$ BRT \cite{aggarwal2024sdp_CDC} as compared to any method amongst the class of methods that grow a roadmap of distributions \cite{csbrm_tro2024}, and that the proposed approach of this paper $\operatorname{MAXELLIPSOID}$ BRT \textbf{does not} belong to that class since we are concerned with growing a roadmap of ambiguity sets of distributions.

We proceed by first proving Lemma~\ref{lemma:sigma_brs_lemma} that describes the coverage of a distribution (single node on the tree) and is used as a building block for Theorem~\ref{theorem:coverage} that concerns the coverage of a tree (multiple nodes). The $\operatorname{MAXCOVAR}$  optimization program of \cite{aggarwal2024sdp_CDC} is stated below that searches for a maximal covariance and the associated control sequence that steers a query distribution with mean $\mu_{q}$ to a goal distribution $(\mu_{\mathcal{G}}, \Sigma_{\mathcal{G}})$.
\begin{align}
    \min_{ \Sigma_{k}, K_{k}, \mu_{k}, v_{k} } J_{\mathrm{MAX\textunderscore COVAR}} = -\lambda_{\mathrm{min}}(\Sigma_{0}) \tag{MAXCOVAR}\label{eq:J_max_covar}
\end{align}
such that for all $k = 0, 1, \cdots, N - 1$,
\begin{subequations}
\begin{align}
    &\begin{aligned}
        \Sigma_{k+1} = A &\Sigma_{k} A^{\intercal} + B K_{k}\Sigma_{k} A\t + A\Sigma_{k} K\t_{k} B\t \nonumber \\ &+ B K_{k} \Sigma_{k} K\t_{k} B\t + DD\t,
    \end{aligned} \\ 
    &\lambda_{\mathrm{max}}(\Sigma_{N}) \leq \lambda_{\mathrm{min}}(\Sigma_{\mathcal{G}}), \label{eq:covar_goal_maxcovar} \\
    &\mu_{k+1} = A \mu_{k} + B v_{k}, \label{eq:mean_prop_maxcovar} \\
    &\mu_{0} = \mu_{q},\ \ \mu_{N} = \mu_{\mathcal{G}} \label{eq:mean_maxcovar} \\
    &\Phi^{-1}(1-\epsilon_{x}) \sqrt{ \alpha^{\intercal}_{x} \Sigma_{k} \alpha_{x} } + \alpha^{\intercal}_{x} \mu_{k} - \beta_{x} \leq 0, \label{eq:state_constraint_maxcovar} \\
    &\Phi^{-1}(1-\epsilon_{u}) \sqrt{ \alpha\t_{u} K_{k} \Sigma_{k} K\t_{k} \alpha_{u} } + \alpha^{\intercal}_{u} v_{k} - \beta_{u} \leq 0, \label{eq:control_constraint_maxcovar}
\end{align}
\end{subequations}
For the statement of the lemma and the theorem, we recall that a planning scene $ \{ (\alpha_{x}, \beta_{x}, \epsilon_{x}), (\alpha_{u}, \beta_{u}, \epsilon_{u}) \} $ refers to the collection of all parameters that define chance constraints on the state and control input.

\begin{lemma}\label{lemma:sigma_brs_lemma}
    For $\Sigma_{\mathrm{max}}, \Sigma^{-}_{\mathrm{max}} \in \mathbb{S}^{n}_{++}$, s.t.\ $\lambda_{\mathrm{min}}( \Sigma_{\mathrm{max}} ) > \lambda_{\mathrm{min}}( \Sigma^{-}_{\mathrm{max}} )$, $ h\text{-}\operatorname{BRS}( \mu, \Sigma_{\mathrm{max}} ) \supseteq h\text{-}\operatorname{BRS}( \mu, \Sigma^{-}_{\mathrm{max}} ) \ \forall \ h \geq 1 $ for all planning scenes. Also, there exist planning scenes $ \{ (\alpha_{x}, \beta_{x}, \epsilon_{x}), (\alpha_{u}, \beta_{u}, \epsilon_{u}) \} $ s.t.\ $ h\text{-}\operatorname{BRS}( \mu, \Sigma_{\mathrm{max}} ) \supset h\text{-}\operatorname{BRS}( \mu, \Sigma^{-}_{\mathrm{max}} ) \ \forall \ h \geq 1$.
\end{lemma}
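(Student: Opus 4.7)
The plan is to prove the two assertions separately. The $\supseteq$ inclusion is a universal monotonicity statement that follows purely from the structure of the feasibility system, while the strict version requires constructing a concrete witness planning scene, which I would do by exploiting the intrinsic lower bound on the terminal covariance imposed by the disturbance.

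For the $\supseteq$ direction, I would unpack Definition~\ref{def:hbrs_dist} together with the Remark specifying that $\operatorname{FEASIBLE}(q,p,hN)$ for a pair of distributions is cut out by (\ref{eq:opt_dynamics_optsteer})--(\ref{eq:opt_control_constraint_optsteer}) with terminal conditions $\mu_{hN}=\mu$ and $\lambda_{\max}(\Sigma_{hN}) \leq \lambda_{\min}(\Sigma_{\text{goal}})$. Among these constraints, only the terminal-covariance inequality involves the target covariance; the dynamics, the initial-distribution constraint, the mean-reaching constraint, and the state/control chance constraints are identical across $(\mu,\Sigma^{-}_{\max})$ and $(\mu,\Sigma_{\max})$. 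Hence any control sequence $\{(K_k,v_k)\}$ that certifies $q \in h\text{-}\operatorname{BRS}(\mu,\Sigma^{-}_{\max})$ yields $\lambda_{\max}(\Sigma_{hN}) \leq \lambda_{\min}(\Sigma^{-}_{\max}) < \lambda_{\min}(\Sigma_{\max})$, and so also certifies $q \in h\text{-}\operatorname{BRS}(\mu,\Sigma_{\max})$. This argument is independent of the planning scene and of $h$, giving the first claim.

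For the strict-inclusion claim I would exhibit a scalar scene in which $h\text{-}\operatorname{BRS}(\mu,\Sigma^{-}_{\max})=\emptyset$ for every $h \geq 1$ while $h\text{-}\operatorname{BRS}(\mu,\Sigma_{\max})$ is nonempty. Take $n=m=1$, $A=B=D=1$, so (\ref{eq:covar_prop_nlp}) collapses to $\Sigma_{k+1} = (1+K_k)^2\Sigma_k + 1 \geq 1$, regardless of the feedback gain or of $\Sigma_k$. Pick $\Sigma^{-}_{\max}$ and $\Sigma_{\max}$ so that $\lambda_{\min}(\Sigma^{-}_{\max}) < 1 \leq \lambda_{\min}(\Sigma_{\max})$. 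Then $\lambda_{\max}(\Sigma_{hN}) = \Sigma_{hN} \geq 1 > \lambda_{\min}(\Sigma^{-}_{\max})$ along every admissible trajectory, forcing the smaller BRS to be empty for all $h$. For the larger BRS, the query $q=(\mu,\Sigma_q)$ with any $\Sigma_q > 0$, driven by $K_k \equiv -1$ and $v_k \equiv 0$, maintains $\mu_k \equiv \mu$ and achieves $\Sigma_k = 1$ for all $k \geq 1$, hence satisfies the pair-of-distributions terminal constraints for $(\mu,\Sigma_{\max})$. Taking large enough $\beta_x,\beta_u$ (and any $\epsilon_x,\epsilon_u \in (0,0.5]$) makes (\ref{eq:opt_state_constraint_optsteer})--(\ref{eq:opt_control_constraint_optsteer}) hold trivially along this trajectory for every $h$.

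The main obstacle is producing strict inclusion uniformly over $h \geq 1$ with a single planning scene, rather than a separate scene for each $h$. The key observation that circumvents this is the pointwise-in-$k$ bound $\Sigma_{k+1} \succeq DD^{\top}$, which is a direct consequence of the covariance recursion and is independent of the feedback gains; it yields a lower bound on $\lambda_{\max}(\Sigma_{hN})$ that does not depend on $h$. The rest is bookkeeping: choosing $D$ and the targets so that this forced lower bound straddles the gap between $\lambda_{\min}(\Sigma^{-}_{\max})$ and $\lambda_{\min}(\Sigma_{\max})$.
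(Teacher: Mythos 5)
Your $\supseteq$ direction is exactly the paper's argument: among the constraints defining $\operatorname{FEASIBLE}(q,(\mu,\cdot),hN)$ only the terminal covariance inequality references the target, and $\lambda_{\mathrm{max}}(\Sigma_{hN}) \le \lambda_{\mathrm{min}}(\Sigma^{-}_{\mathrm{max}}) < \lambda_{\mathrm{min}}(\Sigma_{\mathrm{max}})$ lets the same control sequence certify membership in the larger set, uniformly in $h$ and in the planning scene. That half is correct.

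The strict-inclusion half has a genuine gap. The lemma is universally quantified over \emph{all} pairs $\Sigma_{\mathrm{max}},\Sigma^{-}_{\mathrm{max}}\in\mathbb{S}^{n}_{++}$ with $\lambda_{\mathrm{min}}(\Sigma_{\mathrm{max}})>\lambda_{\mathrm{min}}(\Sigma^{-}_{\mathrm{max}})$, for the \emph{given} system $(A,B,D)$ of (\ref{eq:dynamics}); the existential quantifier ranges only over the chance-constraint parameters $\{(\alpha_{x},\beta_{x},\epsilon_{x}),(\alpha_{u},\beta_{u},\epsilon_{u})\}$. Your construction (i) selects the system matrices $A=B=D=1$, which are not part of the planning scene and are not yours to choose, and (ii) only separates the two sets when the eigenvalue gap straddles the disturbance floor, i.e.\ when $\lambda_{\mathrm{min}}(\Sigma^{-}_{\mathrm{max}})<\lambda_{\mathrm{max}}(DD\t)\le\lambda_{\mathrm{min}}(\Sigma_{\mathrm{max}})$. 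For a pair such as $\Sigma^{-}_{\mathrm{max}}=5I$, $\Sigma_{\mathrm{max}}=6I$ with $DD\t=I$, the bound $\Sigma_{hN}\succeq DD\t$ tells you nothing: both backward reachable sets are generically nonempty and your argument produces no witness lying in one but not the other. This matters for how the lemma is consumed: in the proof of Theorem~\ref{theorem:coverage} the two covariances are whatever $\operatorname{MAXCOVAR}$ and $\operatorname{ANY}$ return at a sampled node, so they cannot be arranged to straddle $DD\t$. The paper closes the general case with a perturbation argument: take $\Sigma_{q}$ to be the \emph{maximal} covariance (returned by $\operatorname{MAXCOVAR}$) that reaches $(\mu,\Sigma^{-}_{\mathrm{max}})$, inflate it to $\Sigma_{q}+\epsilon I$ --- which by maximality cannot reach $(\mu,\Sigma^{-}_{\mathrm{max}})$ for any $\epsilon>0$ --- scale the feedback gains by factors $1-\lambda_{k}$ so that the propagated covariance contracts relative to the nominal one, and then choose the planning scene so that the strict slack in the chance constraints absorbs the perturbation while the terminal covariance lands in the window $(\lambda_{\mathrm{min}}(\Sigma^{-}_{\mathrm{max}}),\lambda_{\mathrm{min}}(\Sigma_{\mathrm{max}})]$. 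You would need an argument of that kind, valid for arbitrary admissible pairs and a fixed system, to establish the second claim.
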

\begin{proof}
    See Appendix~\ref{sec:appendix}.
\end{proof}
Before stating Theorem~\ref{theorem:coverage}, we introduce the following notation. Let $ \mathcal{T}^{(r)}_{\mathrm{MAXCOVAR}}$ and $ \mathcal{T}^{(r)}_{\mathrm{ANY}} $ be the state of the BRTs at the $r$th iteration of the node addition procedure following the $\operatorname{MAXCOVAR}$ and $\operatorname{ANY}$ methods of node and edge construction respectively, initialized at the common tree state of a singleton node of the goal distribution. The ANY procedure is any method of node and edge addition that does not explicitly maximize the minimum eigenvalue of the covariance corresponding to any query mean $\mu_{q}$ and goal distribution $(\mu_{\mathcal{G}}, \Sigma_{\mathcal{G}})$. In other words, ANY refers to any algorithm that returns a node covariance and edge controller tuple $(\Sigma_{q}, \mathscr{C}_{q})$ s.t.\ $ \lambda_{\mathrm{min}}(\Sigma_{q}) < \lambda_{\mathrm{min}}(\Sigma_{q, \mathrm{max}}) $ where $ \Sigma_{q, \mathrm{max}}, \mathscr{C}_{q, \mathrm{max}} \longleftarrow \operatorname{MAXCOVAR}(\mu_{q}, (\mu_\mathcal{G}, \Sigma_\mathcal{G}), N)$. ANY could for instance sample the covariance
randomly \cite{csbrm_tro2024}, or add edges that optimize for a regularized objective of the minimum eigenvalue with a performance term.
\begin{theorem}[Maximum Coverage]\label{theorem:coverage}
    $\text{h-}\operatorname{BRS}(\mathcal{T}^{(r)}_{\mathrm{MAXCOVAR}} ) \supseteq \text{h-}\operatorname{BRS}(\mathcal{T}^{(r)}_{\mathrm{ANY}} )$ for all planning scenes, and there always exist planning scenes such that $\text{h-}\operatorname{BRS}(\mathcal{T}^{(r)}_{\mathrm{MAXCOVAR}} ) \supset \text{h-}\operatorname{BRS}(\mathcal{T}^{(r)}_{\mathrm{ANY}} ) \ \forall \ r \geq 1, \ \forall \ h \geq 1$.
\end{theorem}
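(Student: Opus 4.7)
The plan is to prove both inclusions by induction on the iteration count $r$ after coupling the two node-addition procedures so that every structural choice (which tree node is selected to expand via the Voronoi bias, and what query mean $\mu_q$ is drawn around it via $\operatorname{RANDMEANAROUND}$) is made identically. Since these subroutines depend only on the means stored in the current tree and on fresh randomness, and since the feasibility of node addition depends only on the shared constraints (the only difference between the two methods is the choice of $\Sigma_0$ among the feasible covariances), running both algorithms on a common random stream produces trees $\mathcal{T}^{(r)}_{\mathrm{MAXCOVAR}}$ and $\mathcal{T}^{(r)}_{\mathrm{ANY}}$ that are identical in topology, node means $\mu_i$, parent pointers, and depths $d_i$, but differ in the stored covariances, with $\lambda_{\min}(\Sigma_i^{\mathrm{MAXCOVAR}}) \geq \lambda_{\min}(\Sigma_i^{\mathrm{ANY}})$ at every node $i$ (strict inequality at every newly added node by the definition of ANY).

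With this coupling in place, the weak inclusion reduces to a node-by-node application of Lemma~\ref{lemma:sigma_brs_lemma}. The base case $r=0$ is trivial since both trees are the singleton $\{\mathcal{G}\}$. For the inductive step, suppose the claim holds at step $r-1$ and consider the new node $\nu_r$ appended at depth $d_r$. Lemma~\ref{lemma:sigma_brs_lemma} applied with $\mu = \mu_r$, $\Sigma_{\max} = \Sigma_r^{\mathrm{MAXCOVAR}}$, and $\Sigma_{\max}^{-} = \Sigma_r^{\mathrm{ANY}}$ gives
\[
(h - d_r)\text{-}\operatorname{BRS}(\mu_r, \Sigma_r^{\mathrm{MAXCOVAR}}) \supseteq (h - d_r)\text{-}\operatorname{BRS}(\mu_r, \Sigma_r^{\mathrm{ANY}}),
\]
and the covariances at the remaining nodes are unchanged from iteration $r-1$ so their per-node BRSs satisfy the same inclusion by the inductive hypothesis. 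Taking the union over $i \in \nu(\mathcal{T}^{(r)})$ in the definition of $h$-$\operatorname{BRS}(\mathcal{T})$ preserves set inclusion, which delivers the weak half of the theorem for every $h \geq 1$.

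For the strict inclusion, I would invoke the ``there exist planning scenes'' clause of Lemma~\ref{lemma:sigma_brs_lemma} at the newly added node $\nu_r$: for such a scene there is a distribution $q$ lying in $(h-d_r)$-$\operatorname{BRS}(\mu_r, \Sigma_r^{\mathrm{MAXCOVAR}})$ but not in $(h-d_r)$-$\operatorname{BRS}(\mu_r, \Sigma_r^{\mathrm{ANY}})$. The main obstacle I anticipate is ruling out the possibility that this same $q$ is still covered by some \emph{other} ANY node $\nu_i$ with $i \neq r$, which would wipe out strict inclusion at the tree level even though it holds node-wise. I would handle this by tailoring the planning scene --- shrinking the control polytope and placing $\mu_r$ in a region of the state space far from every other $\mu_i$ relative to the $h$-step reachability radius --- so that $q$ is reachable to the tree only through $\nu_r$, and the per-node strict inclusion survives the union. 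A secondary, lighter check is that the coupling argument is robust to the particular random-node-selection rule; for the Voronoi-bias scheme in Section~\ref{sec:construction_brt} it is, because that rule depends only on the node means, which agree across the two trees by construction.
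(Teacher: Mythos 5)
Your proposal follows essentially the same route as the paper: compare the two procedures one iteration at a time from a common tree state, apply Lemma~\ref{lemma:sigma_brs_lemma} at the single newly added node (with $\Sigma_{\mathrm{max}}$ versus $\Sigma_{\mathrm{any}}$ and $\lambda_{\mathrm{min}}(\Sigma_{\mathrm{max}})>\lambda_{\mathrm{min}}(\Sigma_{\mathrm{any}})$), and then pass to the tree level via the union in Definition~\ref{def:hbrs_treedist}. Two remarks on where you go beyond the paper's own argument. First, you make the coupling of the two runs (identical node selection, identical sampled means, hence identical topology and depths, differing only in stored covariances) explicit; the paper leaves this implicit and simply asserts that the result for general $r$ "follows from above" after treating the $r=1$ case from the common singleton $\mathcal{T}^{(0)}=\{\mathcal{G}\}$. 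Your induction with the per-node eigenvalue ordering is the cleaner way to make that step rigorous. Second, and more substantively, you correctly flag that a strict inclusion at the new node need not survive the union with $\text{h-}\operatorname{BRS}(\mathcal{T}^{(n)})$, since the extra distributions gained at $\nu_r$ could already be covered through other nodes; the paper's proof passes directly from the node-level strict inclusion to the tree-level one without addressing this, so you have identified a genuine gap in the published argument rather than a defect in your own. Your proposed patch (choosing the planning scene so that the witness $q$ can reach the tree only through $\nu_r$) is the right idea, though note one circularity you should acknowledge if you write it up fully: the planning scene also governs which node additions are feasible, so tailoring the scene after the fact can in principle alter the coupled construction itself. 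A clean fix is to exhibit the witness for $r=1$ (where the only non-root node is $\nu_r$ and the union issue is vacuous for $h$-hop paths that must pass through it), which is in effect what the paper's final sentence does, and then observe that the weak inclusion at every later iteration preserves what was gained.
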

\begin{proof}
    See Appendix~\ref{sec:appendix}.
\end{proof}
\section{Experiments}\label{sec:experiments}
\begin{figure*}[t]
    \centering
   \includegraphics[width=\textwidth, trim={0 5cm 0 5cm},clip]{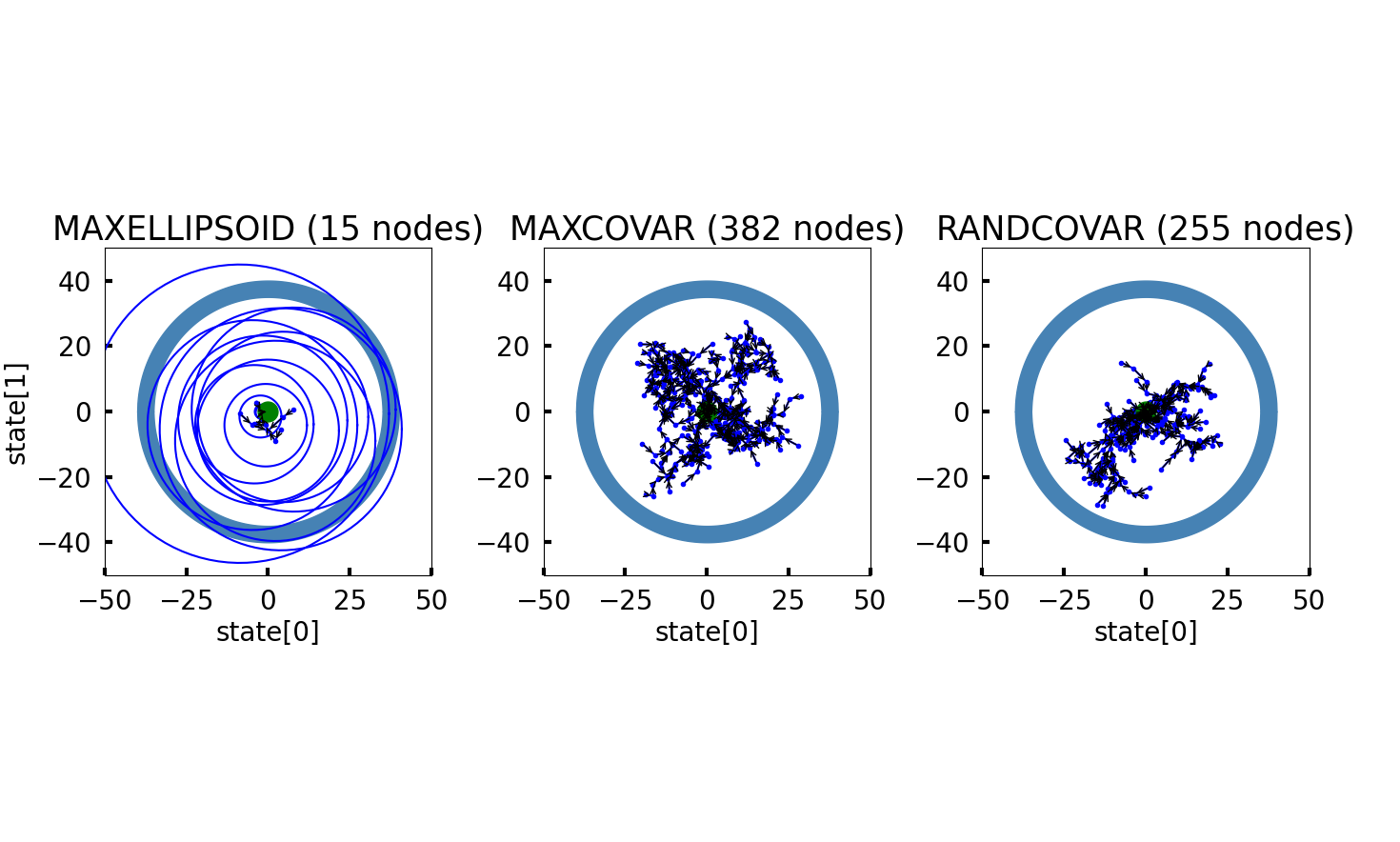}
    \vspace*{-.25in}
    \caption{The above figure show the three backward reachable trees that were used for our planning experiment: $\operatorname{MAXELLIPSOID}$ (15 nodes added in 15 iterations, \textbf{less than 1 min to compute}), $\operatorname{MAXCOVAR}$ (382 nodes added in 1000 iterations, \textbf{around 16 mins to compute}), and $\operatorname{RANDCOVAR}$ (255 nodes added in 3500 iterations, \textbf{more than 1 hour to compute}). For all the three subplots, each blue dot on the plot represents the first two dimensions of the node mean for $\operatorname{MAXCOVAR}$ and $\operatorname{RANDCOVAR}$ trees, and the center of the ellipsoidal mean ambiguity set for the $\operatorname{MAXELLIPSOID}$ tree. Additionally for the $\operatorname{MAXELLIPSOID}$ subplot, Fig.~\ref{fig:three_trees} shows the 2D projections of the 6D ellipsoidal regions corresponding to the mean ambiguity sets stored at each node. The directed edges between the nodes in each of the subplots represent the pre-computed $N$-step control sequences. \textbf{Planning experiment}: Query means were randomly sampled from the blue annulus and a query covariance of $ 0.2\mathbf{I}_{6}$ was used to attempt connects to the trees. \textbf{The proposed $\operatorname{MAXELLIPSOID}$ shows superior performance} as compared to the other two methods as summarized in Table~\ref{tab:exp1} \textbf{with a fraction of the number of nodes (15 nodes)} and \textbf{at a fraction of the time spent computing the tree (less than 1 minute)} as compared to the other two methods.}
    \label{fig:three_trees}
\end{figure*}

To illustrate our method, we conduct experiments for the motion planning of a quadrotor in a 2D plane. The lateral and longitudinal dynamics of the quadrotor are modeled as a triple integrator leading to a 6 DoF model with state matrices,
$$
A=\left[\begin{array}{ccc}
I_2 & \Delta T I_2 & 0_2 \\
0_2 & I_2 & \Delta T I_2 \\
0_2 & 0_2 & I_2
\end{array}\right], ~B=\left[\begin{array}{c}
0_2 \\
0_2 \\
\Delta T I_2
\end{array}\right], ~ D=0.1 I_6,
$$
a time step of 0.1 seconds, a horizon of $N = 20$, and the ambiguity set for the goal distribution $\mathcal{G}$ corresponding to the planning task is as follows:
\begin{align*}
    \mu_{\mathcal{G},c} = \mathbf{0}_{6 \times 1}, \quad \mathscr{P}_{\mathcal{G}} = 0.5 \mathbf{I}_{6\times 6}, \quad \Sigma_{\mathcal{G}} = 0.1  \mathbf{I}_{6\times 6}.
\end{align*}
The control input space is characterized by a bounding box represented as 
$\alpha_u=\left\{\left[
\pm 1,0 \right]^{\top}, 
\left[0,\pm 1 \right]^{\top}\right\}$, 
$\beta_u=\{ \pm 25, \pm 25\}$.
The chance constraint linearization is performed around $ \Sigma_{r} = 1.2 \ \mathbf{I}_{6\times6} $, and $ Y_{r} = 15 \ \mathbf{I}_{2\times2} $. All the optimization programs are solved in Python using cvxpy \cite{cvxpy2016}. We conduct an experiment to compare the two classes of trees: backward reachable tree (BRT) of distributions: $\operatorname{MAXCOVAR}$ and $\operatorname{RANDCOVAR}$ BRTs, and a BRT of ambiguity sets of distributions, $\operatorname{MAXELLIPSOID}$ BRT.
\noindent
\subsection{Construction of the BRTs}\label{sec:construct_brt}
For the tree construction procedure, a sampling radius of $r_{\mathrm{sample}} = \left[ \pm 5, \pm 5, \pm 2.5, \pm 2.5, \pm 1.25, \pm 1.25 \right] $ was used across all tree types. Fig.~\ref{fig:three_trees} shows the three types of trees that were constructed: $\operatorname{MAXELLIPSOID}$ (followed by a $\operatorname{MAXCOVARELL}$ procedure for each edge as a part of the bi-level search for the maximally distributionally robust controller), $\operatorname{MAXCOVAR}$ and the $\operatorname{RANDCOVAR}$ BRT. $\operatorname{MAXCOVAR}$ \cite{aggarwal2024sdp} \cite{aggarwal2024sdp_CDC} and $\operatorname{RANDCOVAR}$ are trees of distributions where each edge is a covariance steering controller \cite{liu2022optimal} \cite{rapakoulias2023discrete}. All the trees were constructed using the same seed.

Fig.~\ref{fig:three_trees} displays the first two states of the 6 DoF model: the x ($\operatorname{state}[0]$) and y ($\operatorname{state}[1]$) locations of the quadrotor in the x-y plane. For all the three subplots, each node (blue dot) on the plot represents the first two dimensions of the state distribution mean for $\operatorname{MAXCOVAR}$ and $\operatorname{RANDCOVAR}$ trees, and the center of the ellipsoidal first-order moment ambiguity sets for the $\operatorname{MAXELLIPSOID}$ tree. Additionally for the $\operatorname{MAXELLIPSOID}$ subplot, Fig.~\ref{fig:three_trees} shows the 2D projections of the 6D ellipsoidal regions corresponding to the first-order moment ambiguity sets stored at each node. The directed edges between the nodes in each of the subplots are the corresponding $N$-step control sequences that are stored offline.

The node addition procedure for the $\operatorname{MAXELLIPSOID}$ BRT was run for 50 iterations and 15 nodes were added to the tree. For the $\operatorname{MAXCOVAR}$ BRT, 382 nodes were added in 1000 iterations, and for the $\operatorname{RANDCOVAR}$ BRT, 255 nodes were added in 3500 iterations.

The construction procedure for the MAXELLIPSOID BRT was followed as described in Algorithm~\ref{alg:construct_brt}. For each edge, an initial covariance of $ \Sigma_{q} = 0.1 \mathbf{I}_{6}$ was used, followed by a $\operatorname{MAXCOVARELL}$ optimization step (see   remark in Section~\ref{sec:bilevel}). 

The $\operatorname{MAXCOVAR}$ tree was constructed following an algorithm with similar sub-routines, the difference being in the edge controller which is a covariance steering controller between nodes representing distributions (see \cite{aggarwal2024sdp} \cite{aggarwal2024sdp_CDC} for details), instead of a distributionally robust controller between ambiguity sets. Finally, for the $\operatorname{RANDCOVAR}$ tree, let $\nu_{k}$ be the node on the tree that's selected to expand and let $\mu_{\mathrm{cand.}}$ be the candidate mean sampled from a box around it. The node covariance $\Sigma_{\mathrm{cand.}, \mathrm{rand}}$ is randomly sampled from the space of positive definite matrices, similar to \cite{csbrm_tro2024}, and the edge controller is given by the optimal steering control from $(\mu_{\mathrm{cand.}}, \Sigma_{\mathrm{cand.}, \mathrm{rand}})$ to $ (\mu_{k}, \Sigma_{k})$. To construct samples from the positive definite matrix space, the eigenvalues and orthonormal eigenvectors that constitute a positive definite matrix are sampled separately. It was observed empirically that randomly sampling node covariances resulted in rejecting a lot of candidate nodes due to the non-existence of a corresponding steering maneuver. Therefore, eigenvalues of the candidate node covariance $\Sigma_{\mathrm{cand., rand}}$ were sampled to ensure that \ $ \lambda_{\mathrm{min}}( \Sigma_{\mathrm{cand.}, \mathrm{rand}}) \leq \lambda_{\mathrm{min}}( \Sigma_{\mathrm{cand.}, \mathrm{max}})$ where $ \Sigma_{\mathrm{cand.}, \mathrm{max}}$ is the computed maximal covariance corresponding to the sampled $\mu_{\mathrm{cand.}}$ (computation of this maximal covariance is detailed in \cite{aggarwal2024sdp} \cite{aggarwal2024sdp_CDC}).

We now proceed to describe our planning experiment. The experiment tries to find paths to the goal by attempting to connect to the trees for sampled query distributions (query mean and covariance are sampled independently). The x-y dimensions of the query means were sampled from the blue annulus in the x-y plane (see Fig.~\ref{fig:three_trees}). Readers are referred to Section~\ref{sec:exp1} for more details on sampling the query mean including the sampling of the velocity and acceleration space. The experiment is aimed at demonstrating the space-filling property of the $\operatorname{MAXELLIPSOID}$ tree as compared to $\operatorname{MAXCOVAR}$ and $\operatorname{RANDCOVAR}$ assuming a fixed query covariance of $0.2\mathbf{I}_{6}$.
\noindent
\subsection{Experiment}\label{sec:exp1}
\textbf{Setup}: 
The following radii of the annulus were used for sampling the x-y space of the query mean: $r_\mathrm{inner} = 35$, $r_\mathrm{outer} = 40$ (see Fig.~\ref{fig:three_trees}), and the query covariance was $ \Sigma_{q} = 0.2 \mathbf{I}_{6}$ across all runs. The velocity space coordinates (third and fourth dimension) were sampled independently and uniformly from the interval $(-2.5, 2.5)$ and the acceleration space coordinates (fifth and sixth dimension) were sampled uniformly and independently from the interval $(-0.625, 0.625)$.
The experiment was repeated 250 times and the number of times a path was successfully found was recorded (see Table~\ref{tab:exp1}).

\begin{table}[t]
  \centering
  \caption{Planning Experiment: No. of successful paths found}
  \label{tab:exp1}
  \begin{tabular}{|c|c|c|}
    \hline
     \textbf{MAXELLIPSOID} & \textbf{MAXCOVAR} & \textbf{RANDCOVAR} \\
     with 15 nodes & with 382 nodes & with 255 nodes \\
    \hline
     249 out of 250 & 40 out of 250 & 40 out of 250  \\
    \hline
  \end{tabular}
\end{table}

\textbf{Interpretation}: Table~\ref{tab:exp1} shows the superior performance of $\operatorname{MAXELLIPSOID}$ BRT compared to $\operatorname{MAXCOVAR}$ and $\operatorname{RANDCOVAR}$ BRTs in finding paths with only a fraction of the number of nodes (15 nodes added in 50 iterations against 382 nodes added in 1000 iterations and 255 nodes added in 3500 iterations respectively). A compact size of the library of controllers is beneficial for practical purposes since constructing the $\operatorname{MAXELLIPSOID}$ BRT took less than 1 min, while constructing the $\operatorname{MAXCOVAR}$ and $\operatorname{RANDCOVAR}$ BRTs took approximately 16 min and more than 1 hour respectively.

\begin{figure}[t]
    \centering
    \includegraphics[trim=220 0 250 50, clip, width=0.8\columnwidth]{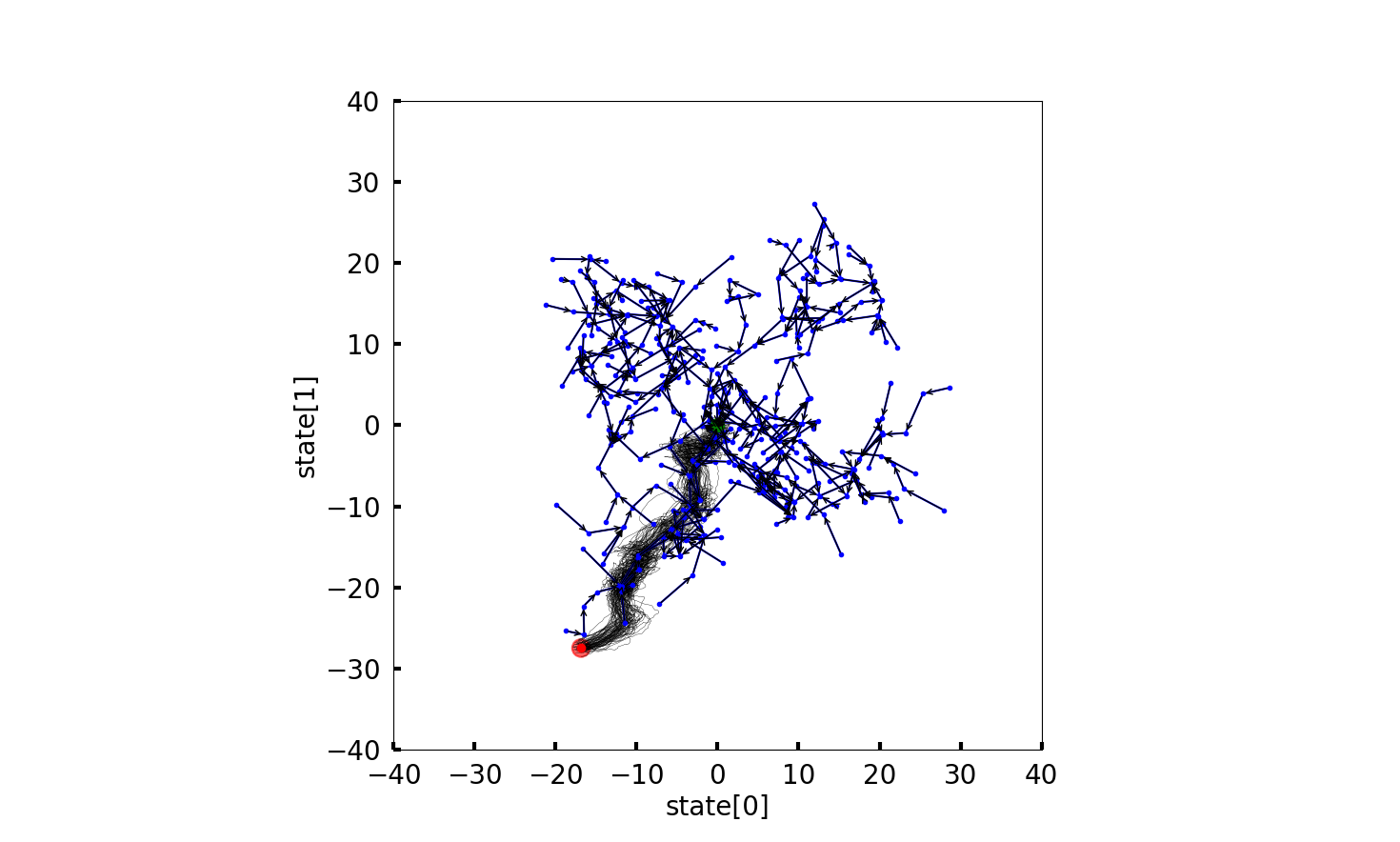}
    \vspace*{-.2in} 
    \label{fig:maxcovar_brt_mc}
%
    \centering
    \includegraphics[trim=220 0 250 50, clip, width=0.8\columnwidth]{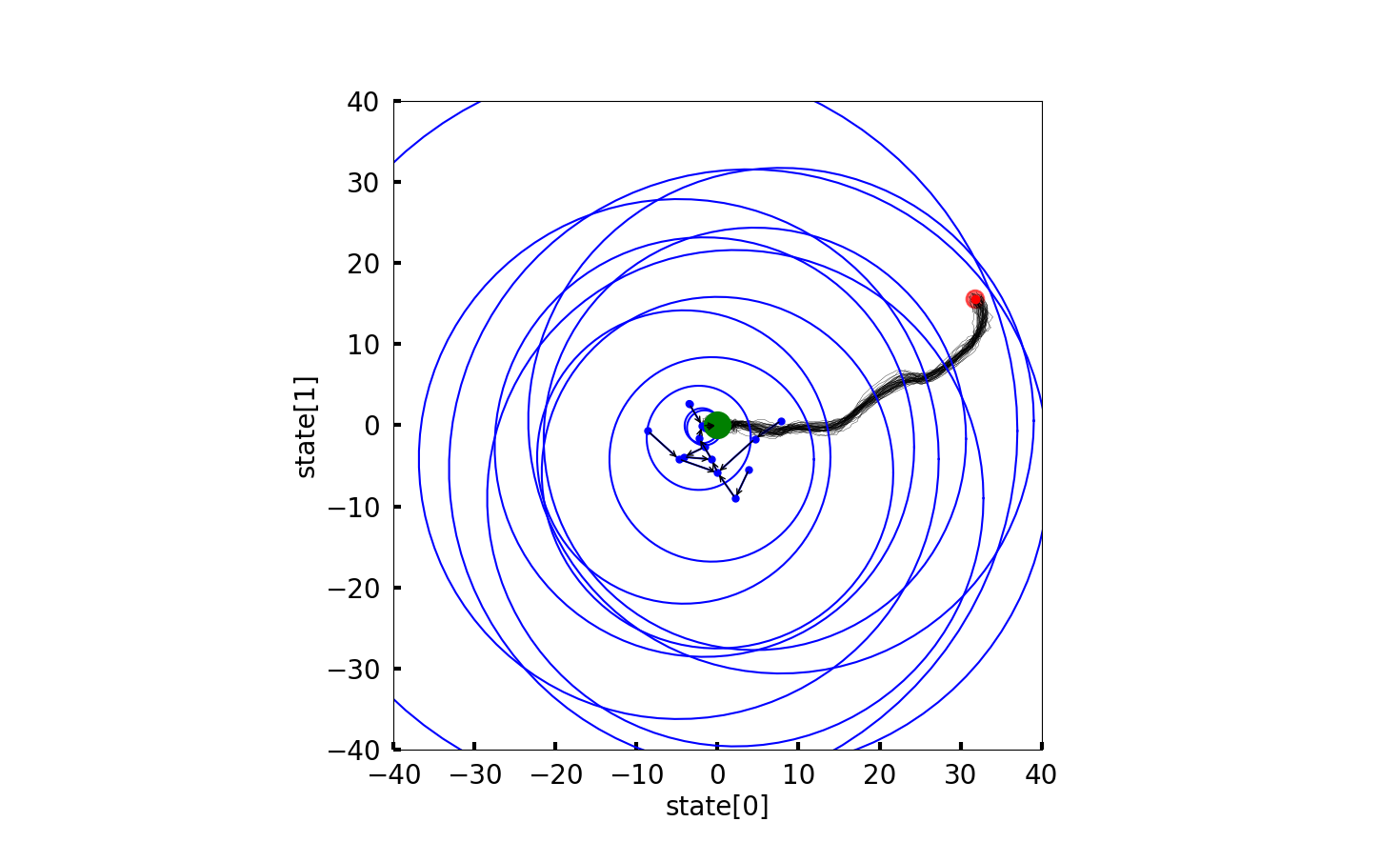}
    \vspace*{-.2in} 
    \caption{Real-time planning to the goal (origin) through the precomputed $\operatorname{MAXCOVAR}$ and $\operatorname{MAXELLIPSOID}$ BRTs respectively for a randomly sampled query distribution (red ellipse corresponds to the 3-$\sigma$ confidence interval of the sampled query distribution). The only real-time computation was the steering control sequence from the query distribution to the first node on the branch of the tree corresponding to the discovered feasible path. Black lines represent Monte-Carlo trajectories for the system evolution under the concatenated control sequence.}
    \label{fig:maxellips_brt_mc}
\end{figure}
\textbf{Conclusion}: Growing a backward reachable tree of ambiguity sets and the corresponding distributionally robust controllers has a space-filling property and is a computationally efficient method to populate a pre-computed library of controllers for multi-query planning. $\operatorname{MAXELLIPSOID}$ BRT has a compact representation, is faster to compute and performs drastically better as compared to $\operatorname{MAXCOVAR}$ and $\operatorname{RANDCOVAR}$ BRTs as demonstrated in an example planning scenario.
\subsection{Real-time planning through the BRTs} Fig.~\ref{fig:maxellips_brt_mc} shows the Monte Carlo trajectories (represented by black lines) corresponding to a randomly sampled query distribution for the $\operatorname{MAXCOVAR}$ and the $\operatorname{MAXELLIPSOID}$ BRTs respectively. The red ellipse corresponds to the 3-$\sigma$ confidence interval of the sampled query node, and the green ellipse refers to the goal node (ellipsoidal region corresponding to the goal mean ambiguity set overlaid with the 3-$\sigma$ confidence ellipse corresponding to the goal covariance). The only real-time computation done was to compute a control sequence that steers the system from the query distribution to one of the existing nodes on the trees, trajectories for the remaining time-steps were propagated using controllers stored along each edge of the discovered feasible path through the BRTs.

\appendix \label{sec:appendix}

\begin{proof}[\textbf{Proof of Theorem~\ref{theorem:relaxation}}]
    It has been proved in the discussion in Section~\ref{sec:convex_relax} that (\ref{eq:J_fd})--(\ref{eq:J_fh}) are sufficient conditions for (\ref{eq:mean_init_nlp})--(\ref{eq:state_constraint_nlp}) to hold. We now argue that the decision variables $U_{k}$, $Y_{k}$, and $\Sigma_{k}$ obtained as a solution to (\ref{eq:J_final}) recover a feasible control sequence for the nonlinear program (\ref{eq:J_simp}) such that (\ref{eq:J_simp}a)--(\ref{eq:J_simp}c) and (\ref{eq:J_simp}i) hold. Specifically, we show that under the control sequence obtained as a solution to the relaxed convex semidefinite program (\ref{eq:J_final}), the propagated covariance dynamics initialized at $ \Sigma_{0} = \Sigma_{q} $ (\ref{eq:J_simp}b) are such that the terminal covariance constraint (\ref{eq:J_simp}c) and control chance constraints (\ref{eq:J_simp}i) hold.

Let $\{ \tilde{U}_{k}, \tilde{Y}_{k}, \tilde{\Sigma}_{k} \}_{k=0}^{N-1}$ be the solution to the relaxed program (\ref{eq:J_final}) such that $\mathscr{C} \coloneqq \{ \mathscr{C}_{k} \}_{k=0}^{N-1}$ denotes the recovered control sequence where $ \mathscr{C}_{k} = \{ \tilde{K}_{k}, \tilde{v}_{k} \}$ and $ \tilde{K}_{k} = \tilde{U}_{k} \tilde{\Sigma}
\inv_{k} $. Let $\{\Sigma_{k}\}_{k=0}^{N-1}$ represent the covariance dynamics of the system initialized at $\Sigma_{0} = \Sigma_{q}$ (\ref{eq:J_simp}b) and steered through the feedback control sequence $ \{\tilde{K}_{k}\}_{k=0}^{N-1}$. Note that we make the distinction between $\Sigma_{k}$ and $\tilde{\Sigma}_{k}$ where the former represents the actual covariance dynamics (\ref{eq:J_simp}a) steered through the recovered feasible control sequence and the latter refers to a decision variable of the relaxed program (\ref{eq:J_final}). We make this explicit through the notation $ \Sigma_{k} \coloneqq \Sigma_{k}(\Sigma_{0}, \mathscr{C})$ s.t. $\Sigma_{k+1} = A \Sigma_{k} A^{\intercal} + B \tilde{K}_{k}\Sigma_{k} A\t + A\Sigma_{k} \tilde{K}\t_{k} B\t \nonumber + B \tilde{K}_{k} \Sigma_{k} \tilde{K}\t_{k} B\t + DD\t$.

We show that the theorem holds by proving $ \Sigma_{k}(\Sigma_{0}, \mathscr{C}) \preceq \tilde{\Sigma}_{k} \ \forall \ k = 0, 1, \cdots, N-1$. Assume that $\Sigma_{k-1} (\Sigma_{0}, \mathscr{C}) \preceq \tilde{\Sigma}_{k-1}$ holds. From (\ref{eq:J_simp}a),
\begin{align*}
    \Sigma_{k}(\Sigma_{0}, \mathscr{C}) &= (A + B \tilde{K}_{k-1}) \Sigma_{k-1}(\Sigma_{0}, \mathscr{C}) (A + B \tilde{K}_{k-1})\t + D D\t \\
    &\preceq (A + B \tilde{K}_{k-1}) \tilde{\Sigma}_{k-1}(A + B \tilde{K}_{k-1})\t + D D\t \\
    &=  A \tilde{\Sigma}_{k-1} A^{\intercal} + B \tilde{K}_{k-1}\tilde{\Sigma}_{k-1} A\t + A\tilde{\Sigma}_{k-1} \tilde{K}\t_{k-1} B\t \\ &+ B \tilde{K}_{k-1} \tilde{\Sigma}_{k-1} \tilde{K}\t_{k-1} B\t + DD\t
\end{align*}
From (\ref{eq:J_fa}), $ \tilde{K}_{k-1} \tilde{\Sigma}_{k-1} \tilde{K}\t_{k-1} = \tilde{U}_{k-1} \tilde{\Sigma}\inv_{k-1} \tilde{U}\t_{k-1} \preceq \tilde{Y}_{k-1}$. Substituting $ \tilde{U}_{k-1} = \tilde{K}_{k-1} \tilde{\Sigma}_{k-1}$,
\begin{align*}
\Sigma_{k}(\Sigma_{0}, \mathscr{C}) &\preceq A \tilde{\Sigma}_{k-1} A^{\intercal} + B \tilde{U}_{k-1} A\t + A\tilde{U}\t_{k-1} B\t \\ &+ B \tilde{Y}_{k-1} B\t + DD\t 
\end{align*}
From (\ref{eq:J_final}b), $G_{k}(\tilde{\Sigma}_{k}, \tilde{\Sigma}_{k-1}, \tilde{Y}_{k}, \tilde{U}_{k}) = 0$, we have $ \tilde{\Sigma}_{k} =  A\tilde{\Sigma}_{k-1}A\t + B \tilde{U}_{k-1} A\t + A \tilde{U}\t_{k-1} B\t + B \tilde{Y}_{k-1} B\t + D D\t$. Therefore, $ \Sigma_{k}(\Sigma_{0}, \mathscr{C}) \preceq \tilde{\Sigma}_{k}$ holds if $ \Sigma_{k-1}(\Sigma_{0}, \mathscr{C}) \preceq \tilde{\Sigma}_{k-1}  $. Since $\Sigma_{0}(\Sigma_{0}, \hat{K}_{k}) = \tilde{\Sigma}_{0} = \Sigma_{q}$, from induction $ \Sigma_{k}(\Sigma_{0}, \hat{K}_{k}) \preceq \tilde{\Sigma}_{k} \ \forall \ k = 0, 1, \cdots, N-1$.
From (\ref{eq:J_final}c), $ \lambda_{\mathrm{max}} (\tilde{\Sigma}_{N}) \leq \lambda_{\mathrm{min}}(\Sigma_{\mathcal{G}})$. Since $ \Sigma_{N} \coloneqq \Sigma_{N}(\Sigma_{0}, \mathscr{C}) \preceq \tilde{\Sigma}_{N}$, therefore $ \lambda_{\mathrm{max}}(\Sigma_{N}) \leq \lambda_{\mathrm{max}}(\tilde{\Sigma}_{N})$ (follows from: for $A, B \in \mathbb{S}^{n}_{++}$, $A \preceq B \implies \lambda_{i}(A) \leq \lambda_{i}(B)$ where $\lambda_{i}(.)$ is the $i$-th largest eigenvalue). It follows that $ \lambda_{\mathrm{max}} (\Sigma_{N}) \leq \lambda_{\mathrm{max}} (\tilde{\Sigma}_{N}) \leq \lambda_{\mathrm{min}}(\Sigma_{\mathcal{G}}) $ and hence (\ref{eq:J_simp}c) holds. We can similarly show that (\ref{eq:J_simp}i) also holds and the theorem follows.
\end{proof}

\begin{proof}[\textbf{Proof of Lemma}~\ref{lemma:sigma_brs_lemma}]
    We want to show that $ \ \forall \ p \in h\text{-}\operatorname{BRS}(\mu, \Sigma^{-}_{\mathrm{max}}); \ \exists \ \mathscr{C}_{p} $ s.t.\ $ p \xlongrightarrow[hN]{\mathscr{C}_{p}} (\mu, \Sigma_{\mathrm{max}}) $, what we refer to as the \textit{forward side} of the argument, and also that $\ \exists \ q $ s.t.\ $ q \in h\text{-}\operatorname{BRS}(\mu, \Sigma_{\mathrm{max}})$ and $ \nexists \ \mathscr{C}_{q} $ s.t.\ $ q \xlongrightarrow[hN]{\mathscr{C}_{q}} (\mu, \Sigma^{-}_{\mathrm{max}}) $, what we refer to as the \textit{backward side} of the argument.

    We first prove the forwards side of the argument. Let $p$ be any element of $h\text{-}\operatorname{BRS}(\mu, \Sigma^{-}_{\mathrm{max}})$. Therefore, $ \exists \ \mathscr{C}_p $ s.t.\ (\ref{eq:covar_goal_maxcovar})--(\ref{eq:control_constraint_maxcovar}) are satisfied. From (\ref{eq:covar_goal_maxcovar}), $ \Sigma_{{hN}} (p, \mathscr{C}_{p}) $ s.t.\ $ \lambda_{\mathrm{max}}( \Sigma_{hN} (p, \mathscr{C}_{p}) ) \leq \lambda_{\mathrm{min}}(\Sigma^{-}_{\mathrm{max}})$. Since $ \lambda_{\mathrm{min}}( \Sigma^{-}_{\mathrm{max}} ) < \lambda_{\mathrm{min}}( \Sigma_{\mathrm{max}} ) $, $ \lambda_{\mathrm{max}}( \Sigma_{hN} (p, \mathscr{C}_{p}) ) \leq \lambda_{\mathrm{min}}(\Sigma_{\mathrm{max}})$. Therefore, $ p \in h\text{-}\operatorname{BRS}(\mu, \Sigma_{\mathrm{max}})$ under the same control sequence $ \mathscr{C}_{p} $ that drives $p$ to $ (\mu, \Sigma^{-}_{\mathrm{max}}) $. Because the above argument holds for any arbitrary distribution that reaches $(\mu, \Sigma^{-}_{\mathrm{max}})$, it follows that $ \forall \ p \in h\text{-}\operatorname{BRS}(\mu, \Sigma^{-}_{\mathrm{max}}), p \in h\text{-}\operatorname{BRS}(\mu, \Sigma_{\mathrm{max}})$.

    Now, we prove the reverse -- specifically, we construct $q$ and a planning scene $ \{ (\alpha_{x}, \beta_{x}, \epsilon_{x}), (\alpha_{u}, \beta_{u}, \epsilon_{u}) \} $ such that $q \in h\text{-}\operatorname{BRS}(\mu, \Sigma_{\mathrm{max}})$ and $ q \notin h\text{-}\operatorname{BRS}(\mu, \Sigma^{-}_{\mathrm{max}}) $. Consider $q(\mu_{q}, \Sigma_{q})$ such that,
    \begin{align}
        \Sigma_{q}, \mathscr{C}_{q} \longleftarrow \operatorname{MAX-COVAR}( \mu_{q}, (\mu, \Sigma^{-}_{\mathrm{max}}), hN), \label{eq:sigma_q}
    \end{align}
    and $\lambda_{\mathrm{max}}(\Sigma_{q}) < \infty$. We also assume that for $ (\mu_{q}, \Sigma_{q})$ and $\mathscr{C}_{q}$, the corresponding state and control chance constraints (\ref{eq:state_constraint_maxcovar})--(\ref{eq:control_constraint_maxcovar}) are non-tight, i.e., are strict inequalities. This is a mild assumption and can be shown to hold by constructing $\mu_{q}$ in an appropriate manner.

    From (\ref{eq:sigma_q}), $\nexists \ \Sigma_{q^{+}}, \mathscr{C}_{q^{+}}$ s.t.\ $ \lambda_{\mathrm{min}}(\Sigma_{q^{+}}) > \lambda_{\mathrm{min}}(\Sigma_q)$ and 
    $ (\mu_q, \Sigma_{q^{+}}) \xlongrightarrow[kN]{\mathscr{C}_{q^{+}}} (\mu, \Sigma^{-}_{\mathrm{max}})$ $\ast$\label{sent:perturb}.
    
    Now, consider a perturbation to $q$ such that $\Sigma(\epsilon) = \Sigma_{q} + \epsilon \mathrm{I}$ for some $\epsilon > 0$. From $(\ast)$~\ref{sent:perturb}, $ (\mu_{q}, \Sigma(\epsilon)) \notin h\text{-}\operatorname{BRS}(\mu, \Sigma^{-}_{\mathrm{max}})$ since $ \lambda_{\mathrm{min}}(\Sigma(\epsilon)) > \lambda_{\mathrm{min}}(\Sigma_{q})$ for $\epsilon > 0$. We show that $\exists$ planning scenes $\{ (\alpha_{x}, \beta_{x}, \epsilon_{x}), (\alpha_{u}, \beta_{u}, \epsilon_{u}) \}$ $\exists \ \epsilon > 0$ s.t.\ $ (\mu_q, \Sigma(\epsilon)) \in h\text{-}\operatorname{BRS}(\mu, \Sigma_{\mathrm{max}})$.

    Consider $\mathscr{C}(\lambda)$ as the candidate control sequence s.t.\ $ (\mu_{q}, \Sigma(\epsilon)) \xlongrightarrow[hN]{\mathscr{C}(\lambda)} (\mu, \Sigma_{\mathrm{max}})$ where $\lambda \coloneqq \{\lambda_{k}\}_{t=0}^{hN-1}$ is a perturbation to $\mathscr{C}_{q}$ defined as follows,
    \begin{align*}
        K_{k}(\mathscr{C}(\lambda)) &= (1-\lambda_{k}) K_{k}(\mathscr{C}_{q}), \\
        v_{k}(\mathscr{C}(\lambda)) &= v_{k}(\mathscr{C}_{q}),
    \end{align*}
    $\forall \ k = 0, 1, \cdots hN-1.$ Under the perturbed control sequence $\mathscr{C}(\lambda)$, the mean dynamics are unaffected since the feedforward term $v_k(\mathscr{\lambda}) = v_{k}(\mathscr{C}_{q})$ is unperturbed, i.e.\ $\mu_{k}(\mathscr{C}(\lambda)) = \mu_{k}(\mathscr{C}_{q})$. We now express the dynamics of the state covariance under the perturbed initial covariance and control sequence $\Sigma(\epsilon)$ and $\mathscr{C}(\lambda)$ as $ \Sigma_{k}(\Sigma(\epsilon), \mathscr{C}(\lambda))$, yielding
    \begin{align*}
        \Sigma_{1}(\Sigma, \mathscr{C}) &= (A + BK_{0}(\mathscr{C}))(\Sigma_{q} + \epsilon \mathrm{I}) (A + BK_{0}(\mathscr{C}))\t
        + DD\t  \\
        &= \Sigma_{1}(\Sigma_{q}, \mathscr{C}) + \epsilon \gamma_{1}(\mathscr{C})
    \end{align*}
    where $ \Sigma_{1}(\Sigma_{q}, \mathscr{C}) = (A + BK_{0}(\mathscr{C}))\Sigma_{q}(A + BK_{0}(\mathscr{C}))\t $ and $ \gamma_{1}(\mathscr{C}) = (A + BK_{0}(\mathscr{C}))(A + BK_{0}(\mathscr{C}))\t$. By induction, we can express $ \Sigma_{k}(\Sigma, \mathscr{C})$ as,
    \begin{align}
        \Sigma_{k}(\Sigma, \mathscr{C}) = \Sigma_{k}(\Sigma_{q}, \mathscr{C}) + \epsilon \gamma_{k}(\mathscr{C}), \label{eq:sigmak_recurrence}
    \end{align}
    where $ \Sigma_{0}(\Sigma_{q}, \mathscr{C}) = \Sigma_{q}$, $ \gamma_{0}(\mathscr{C}) = \mathrm{I}$, and 
    $$ \gamma_{k}(\mathscr{C}) = (A + BK_{k-1}(\mathscr{C}))\gamma_{k-1}(\mathscr{C})(A + BK_{k-1}(\mathscr{C}))\t . $$
    Writing the state chance constraints (\ref{eq:state_constraint_maxcovar}) corresponding to $ \Sigma(\epsilon) $ and $ K(\mathscr{C}(\lambda))$,
    \begin{align}
        \mathscr{S}_{k}(\Sigma, \mathscr{C}) &= \Phi^{-1}(1-\epsilon_{x})\sqrt{ \alpha_{x}\t \Sigma_{k}(\Sigma, \mathscr{C})\alpha_{x}} + \alpha_{x}\t \mu_{k}(\mathscr{C} ) - \beta_{x}. \label{eq:state_chanceconstraint_perturb1}
    \end{align}
    From (\ref{eq:sigmak_recurrence}) and (\ref{eq:state_chanceconstraint_perturb1}), and using $\sqrt{a + b} \leq \sqrt{a} + \sqrt{b} \ \forall \ a,b \geq 0$,
    \begin{align}
        \mathscr{S}_{k}(\Sigma, \mathscr{C}) &\leq  \Phi^{-1}(1-\epsilon_{x}) \left[\sqrt{ \alpha_{x}\t \Sigma_{k}(\Sigma_{q}, \mathscr{C})\alpha_{x}} \right. \nonumber \\ &+ \left. \sqrt{\epsilon}\sqrt{ \alpha_{x}\t \gamma_{k}(\mathscr{C})\alpha_{x}}  \right] 
         + \alpha_{x}\t \mu_{k}(\mathscr{C}) - \beta_{x}. \label{eq:state_chanceconstraint_perturb}
    \end{align}
    From the state chance constraint corresponding to $\Sigma_{q}$ and $\mathscr{C}_{q}$,
    \begin{align}
        \mathscr{S}_{k}(\Sigma_{q}, \mathscr{C}_{q}) = &\Phi^{-1}(1-\epsilon_{x})\sqrt{ \alpha_{x}\t \Sigma_{k}(\Sigma_{q}, \mathscr{C}_{q})\alpha_{x}} + \alpha_{x}\t \mu_{k}(\mathscr{C}_{q}) \nonumber \\ &- \beta_{x} < 0.
    \end{align}
    Therefore,
    \begin{align}
        \alpha_{x}\t \mu_{k}(\mathscr{C}_{q}) - \beta_{x} = -z_{s,k} - \left[ \Phi^{-1}(1-\epsilon_{x})\sqrt{ \alpha_{x}\t \Sigma_{k}(\Sigma_{q}, \mathscr{C}_{q})\alpha_{x}} \right], \label{eq:state_constraint_slack_substitute}
    \end{align}
    where $z_{s,k} > 0$ is the slack variable associated with the state chance constraint at the $k$-th time-step for the system initialized at $\Sigma_{q}$ and evolving under $\mathscr{C}_{q}$. Substituting (\ref{eq:state_constraint_slack_substitute}) in (\ref{eq:state_chanceconstraint_perturb}), and since $ \mu_{k}(\mathscr{C}) = \mu_{k}(\mathscr{C}_{q})$,
    \begin{align}
        \mathscr{S}_{k}(\Sigma, \mathscr{C}) &\leq  \Phi^{-1}(1-\epsilon_{x}) \left[\sqrt{ \alpha_{x}\t \Sigma_{k}(\Sigma_{q}, \mathscr{C})\alpha_{x}} \right. \nonumber \\ & -\left. \sqrt{ \alpha_{x}\t \Sigma_{k}(\Sigma_{q}, \mathscr{C}_{q})\alpha_{x}}  \right] + \sqrt{\epsilon}\sqrt{ \alpha_{x}\t \gamma_{k}(\mathscr{C})\alpha_{x}} - z_{s,k}. \label{eq:state_chanceconstraint_perturb_1}        
    \end{align}
    We construct perturbations $\epsilon$ and $\lambda$ s.t.\ $ \mathscr{S}_{k}(\Sigma, \mathscr{C}) \leq 0 \ \forall \ k = 0, 1, \ldots, hN-1$. The following holds for the desired value of the perturbations,
    \begin{align}
        \sqrt{\epsilon}\sqrt{ \alpha_{x}\t \gamma_{k}(\mathscr{C})\alpha_{x}} \leq &z_{s,k} - \Phi^{-1}(1-\epsilon_{x}) \left[ \sqrt{ \alpha_{x}\t \Sigma_{k}(\Sigma_{q}, \mathscr{C})\alpha_{x}} \right. \nonumber \\  &- \left. \sqrt{ \alpha_{x}\t \Sigma_{k}(\Sigma_{q}, \mathscr{C}_{q})\alpha_{x}} \right]. \tag{S}\label{eq:state_constraint_ineq}
    \end{align}
    Following a similar analysis as the state chance constraints above, we can express the control chance constraints (\ref{eq:control_constraint_maxcovar}) corresponding to $\Sigma(\epsilon)$ and $\mathscr{C}(\lambda)$ as,
    \begin{align}
        \mathcal{U}_{k}(\Sigma, \mathscr{C}) &\leq  \Phi^{-1}(1-\epsilon_{u}) \left[\sqrt{ \alpha_{u}\t K_k(\mathscr{C}) \Sigma_{k}(\Sigma_{q}, \mathscr{C})K_k\t(\mathscr{C})\alpha_{u}} \right. \nonumber \\
        & \left. -\sqrt{ \alpha_{u}\t K_k(\mathscr{C})\Sigma_{k}(\Sigma_{q}, \mathscr{C}_{q})K_k\t(\mathscr{C})\alpha_{u}}  \right] \nonumber \\ &+ \sqrt{\epsilon}\sqrt{ \alpha_{u}\t K_k(\mathscr{C}) \gamma_{k}(\mathscr{C}) K_k\t(\mathscr{C}) \alpha_{u}} - z_{u,k}, \label{eq:control_chanceconstraint_perturb}
    \end{align}
    where $z_{u,k} > 0$ is the slack variable associated with the control chance constraint at the $k$-th time-step. For the desired values of the perturbations, $ \mathcal{U}_{k}(\Sigma, \mathscr{C}) \leq 0$, and we get the following,
    \begin{align}
        \sqrt{\epsilon}&\sqrt{ \alpha_{u}\t K_k(\mathscr{C}) \gamma_{k}(\mathscr{C}) K_k\t(\mathscr{C}) \alpha_{u}} \leq  \nonumber \\ & \Phi^{-1}(1-\epsilon_{u}) \left[ \sqrt{ \alpha_{u}\t K_k(\mathscr{C})\Sigma_{k}(\Sigma_{q}, \mathscr{C}_{q})K_k\t(\mathscr{C})\alpha_{u}} \right. \nonumber \\ &- \left. \sqrt{ \alpha_{u}\t K_k(\mathscr{C}) \Sigma_{k}(\Sigma_{q}, \mathscr{C})K_k\t(\mathscr{C})\alpha_{u}}  \right]  + z_{u,k}. \tag{U} \label{eq:control_constraint_ineq}
    \end{align}
    For there to always exist some $\epsilon > 0$ s.t.\ (\ref{eq:state_constraint_ineq}) and (\ref{eq:control_constraint_ineq}) hold true, it is a sufficient condition that $ \sqrt{ \alpha_{x}\t \Sigma_{k}(\Sigma_{q}, \mathscr{C}_{q})\alpha_{x}} - \sqrt{ \alpha_{x}\t \Sigma_{k}(\Sigma_{q}, \mathscr{C}(\lambda))\alpha_{x}} \geq 0$, and $ \sqrt{ \alpha_{u}\t K_{k}(\mathscr{C}) \Sigma_{k}(\Sigma_{q}, \mathscr{C}_{q}) K\t_{k}(\mathscr{C}) \alpha_{u}} - \sqrt{ \alpha_{u}\t K_{k} \Sigma_{k}(\Sigma_{q}, \mathscr{C}(\lambda)) K\t_{k} \alpha_{u}} \geq 0$, since $ z_{s,k}$ and $z_{u,k}$ are strictly positive $ \forall \ k = 0, 1, \cdots, hN-1$. Therefore, we construct $\lambda$ such that $ \Sigma_{k}(\Sigma_{q}, \mathscr{C}(\lambda)) \prec \Sigma_{k}(\Sigma_{q}, \mathscr{C}_{q}) \ \forall \ k=0, 1, \cdots, hN-1 $. Writing $\Sigma_{1}(\Sigma_{q}, \mathscr{C}(\lambda))$ in terms of $\Sigma_{1}(\Sigma_{q},
    \mathscr{C}_{q})$, \mycomment{We construct a perturbation $\lambda$ to $\mathscr{C}_{q}$ such that $ \Sigma_{k} (\Sigma_{q}, \mathscr{C}(\lambda)) \prec \Sigma_{k}(\Sigma_{q}, \mathscr{C}_{q})$, and $ \gamma_{k}(\mathscr{C}(\lambda)) \prec \gamma_{k}(\mathscr{C}_{q}) \ \forall \ k$.}
    \begin{align}
        \Sigma_{1}(\Sigma_{q}, \mathscr{C}(\lambda)) &= (A + BK_{0}(\lambda)) \Sigma_{q} (A + BK_{0}(\lambda))\t + DD\t, \nonumber \\
        &=\begin{aligned}
        \Sigma_{1}(&\Sigma_{q}, \mathscr{C}_{q}) + \lambda^{2}_{0}BK_{0}\Sigma_{q}(BK_{0})\t  \\ &\qquad - \lambda_{0} [ BK_{0} \Sigma_{q} (A + BK_{0})\t \\ &\qquad \qquad + (A+BK_{0})\Sigma_{q} (BK_{0})\t ].
        \end{aligned} \label{eq:sigma1_c_cq}
    \end{align}
    From (\ref{eq:sigma1_c_cq}), $\lambda^{2}_{0}BK_{0}\Sigma_{q}(BK_{0})\t \prec \lambda_{0} [ BK_{0} \Sigma_{q} (A + BK_{0})\t + (A+BK_{0})\Sigma_{q} (BK_{0})\t ]$ is a sufficient condition to ensure $ \Sigma_{1} (\Sigma_{q}, \mathscr{C}(\lambda)) \prec \Sigma_{1}(\Sigma_{q}, \mathscr{C}_{q})$. This is done by constructing $\lambda_{0}$ s.t.\ $ \lambda_{\mathrm{max}}(\lambda^{2}_{0}BK_{0}\Sigma_{q}(BK_{0})\t) < \lambda_{\mathrm{min}}(\lambda_{0} [ BK_{0} \Sigma_{q} (A + BK_{0})\t + (A+BK_{0})\Sigma_{q} (BK_{0})\t) ]$. Therefore,
    \begin{align}
        \lambda_{0} < \frac{\lambda_{\mathrm{min}}(BK_{0} \Sigma_{q} (A + BK_{0})\t + (A+BK_{0})\Sigma_{q} (BK_{0})\t)}{ \lambda_{\mathrm{max}}(BK_{0}\Sigma_{q}(BK_{0})\t) }. \label{eq:lambda_0}
    \end{align}
    Repeating the above arguments recursively for $\Sigma_{k}(\Sigma_{q}, \mathscr{C}(\lambda))$ and $\Sigma_{k}(\Sigma_{q}, \mathscr{C}_{q})$, $\Sigma_{k+1}(\Sigma_{q}, \mathscr{C}(\lambda)) = (A + BK_{k}(\lambda)) \Sigma_{k}(\Sigma_{q}, \mathscr{C}(\lambda)) (A + BK_{k}(\lambda))\t + DD\t$. Assuming $\lambda_{k-}$ is such that $ \Sigma_{k}(\Sigma_{q}, \mathscr{C}(\lambda)) \prec \Sigma_{k}(\Sigma_{q}, \mathscr{C}_{q}) $, $\lambda^{2}_{k}BK_{k}\Sigma_{k}(\Sigma_{q}, \mathscr{C}_{q})(BK_{k})\t \preceq \lambda_{k} [ BK_{k} \Sigma_{k}(\Sigma_{q}, \mathscr{C}_{q}) (A + BK_{k})\t + (A+BK_{k})\Sigma_{k}(\Sigma_{q}, \mathscr{C}_{q}) (BK_{k})\t ]$ is a sufficient condition to ensure $ \Sigma_{k+1}(\Sigma_{q}, \mathscr{C}(\lambda)) \preceq \Sigma_{k+1}(\Sigma_{q}, \mathscr{C}_{q})$. Therefore,
    \begin{align}
        \lambda_{k} \leq \frac{\mathscr{N}^{\Sigma}_{k}}{ \lambda_{\mathrm{max}}(BK_{k}\Sigma_{k}(\Sigma_{q}, \mathscr{C}_{q})(BK_{k})\t) } \label{eq:lambda_k}
    \end{align}
    $\forall \ k=0, 1, \cdots, hN-1$, where $ \mathscr{N}^{\Sigma}_{k} \coloneqq \lambda_{\mathrm{min}}(BK_{k} \Sigma_{k}(\Sigma_{q}, \mathscr{C}_{q}) (A + BK_{k})\t + (A+BK_{k})\Sigma_{k}(\Sigma_{q}, \mathscr{C}_{q}) (BK_{k})\t) $.

    We define quantities $\epsilon_{s,k}(\alpha_{x}, \epsilon_{x}; \lambda)$, and $\epsilon_{u,k}(\alpha_{u}, \epsilon_{u}; \lambda)$ as,
    \begin{align}
        &\epsilon_{s,k}(\alpha_{x}, \epsilon_{x}; \lambda) \coloneqq \frac{z_{s,k}}{\lVert \alpha_{x} \rVert \sqrt{ \lambda_{\mathrm{max}}(\gamma_{k}(\mathscr{C}(\lambda))) } } \nonumber \\ &+ \Phi^{-1}(1-\epsilon_{x}) \frac{ \sqrt{\lambda_{\mathrm{min}}( \Sigma_{k}(\Sigma_{q}, \mathscr{C}_{q}))} - \sqrt{\lambda_{\mathrm{max}}( \Sigma_{k}(\Sigma_{q}, \mathscr{C}(\lambda)))} }{\sqrt{ \lambda_{\mathrm{max}}(\gamma_{k}(\mathscr{C}(\lambda))) }}, \label{eq:epsilon_sk_lb} \\
        &\epsilon_{u,k}(\alpha_{u}, \epsilon_{u}; \lambda) \coloneqq \frac{z_{u,k}}{\lvert (1 - \lambda_{k}) \rvert\lVert K\t_{k}(\mathscr{C}_{q})\alpha_{u} \rVert \sqrt{ \lambda_{\mathrm{max}}(\gamma_{k}(\mathscr{C}(\lambda))) } } \nonumber \\ &+ \Phi^{-1}(1-\epsilon_{u}) \frac{ \sqrt{\lambda_{\mathrm{min}}( \Sigma_{k}(\Sigma_{q}, \mathscr{C}_{q}))} - \sqrt{\lambda_{\mathrm{max}}( \Sigma_{k}(\Sigma_{q}, \mathscr{C}(\lambda)))} }{\sqrt{ \lambda_{\mathrm{max}}(\gamma_{k}(\mathscr{C}(\lambda))) }}. \label{eq:epsilon_uk_lb}
    \end{align}
    which are lower bounds on the RHS of (\ref{eq:state_constraint_ineq}) and (\ref{eq:control_constraint_ineq}) respectively. Thus, a sufficient condition for (\ref{eq:state_constraint_ineq}) and (\ref{eq:control_constraint_ineq}) to hold is,
    \begin{align}
        \epsilon \leq \epsilon_{s}(\alpha_{x}, \epsilon_{x}; \lambda) \coloneqq \min_{k} \epsilon_{s,k}(\alpha_{x}, \epsilon_{x}; \lambda), \\
        \epsilon \leq \epsilon_{u}(\alpha_{u}, \epsilon_{u}; \lambda) \coloneqq \min_{k} \epsilon_{u,k}(\alpha_{u}, \epsilon_{u}; \lambda).
    \end{align}
    For specified $\lambda$, $ \epsilon_{s,k}(\alpha_{x}, \epsilon_{x}; \lambda)$ and $ \epsilon_{u,k}(\alpha_{u}, \epsilon_{u}; \lambda) $ are functions of the planning scene $ \{ (\alpha_{x}, \epsilon_{x}), (\alpha_{u}, \epsilon_{u}) \} $ s.t.\ $ \epsilon_{s,k}(\alpha_{x}, \epsilon_{x}; \lambda)$ and $ \epsilon_{u,k}(\alpha_{u}, \epsilon_{u}; \lambda) $ are monotonically decreasing functions of $ \lVert \alpha_{x} \rVert, \lVert \alpha_{u} \rVert \neq 0 $ respectively. From $ \lambda$ specified earlier, the above inequalities are a function of just the planning scene $ \{ (\alpha_{x}, \epsilon_{x}), (\alpha_{u}, \epsilon_{u}) \}$. 
    
    We now derive a condition on $\epsilon$ from the terminal goal reaching constraint and show that there always exists some $\epsilon >0$ such that all the conditions hold by choosing the planning scene in a careful manner. Writing the desired terminal covariance constraint under the perturbations,
    \begin{align}
        \lambda_{\mathrm{min}}( \Sigma^{-}_{\mathrm{max}} ) < \lambda_{\mathrm{max}}( \Sigma_{hN} (\Sigma, \mathscr{C})) \leq \lambda_{\mathrm{min}}(\Sigma_{\mathrm{max}}).
    \end{align}
    We define the function $f: [0,\infty) \rightarrow \mathbb{R}^{+}$ as
    \begin{align}
        f(\epsilon; \lambda) \coloneqq \lambda_{\mathrm{max}}(\Sigma_{hN} (\Sigma, \mathscr{C})) = \lambda_{\mathrm{max}}(\Sigma_{hN} (\Sigma_{q}, \mathscr{C}) + \epsilon \gamma_{k}(\mathscr{C})),
    \end{align}
    where, for a specified $\lambda$, $f(\cdot ; \lambda)$ is a function of $\epsilon$ alone. In particular, $f(\cdot)$ is a continuous, strictly increasing (under mild assumptions on $\mathscr{C}_{q}$ i.e.\ $(A + BK_{k})$ is full rank $\forall \ k$), and a convex function of $\epsilon$. $f(0; \lambda) = \lambda_{\mathrm{max}}( \Sigma_{hN}( \Sigma_{q}, \mathscr{C})) \leq \lambda_{\mathrm{max}}( \Sigma_{hN}( \Sigma_{q}, \mathscr{C}_{q}))$. Also from (\ref{eq:covar_goal_maxcovar}), $ \lambda_{\mathrm{max}} ( \Sigma_{hN} (\Sigma_{q}, \mathscr{C}_{q})) \leq \lambda_{\mathrm{min}}(\Sigma^{-}_{\mathrm{max}})$. Therefore, $ \exists \epsilon^{+}$ s.t.\ $ f(\epsilon^{+}; \lambda) =  \lambda_{\mathrm{min}}(\Sigma^{-}_{\mathrm{max}})$ s.t.\ $ f(\epsilon; \lambda) > \lambda_{\mathrm{min}}(\Sigma^{-}_{\mathrm{max}}) \ \forall \ \epsilon > \epsilon^{+}$.

    From the specified $\lambda$ as before (\ref{eq:lambda_0}) (\ref{eq:lambda_k}), and the existence of $\epsilon^{+}$ (note that the existence of such an $\epsilon^{+}$ and its value are independent of the choice of the planning scene parameters), we choose $ \{ (\alpha_{x}, \epsilon_{x}), (\alpha_{u}, \epsilon_{u}) \} $ s.t.\ $\epsilon_{s,k}(\alpha_{x}, \epsilon_{x}; \lambda), \epsilon_{u,k}(\alpha_{u}, \epsilon_{u}; \lambda) > \epsilon^{+} \ \forall \ k$.
    For such a choice of $\lambda$, $ \{ (\alpha_{x}, \epsilon_{x}), (\alpha_{u}, \epsilon_{u}) \}$, and $ \epsilon \in (\epsilon^{+}, \operatorname{min}(\epsilon_{s}(\alpha_x, \epsilon_x; \lambda), \epsilon_{u}(\alpha_u, \epsilon_u; \lambda)) $,  $ (\mu_{q}, \Sigma(\epsilon)) \xlongrightarrow[kN]{\mathscr{C}(\lambda)} (\mu_{\mathrm{cand}}, \Sigma_{\mathrm{max}})$ and $ \nexists \ \mathscr{C}_{q^{+}} $ s.t.\ $ (\mu_q, \Sigma(\epsilon)) \xlongrightarrow[hN]{\mathscr{C}_{q^{+}}} (\mu_{\mathrm{cand}}, \Sigma^{-}_{\mathrm{max}})$, which completes the proof.
\end{proof}

\begin{proof}[\textbf{Proof of Theorem}~\ref{theorem:coverage}] Note that the $h$-$\operatorname{BRS}$ of a tree of distributions is defined in Section~\ref{sec:hBRS}.
Let $ \mathcal{T}^{(n)} $ be the state of the tree at the end of $n$ iterations of the $\operatorname{ADD}$ procedure. We first show that $ \exists \ \{ ( \alpha_{x}, \beta_{x}, \epsilon_{x} ), ( \alpha_{u}, \beta_{u}, \epsilon_{u} ) \}$ s.t.\ h-$\operatorname{BRS}( \mathcal{T}^{(n+1)} _{\mathrm{MAXCOVAR}}) \supset \text{h-}\operatorname{BRS}( \mathcal{T}^{(n+1)} _{\mathrm{ANY}})$ where $ \mathcal{T}^{(n+1)}_{\mathrm{MAXCOVAR}}$, $ \mathcal{T}^{(n+1)}_{\mathrm{ANY}}$ are trees obtained from a common $\mathcal{T}^{(n)}$ by following one more iteration of the $\operatorname{ADD}$ procedure with the edge constructed according to the $\operatorname{MAXCOVAR}$ algorithm that explicitly maximizes the minimum eigenvalue of the initial covariance  versus any other algorithm for edge construction $\operatorname{ANY}$. The $\operatorname{ANY}$ procedure could for instance sample belief nodes randomly, or add edges that optimize for a performance regularized objective of the minimum eigenvalue of the initial covariance matrix.

    The $(n+1)$-th iteration proceeds by attempting to add a sampled mean $\mu_{q}$ to an existing node $ \nu^{(n)} $ on the tree, where $\nu^{(n)}$ is the node sampled on the $(n+1)$-th iteration. We have,
    \begin{align}
        \Sigma_{\mathrm{max}}, \mathcal{C}_{\mathrm{max}} \longleftarrow \operatorname{MAXCOVAR}(\mu_{q}, ( \mu^{(n)}, \Sigma^{(n)}), N),
    \end{align}
    and,
    \begin{align}
        \Sigma_{\mathrm{any}}, \mathcal{C}_{\mathrm{any}} \longleftarrow \operatorname{ANY}(\mu_{q}, ( \mu^{(n)}, \Sigma^{(n)}), N),
    \end{align}
    s.t.\ $ \lambda_{\mathrm{min}}(\Sigma_{\mathrm{max}}) > \lambda_{\mathrm{min}}(\Sigma_{\mathrm{any}})$. Consider the two nodes added at the $(n+1)$-th iteration: $ \nu^{(n+1)}_{\mathrm{MAXCOVAR}} = (\mu_{q}, \Sigma_{\mathrm{max}})$; and $ \nu^{(n+1)}_{\mathrm{ANY}} = (\mu_{q}, \Sigma_{\mathrm{any}})$ with distance from the goal node as $ d^{(n+1)} $. From Lemma~\ref{lemma:sigma_brs_lemma}, $ \exists \ \{ ( \alpha_{x}, \beta_{x}, \epsilon_{x} ), ( \alpha_{u}, \beta_{u}, \epsilon_{u} ) \}$ s.t.\  $ \text{t-}\operatorname{BRS}(\nu^{(n+1)}_{\mathrm{MAXCOVAR}}) \supset \text{t-}\operatorname{BRS}(\nu^{(n+1)}_{\mathrm{ANY}}) \ \forall \ t \geq 1$. Therefore,
    \begin{align}
        \text{$(h-d^{(n+1)})$-}\operatorname{BRS}(\nu^{(n+1)}_{\mathrm{MAXCOVAR}}) \supset \text{$(h-d^{(n+1)})$-}\operatorname{BRS}(\nu^{(n+1)}_{\mathrm{ANY}}), \label{eq:hbrs_nplus1}
    \end{align}
    $ \forall \ h > d^{(n+1)} $. From Definition~(\ref{def:hbrs_treedist}),
    \begin{align}
        \text{h-}\operatorname{BRS}(\mathcal{T}^{(n+1)}_{\mathrm{MAXCOVAR}}) &= \text{h-}\operatorname{BRS}(\mathcal{T}^{(n)}) \bigcup \nonumber \\ &(h - d^{(n+1)})-\operatorname{BRS}( \nu^{(n+1)}_\mathrm{MAXCOVAR}),
    \end{align}
    and,
    \begin{align*}
        \text{h-}\operatorname{BRS}(\mathcal{T}^{(n+1)}_{\mathrm{ANY}}) &= \text{h-}\operatorname{BRS}(\mathcal{T}^{(n)}) \bigcup \nonumber \\ &(h - d^{(n+1)})-\operatorname{BRS}( \nu^{(n+1)}_\mathrm{ANY}).
    \end{align*}
    From (\ref{eq:hbrs_nplus1}), it follows that there exist planning scenes such that $ \text{h-}\operatorname{BRS}(\mathcal{T}^{(n+1)}_{\mathrm{MAXCOVAR}}) \supset \text{h-}\operatorname{BRS}(\mathcal{T}^{(n+1)}_{\mathrm{ANY}})$ when starting from a common tree state $ \mathcal{T}^{(n)} $. Considering the common tree state as just a singleton set of the goal distribution, $ \mathcal{T}^{(0)} = \{\mathcal{G}\}$. Therefore, there exist planning scenes such that $\text{h-}\operatorname{BRS}(\mathcal{T}^{(1)}_{\mathrm{MAXCOVAR}}) \supset \text{h-}\operatorname{BRS}(\mathcal{T}^{(1)}_{\mathrm{ANY}})$. The result for a general $r$ follows from above, which completes the proof.
    \end{proof}

\small
\renewcommand{\baselinestretch}{0.9}
\bibliographystyle{plain}
\bibliography{references}
\begin{IEEEbiography}[{\includegraphics[width=1in,height=1.25in,clip,keepaspectratio]{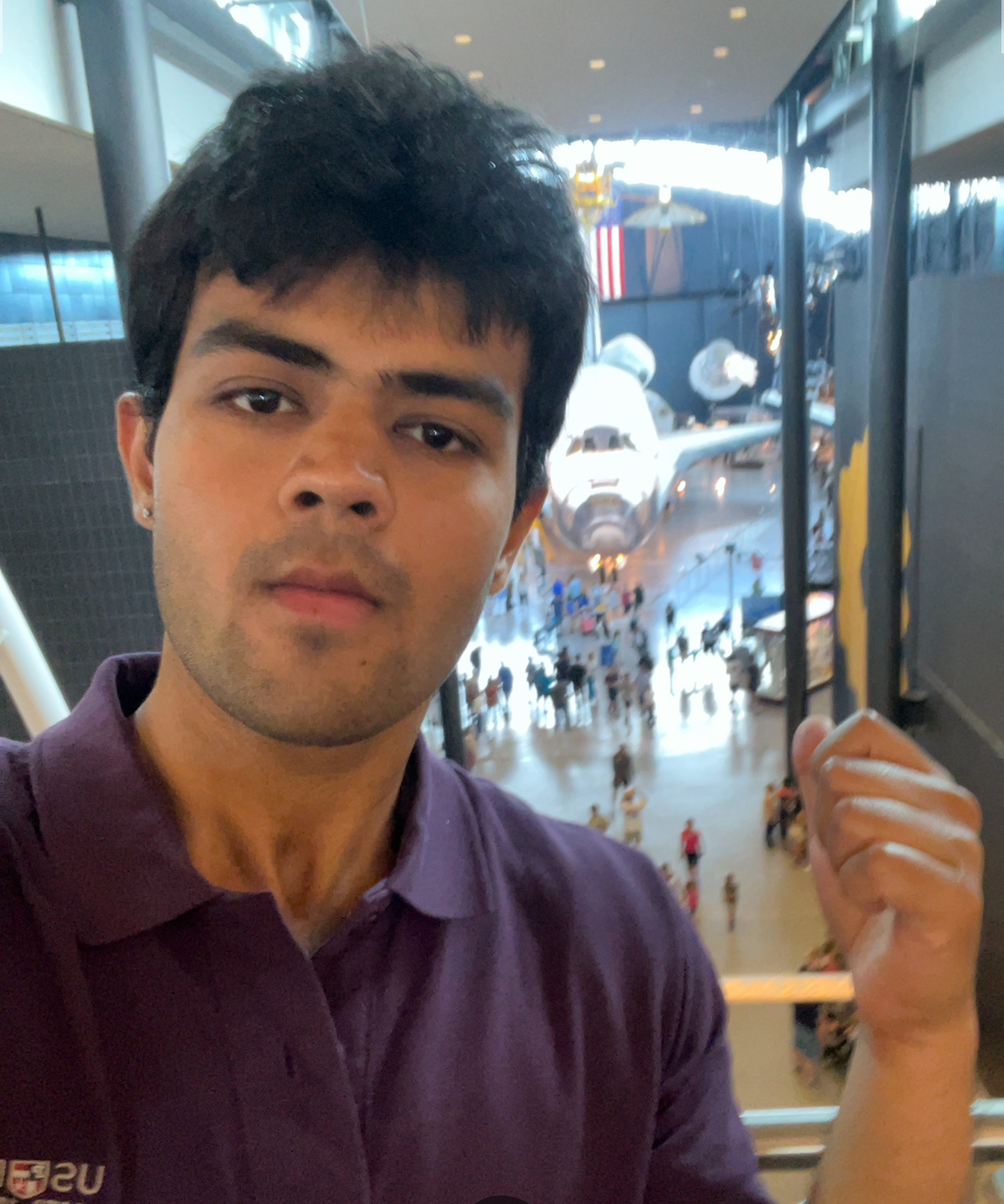}}]%
{Naman Aggarwal} received the Inter-Disciplinary Dual Degree (B.Tech + M.Tech) in Aerospace Engineering and Systems and Control Engineering from the Indian Insitute of Technology Bombay, Mumbai, India in 2021. He is currently  working towards the Ph.D. degree in the Department of Aeronautics and Astronautics and the Laboratory of Information and Decision Systems (LIDS) from Massachusetts Institute of Technology (MIT), Cambridge, MA, USA. 

He is a member of the Aerospace Controls Laboratory, led by Prof. Jonathan How. His current research interests include control theory and optimization with applications at the intersection of learning, games, and multi-agent control.
\end{IEEEbiography}
\begin{IEEEbiography}[{\includegraphics[width=1in,height=1.25in,clip,keepaspectratio]{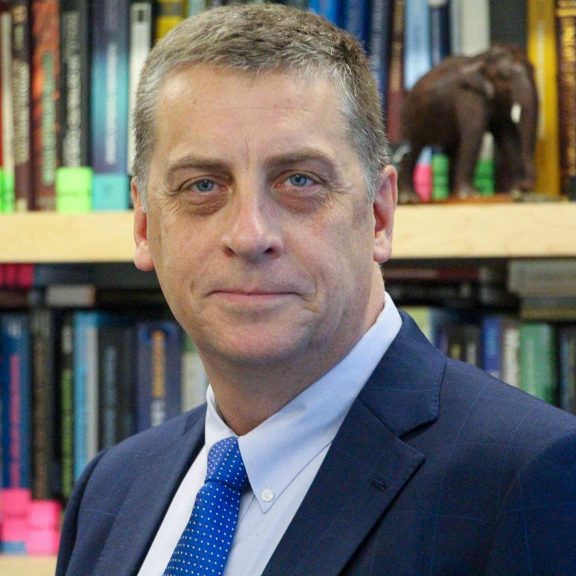}}]%
{Jonathan P. How}
(Fellow, IEEE) received the B.A.Sc. degree in engineering science (aerospace) from the University of Toronto, Toronto, ON, Canada, in 1987, and the S.M. and Ph.D. degrees in aeronautics and astronautics from the Massachusetts Institute of Technology (MIT), Cambridge, MA, USA, in 1990 and 1993, respectively.
In 2000, he joined MIT, where he is currently the Richard C. Maclaurin Professor of Aeronautics and Astronautics. Prior to this, he was an Assistant
Professor with Stanford University, Stanford, CA, USA.
Dr. How was the Recipient of the American Institute of Aeronautics and
Astronautics (AIAA) Best Paper in Conference Awards in 2011, 2012, and 2013, the IROS Best Paper Award on Cognitive Robotics in 2019, the AIAA Intelligent Systems Award in 2020, and the IEEE Control Systems Society Distinguished Member Award in 2020. He was the Editor-in-Chief for IEEE Control Systems Magazine from 2015 to 2019. He is a Fellow of the AIAA. He was elected to the National Academy of Engineering in 2021.
\end{IEEEbiography}
\end{document}